\newtheorem{theorem}{Theorem}
\newtheorem{corollary}{Corollary}
\newtheorem{lemma}{Lemma}
\newtheorem{assumption}{Assumption}
\newtheorem{example-non*}{Example}
\newtheorem{example}{Example}
\newtheorem{remark}{Remark}
\DeclareMathOperator*{\argmax}{arg\,max}
 \newcommand{\comment}[1]{}
\newcommand{\remove}[1]{}
\newcommand{\add}[1]{#1}
\newcommand{\remove}[1]{#1}
\newcommand{\add}[1]{}
\newcommand{\bremove}[1]{}
\newcommand{\badd}[1]{#1}
\newcommand{\bremove}[1]{#1}
\newcommand{\badd}[1]{}
\newcommand{\rmv}[1]{}
\newcommand{\rmv}[1]{{\color{red}#1}}
\newcommand{\newc}[1]{{\color{blue}#1}} %revise of the text
\newcommand{\newc}[1]{#1}
\newcommand{\janrmv}[1]{} %revise of the text
\newcommand{\janrmv}[1]{#1}
\newcommand{\rev}[1]{{\color{blue}#1}} %revise of the text
\newcommand{\com}[1]{\textbf{\color{red}(COMMENT: #1)}} %comment of the text
\newcommand{\clar}[1]{\textbf{\color{green}(NEED CLARIFICATION: #1)}}
\newcommand{\rev}[1]{#1}
\newcommand{\com}[1]{}
\newcommand{\clar}[1]{}
\newcommand{\jremove}[1]{}
\newcommand{\aremove}[1]{{\color{red}#1}} %things that are in journal but not in online appendix.
\newcommand{\jremove}[1]{{\color{black}#1}} %things that are in online appendix but not in paper.
\newcommand{\aremove}[1]{}
\begin{document}
\title{Contextual Online Learning for Multimedia Content Aggregation}
\author{\IEEEauthorblockN{Cem Tekin,~\IEEEmembership{Member,~IEEE}, Mihaela van der Schaar,~\IEEEmembership{Fellow,~IEEE}\\}
\thanks{Copyright (c) 2015 IEEE. Personal use of this material is permitted. However, permission to use this material for any other purposes must be obtained from the IEEE by sending a request to pubs-permissions@ieee.org.}
\thanks{This work is partially supported by the grants NSF CNS 1016081 and
AFOSR DDDAS.}
\thanks{C. Tekin and Mihaela van der Schaar are in Department of Electrical Engineering, UCLA, Los Angeles, CA, 90095. Email: cmtkn@ucla.edu, mihaela@ee.ucla.edu.}
\thanks{This online technical report is an extended version of the paper that appeared in IEEE Transactions on Multimedia \cite{tekinTMM2015}.}
}
%\author{\IEEEauthorblockN{Cem Tekin, Mingyan Liu\\}
%\IEEEauthorblockA{Department of Electrical Engineering and Computer Science\\
%University of Michigan, Ann Arbor, Michigan, 48109-2122\\
%Email: \{cmtkn, mingyan\}@umich.edu}
%}

\maketitle

\begin{abstract}
The last decade has witnessed a tremendous growth
in the volume as well as the diversity of multimedia content generated by a multitude of sources (news agencies, social media, etc.). Faced with a variety of content choices, consumers are exhibiting diverse preferences for content; their preferences often depend on the context in which they consume content as well as various exogenous events. To satisfy the consumers' demand for such diverse content, multimedia content aggregators (CAs) have emerged which gather content from numerous multimedia sources. 
A key challenge for such systems is to accurately predict what type of content each of its consumers prefers in a certain context, and adapt these predictions to the evolving consumers' preferences, contexts and content characteristics. We propose a novel, distributed, online
multimedia content aggregation framework, which gathers content generated by multiple heterogeneous producers to fulfill its consumers' demand for content.  Since both the multimedia content
characteristics and the consumers' preferences and contexts are unknown, the optimal content
aggregation strategy is unknown a priori. Our proposed content
aggregation algorithm is able to learn online what content to gather and how to match content and users by exploiting similarities
between consumer types. We prove bounds for our proposed learning algorithms that guarantee both the accuracy of the predictions as well as the learning speed. Importantly, our algorithms operate efficiently even when feedback from consumers is missing or content and preferences evolve over time. Illustrative results highlight the merits of the proposed content aggregation system in a variety of settings. 
\end{abstract}

\begin{IEEEkeywords}
Social multimedia, distributed online learning, content aggregation, multi-armed bandits.
\end{IEEEkeywords}

\vspace{-0.1in}
\section{Introduction}\label{sec:intro}

A plethora of multimedia applications (web-based TV \cite{ren2012pricing, song2012advanced}, personalized video retrieval \cite{xu2008novel}, personalized news aggregation \cite{li2010contextual}, etc.) are emerging which require matching multimedia content generated by distributed sources with consumers exhibiting different interests. The matching is often performed by CAs (e.g., Dailymotion, Metacafe \cite{saxena2008analyzing}) that are responsible for mining the content of numerous multimedia sources in search of finding content which is interesting for the users. Both the characteristics of the content and preference of the consumers are evolving over time. An example of the system with users, CAs and multimedia sources is given in Fig. \ref{fig:systemmodel}.

\begin{figure}
\begin{center}
\includegraphics[width=0.95\columnwidth]{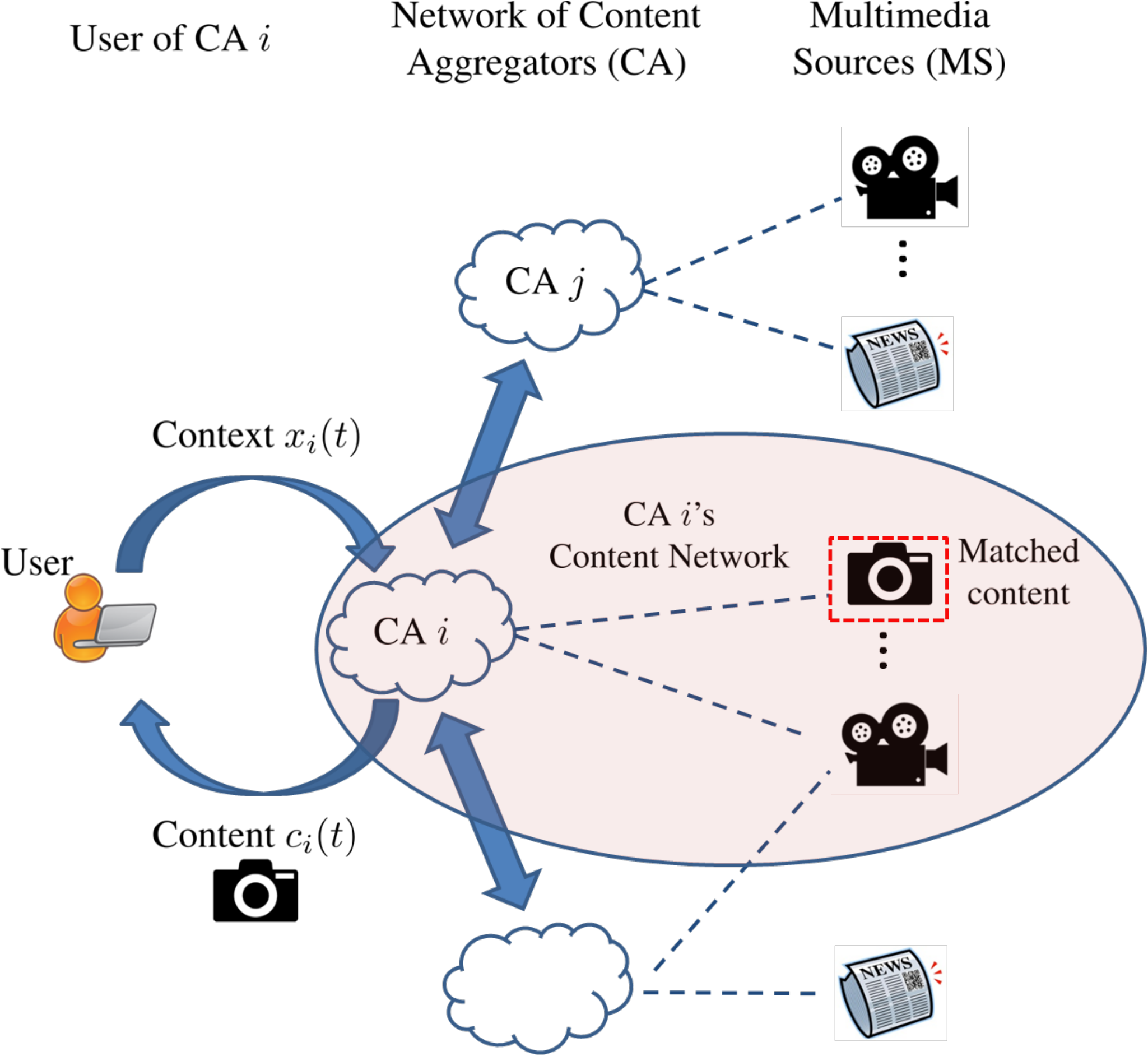}
\vspace{-0.1in}
\caption{Operation of the distributed content aggregation system. (i) A user with type/context $x_i(t)$ arrives to Content Aggregator (CA) $i$, (ii) CA $i$ chooses a {\em matching action} (either requests content from another CA or requests content from a multimedia source in its own network).}
%\vspace{-0.4in}
\label{fig:systemmodel}
\end{center}
\vspace{-0.2in}
\end{figure}

Each user is characterized by its context, which is a real-valued vector, that provides information about the users' content preferences. 
We assume a model where users arrive sequentially to a CA, and based on the type (context) of the user, the CA requests content from either one of the multimedia sources that it is connected to or from another CA that it is connected to. 
The context can represent information such as age, gender, search query, previously consumed content, etc. It may also represent the type of the device that the user is using \cite{mohan1999adapting} (e.g., PDA, PC, mobile phone).
The CA's role is to match its user with the most suitable content, which can be accomplished by requesting content from the most suitable multimedia source.\footnote{Although we use the term {\em request} to explain how content from a multimedia source is mined, our proposed method works also when a CA extracts the content from the multimedia source, without any decision making performed by the multimedia source.}
Since both the content generated by the multimedia sources and the user's characteristics change over time, it is unknown to the CA which multimedia source to match with the user. 
This problem can be formulated as an online learning problem, where the CA learns the best matching by exploring matchings of users with different content providers.
After a particular content matching is made, the user ``consumes" the content, and provides feedback/rating, such as {\em like} or {\em dislike}.\footnote{Our framework also works when the feedback is missing for some users.}
It is this feedback that helps a CA learn the preferences of its users and the characteristics of the content that is provided by the multimedia sources. 
Since this is a learning problem we equivalently call a CA, a content learner or simply,  a {\em learner}. 

\newc{Two possible real-world applications of content aggregation are business news aggregation and music aggregation. Business news aggregators can collect information from a variety of multinational and multilingual sources and make personalized recommendations to  specific individuals/companies based on their unique needs (see e.g. \cite{schranz2005building}). Music aggregators enable matching listeners with music content they enjoy both within the content network of the listeners as well as outside this network. For instance, distributed music aggregators can facilitate the sharing of music collections owned by diverse users without the need for centralized content manager/moderator/providers (see e.g. \cite{FStream}). A discussion of how these applications can be modeled using our framework is given in Section \ref{sec:probform}. Moreover, our proposed methods are tested on real-world datasets related to news aggregation and music aggregation in Section \ref{sec:numerical}. }

For each CA $i$, there are two types of users: {\em direct} and {\em indirect}. Direct users are the users that visit the website of CA $i$ to search for content. {\em Indirect} users are the users of another CA that requests content from CA $i$. 
A CA's goal is to maximize the number of likes received from its users (both direct and indirect).
This objective can be achieved by all CAs by the following {\em distributed} learning method: all CAs learn online which {\em matching action} to take for its current user, i.e., obtain content from a multimedia source that is directly connected, or request content from another CA.  
However, it is not trivial how to use the past information collected by the CAs in an efficient way, due to the vast number of contexts (different user types) and dynamically changing user and content characteristics. 
For instance, a certain type of content may become popular among users at a certain point in time, which will require the CA to obtain content from the multimedia source that generates that type of content. 

To jointly optimize the performance of the multimedia content aggregation system, we propose an online learning methodology that builds on contextual bandits \cite{slivkins2009contextual,lu2010contextual}. 
The performance of the proposed methodology is evaluated using the notion of regret: the difference between the expected total reward (number of content likes minus costs of obtaining the content) of the best content matching strategy given complete knowledge about the user preferences and content characteristics and the expected total reward of the algorithm used by the CAs.
When the user preferences and content characteristics are static, our proposed algorithms achieve sublinear regret in the number of users that have arrived to the system.\footnote{We use index $t$ to denote the number of users that have arrived so far. We also call $t$ the time index, and assume that one user arrives at each time step.} When the user preferences and content characteristics are slowly changing over time, our proposed algorithms achieve  $\epsilon$ time-averaged regret, where $\epsilon>0$ depends on the rate of change of the user and content characteristics.

The remainder of the paper is organized as follows. In Section \ref{sec:related}, we describe the related work and highlight the differences from our work. In Section \ref{sec:probform}, we describe the decentralized content aggregation problem, the optimal content matching scheme given the complete system model, and the regret of a learning algorithm with respect to the optimal content matching scheme. 
Then, we consider the model with unknown, static user preferences and content characteristics and propose a distributed online learning algorithm in Section \ref{sec:iid}. 
The analysis of the unknown, dynamic user preferences and content characteristics are given in Section \ref{sec:dynamic}.
Using real-world datasets, we provide numerical results on the performance of our distributed online learning algorithms in Section \ref{sec:numerical}. Finally, the concluding remarks are given in Section \ref{sec:conc}.

\vspace{-0.1in}
\section{Related Work} \label{sec:related}
Related work can be categorized into two: related work on recommender systems and related work on online learning methods called {\em multi-armed bandits}. 

\vspace{-0.1in}
\subsection{Related work on recommender systems and content matching}

A recommender system recommends items to its users based on the characteristics of the users and the items. The goal of a recommender system is to learn which users like which items, and recommend items such that the number of likes is maximized. 
For instance, in \cite{li2010contextual, deshpande2012linear} a recommender system that learns the preferences of its users in an online way based on the ratings submitted by the users is provided. It is assumed that the true relevance score of an item for a user is a linear function of the context of the user and the features of the item. Under this assumption, an online learning algorithm is proposed. 
In contrast, we consider a different model, where the relevance score need not be linear in the context. Moreover, due to the distributed nature of the problem we consider, our online learning algorithms need an additional phase called the training phase, which accounts for the fact that the CAs are uncertain about the information of the other aggregators that they are linked with. We focus on the long run performance and show that the regret per unit time approaches zero when the user and content characteristics are static.
An online learning algorithm for a centralized recommender which updates its recommendations as both the preferences of the users and the characteristics of items change over time is proposed in \cite{kohli2013fast}.

The general framework which exploits the similarities between the past users and the current user to recommend content to the current user is called collaborative filtering \cite{ sahoo2012hidden, linden2003amazon, miyahara2000collaborative}. These methods find the similarities between the current user and the past users by examining their search and feedback patterns, and then based on the interactions with the past {\em similar} users, matches the user with the content that has the highest estimated relevance score. For example, the most relevant content can be the content that is liked the highest number of times by similar users. 
Groups of similar users can be created by various methods such as clustering \cite{linden2003amazon}, and then, the matching will be made based on the content matched with the past users that are in the same group. 
%Another idea is to calculate relevance score after context-based filtering, and match a user with an item that have the highest relevance score for the user's context \cite{panniello2012comparing}. 
%Our proposed learning algorithms use a similar idea: partition the context space of users into hypercubes, and then learn the optimal matching over each hypercube. In this way, our work can also be considered as a type of collaborative filtering algorithm because each user is split into a specific partition of a hypercube, and is then treated the same as other users within that partition. However, the way that the aggregators learn the best content to match with each partition builds on the multia-armed bandit literature, through the separate training, exploration, and exploitation phases. Thus although our algorithm has a simple method of separating users into different groups, it features a groundbreaking method of matching content with each group.

The most striking difference between our content matching system and previously proposed is that in prior works, there is a central CA which knows the entire set of different types of content, and all the users arrive to this central CA. In contrast, we consider a decentralized system consisting of many CAs, many multimedia sources that these CAs are connected to, and heterogeneous user arrivals to these CAs. 
These CAs are cooperating with each other by only knowing the connections with their own neighbors but not the entire network topology. Hence, a CA does not know which multimedia sources another CA is connected to, but it learns over time whether that CA has access to content that the users like or not.  
Thus, our model can be viewed as a giant collection of individual CAs that are running in parallel. 
%An interaction between these aggregators happen when one of them request content from another one for its own user. 
%These interactions help an aggregator to match its users with content that is more relevant, from a much wider range of multimedia sources that the aggregator even does not need to be aware of. 

Another line of work \cite{roy2012empowering, roy2013towards} uses social streams mined in one domain, e.g., Twitter, to build a topic space that relates these streams to content in the multimedia domain. For example, in \cite{roy2012empowering}, Tweet streams are used to provide video recommendations in a commercial video search engine. A content adaptation method is proposed in \cite{mohan1999adapting} which enables the users with different types of contexts and devices to receive content that is in a suitable format to be accessed.
Video popularity prediction is studied in \cite{roy2013towards}, where the goal is to predict if a video will become popular in the multimedia domain, by detecting social trends in another social media domain (such as Twitter), and transferring this knowledge to the multimedia domain. 
Although these methods are very different from our methods, the idea of transferring knowledge from one multimedia domain to another can be carried out by CAs specialized in specific types of cross-domain content matching 
%(see Fig. \ref{fig:crossdomain}). In this setting, the CA will combine two sets of contexts, i.e., the context of the user and the context of the domain to decide on the content matching. 
For instance, one CA may transfer knowledge from tweets to predict the content which will have a high relevance/popularity for a user with a particular context, while another CA may scan through the Facebook posts of the user's friends to calculate the context of the domain in addition to the context of the user, and provide a matching according to this.

\newc{The advantages of our proposed approach over prior work in recommender systems are: (i) systematic analysis of recommendations' performance, including confidence bounds on the accuracy of the recommendations; (ii) no need for a priori knowledge of the users' preferences (i.e., system learns on-the-fly); (iii) achieve high accuracy even when the users' characteristics and content characteristics are changing over time; (iv) all these features are enabled in a network of distributed CAs.}

\newc{The differences of our work from the prior work in recommender systems is summarized in Table \ref{tab:comparison}.}

\newc{
\begin{table}
\centering
{
{\fontsize{9}{9}\selectfont
\setlength{\tabcolsep}{.1em}
\begin{tabular}{|c|c|c|c|c|c|}
\hline
 & Our work & \cite{li2010contextual, deshpande2012linear} & \cite{linden2003amazon} & \cite{sahoo2012hidden} & \cite{bouneffouf2012hybrid}\\
\hline
Distributed & Yes & No & No & No & No\\
%\hline
Reward model & H\"{o}lder & Linear & N/A & N/A & N/A \\
%\hline
Confidence bounds & Yes & No & No & No & No \\
%\hline
Regret bound & Yes  & Yes  & No & No & No \\
%\hline
Dynamic user & Yes & No & Yes & Yes & Yes\\
/content distribution & & & & & \\
\hline
\end{tabular}
}
}
\caption{Comparison of our work with other work in recommender systems}
\vspace{-0.3in}
\label{tab:comparison}
\end{table}
}
 
\comment{
\begin{figure}
\begin{center}
\includegraphics[width=0.95\columnwidth]{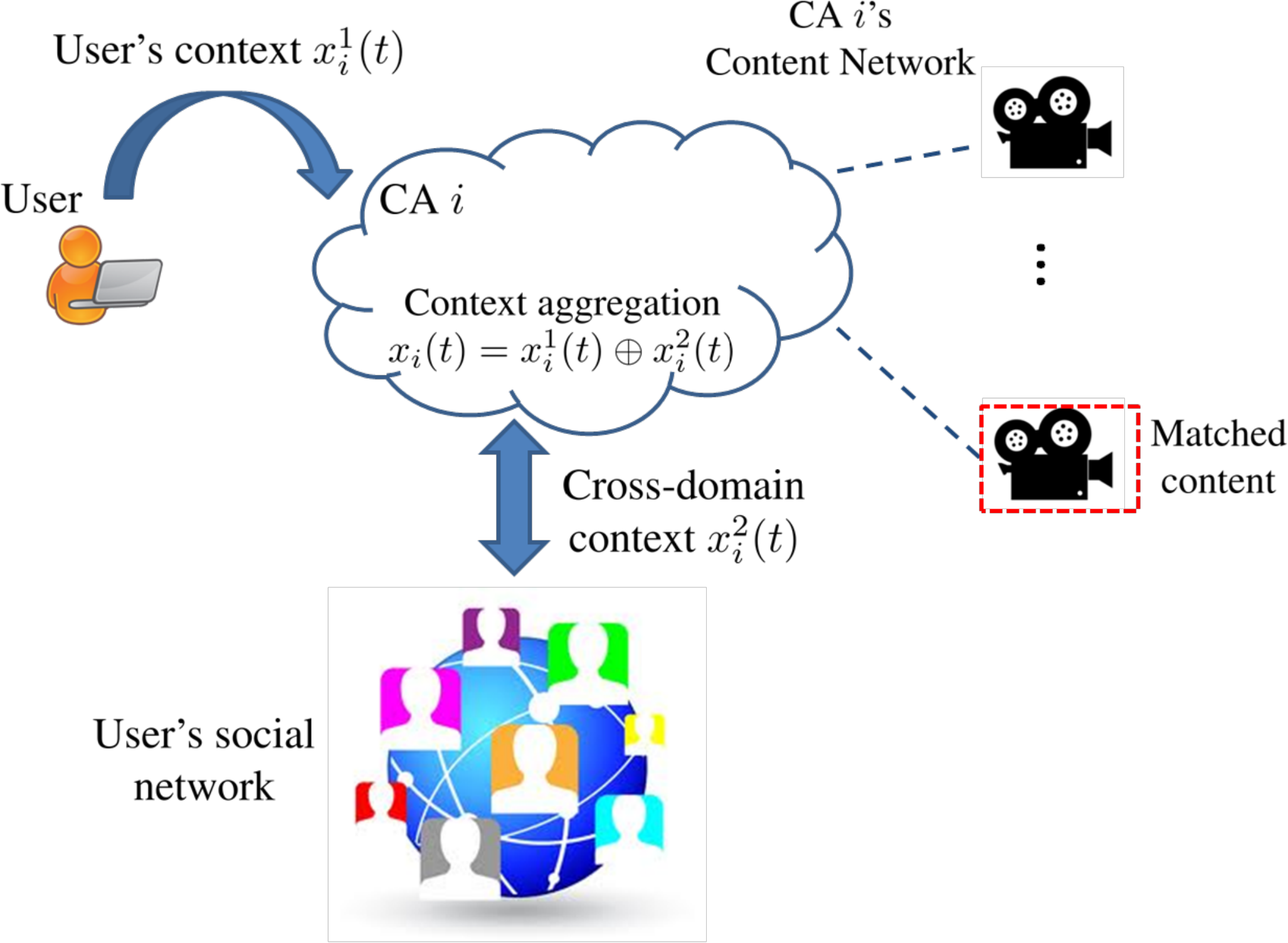}
\vspace{-0.1in}
\caption{Aggregation of user's context with the domain context to match the user with a content in another multimedia domain.} 
%\vspace{-0.4in}
\label{fig:crossdomain}
\end{center}
\vspace{-0.2in}
\end{figure}
}

\vspace{-0.1in}
\subsection{Related Work on Multi-armed Bandits}

Other than distributed content recommendation, our learning framework can be applied to any problem that can be formulated as a decentralized contextual bandit problem. Contextual bandits have been studied before in \cite{hazan2007online, slivkins2009contextual, dudik2011efficient, chu2011contextual, lu2010contextual} in a single agent setting, where the agent sequentially chooses from a set of alternatives with unknown rewards, and the rewards depend on the context information provided to the agent at each time step. In \cite{li2010contextual}, a contextual bandit algorithm named LinUCB is proposed for recommending personalized news articles,
which is variant of the UCB algorithm \cite{auer} designed for linear payoffs.
Numerical results on real-world Internet data are provided, but no
theoretical results on the resulting regret are derived.
The main difference of our work from single agent contextual bandits is that: (i) a three phase learning algorithm with {\em training}, {\em exploration} and {\em exploitation} phases is needed instead of the standard two phase, i.e., {\em exploration} and {\em exploitation} phases, algorithms used in centralized contextual bandit problems; (ii) the adaptive partitions of the context space should be formed in a way that each learner/aggregator can efficiently utilize what is learned by other learners about the same context; (iii) the algorithm is robust to missing feedback (some users do not rate the content).

\vspace{-0.1in}
\section{Problem Formulation}\label{sec:probform}

The system model is shown in Fig. \ref{fig:systemmodel}. There are $M$ content aggregators (CAs) which are indexed by the set ${\cal M} := \{1,2,\ldots,M\}$.
We also call each CA a {\em learner} since it needs to learn which type of content to provide to its users. 
Let ${\cal M}_{-i} := {\cal M} - \{i\}$ be the set of CAs that CA $i$ can choose from to request content.
Each CA has access to the contents over its content network as shown in Fig. \ref{fig:systemmodel}. The set of contents in CA $i$'s content network is denoted by ${\cal C}_i$. 
The set of all contents is denoted by ${\cal C} := \cup_{i \in {\cal M}} {\cal C}_i$.
The system works in a discrete time setting $t=1,2,\ldots,T$, where the following events happen sequentially, in each time slot: (i) a user with context $x_i(t)$ arrives to each CA $i \in {\cal M}$,\footnote{Although in this model user arrivals are synchronous, our framework will work for asynchronous user arrivals as well.} (ii) based on the context of its user each CA matches its user with a content (either from its own content network or by requesting content from another CA), (iii) the user provides a feedback, denoted by $y_i(t)$, which is either {\em like} ($y_i(t)=1$) or {\em dislike} ($y_i(t)=0$).

The set of {\em content matching actions} of CA $i$ is denoted by ${\cal K}_i := {\cal C}_i \cup {\cal M}_{-i}$.
Let ${\cal X} = [0,1]^d$ be the context space,\footnote{In general, our results will hold for any bounded subspace of $\mathbb{R}^n$.}  where $d$ is the dimension of the context space.
The context can include many properties of the user such as age, gender, income, previously liked content, etc. We assume that all these quantities are mapped into $[0,1]^d$. For instance, this mapping can be established by feature extraction methods such as the one given in \cite{li2010contextual}.
Another method is to represent each property of a user by a real number between $[0,1]$ (e.g., normalize the age by a maximum possible age, represent gender by set $\{0,1\}$, etc.), without feature extraction.
The feedback set of a user is denoted by ${\cal Y} := \{ 0, 1 \}$. Let $C_{\max} := \max_{i \in {\cal M}} |{\cal C}_i|$. We assume that all CAs know $C_{\max}$ but they do not need to know the content networks of other CAs. 

\newc{The following two examples demonstrate how business news aggregation and music aggregation fits our problem formulation.}

\begin{example}\label{example:1}
\newc{\textbf{Business news aggregation}. Consider a distributed set of news aggregators that operate in different countries (for instance a European news aggregator network as in \cite{schranz2005building}). Each news aggregator's content network (as portrayed in Fig. 1 of the manuscript) consists of content producers (multimedia sources) that are located in specific regions/countries. Consider a user with context $x$ (e.g. age, gender, nationality, profession) who subscribes to the CA $A$, which is located in the country where the user lives. This CA has access to content from local producers in that country but it can also request content from other CAs,  located in different countries. Hence, a CA has access to (local) content generated in other countries. In such scenarios, our proposed system is able to recommend to the user subscribing to CA $A$ also content from other CAs, by discovering the content that is most relevant to that user (based on its context $x$) across the entire network of CAs. For instance, for a user doing business in the transportation industry, our content aggregator system may learn to recommend road construction news, accidents or gas prices from particular regions that are on the route of the transportation network of the user.}
\end{example}

\begin{example}\label{example:2}
\newc{\textbf{Music aggregation}. Consider a distributed set of music aggregators that are specialized in specific genres of music: classical, jazz, rock, rap, etc. Our proposed model allows music aggregators to share content to provide personalized recommendation for a specific user. For instance, a user that subscribes (frequents/listens) to the classical music aggregator may also like specific jazz tracks. Our proposed system is able to discover and recommend to that user also other music that it will enjoy in addition to the music available to/owned by in aggregator to which it subscribes.}
\end{example}

\vspace{-0.1in}
\subsection{User and Content Characteristics}\label{sec:fixedprediction}

In this paper we consider two types of user and content characteristics. First, we consider the case when the user and content characteristics are {\em static}, i.e., they do not change over time. 
For this case, for a user with context $x$, $\pi_{c}(x)$ denotes the probability that the user will like content $c$. We call this the {\em relevance score} of content $c$.

The second case we consider corresponds to the scenario when the characteristics of the users and content are {\em dynamic}.
For online multimedia content, especially for social media, it is known that both the user and the content characteristics are dynamic and noisy \cite{naaman2012social}, hence the problem exhibits concept drift \cite{minku2010impact}. Formally, a concept is the distribution of the problem, i.e., the joint distribution of the user and content characteristics, at a certain point of time \cite{narasimhamurthy2007framework}. Concept drift is a change in this distribution. 
For the case with concept drift, we propose a learning algorithm that takes into account the speed of the drift to decide what window of past observations to use in estimating the relevance score. The proposed learning algorithm has theoretical performance guarantees in contrast to prior work on concept drift which mainly deal with the problem in a ad-hoc manner. 
Indeed, it is customary to assume that online content is highly dynamic. A certain type of content may become popular for a certain period of time, and then, its popularity may decrease over time and a new content may emerge as popular. In addition, although the type of the content remains the same, such as soccer news, its popularity may change over time due to exogenous events such as the World Cup etc. Similarly, a certain type of content may become popular for a certain type of demographics (e.g., users of a particular age, gender, profession, etc.). 
However, over time the interest of these users may shift to other types of content. In such cases, where the popularity of content changes over time for a user with context $x$, $\pi_{c}(x,t)$ denotes the probability that the user at time $t$ will like content $c$. 

\comment{
\begin{figure}
\begin{center}
\includegraphics[width=0.95\columnwidth]{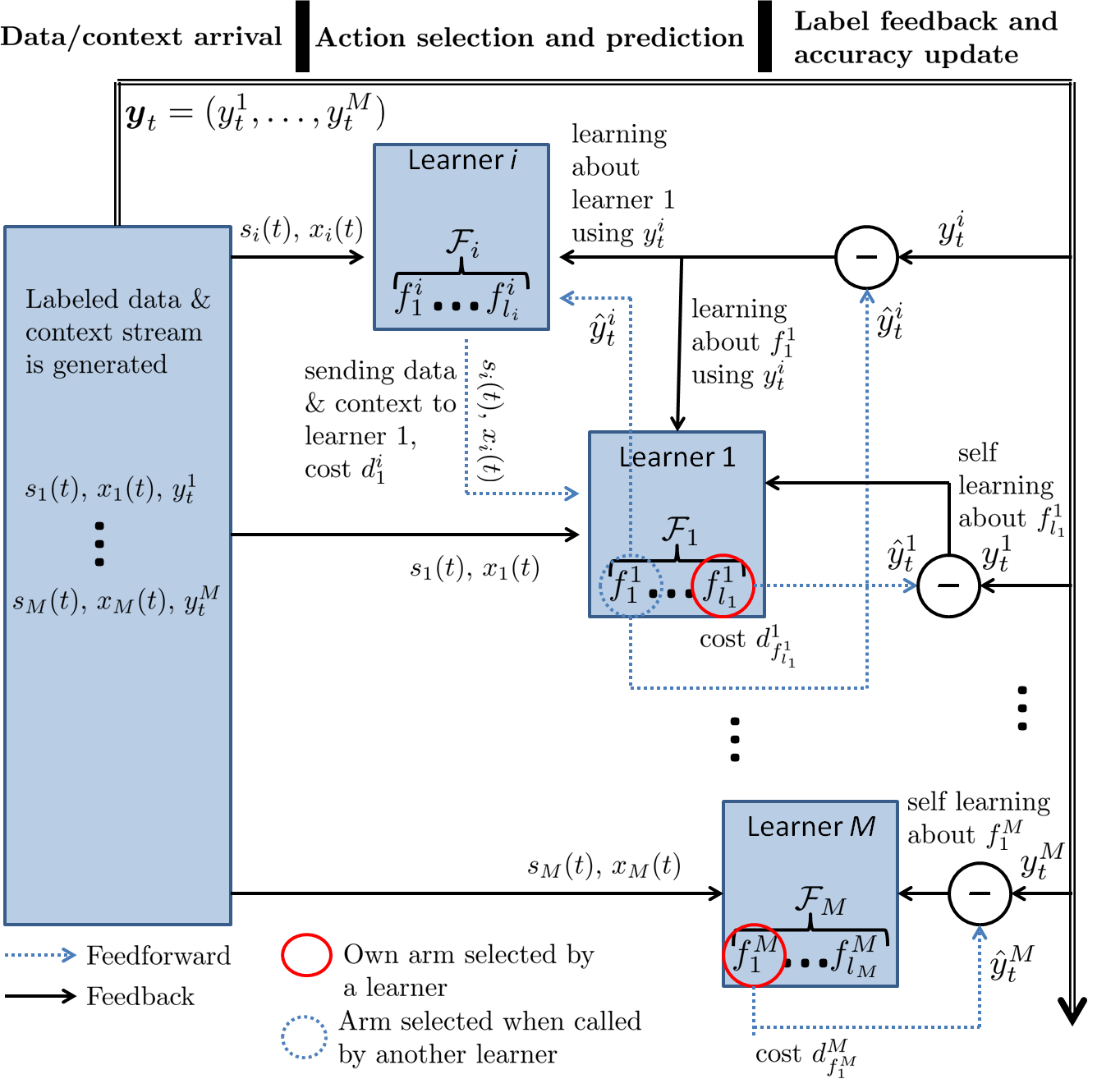}
\vspace{-0.1in}
\caption{Operation of the distributed data classification system during a time slot.} 
%\vspace{-0.4in}
\label{fig:system}
\end{center}
\vspace{-0.2in}
\end{figure}
}

As we stated earlier, a CA $i$ can either recommend content from multimedia sources that it is directly connected to or can ask another CA for content. By asking for content $c$ from another CA $j$, CA $i$ will incur cost $d^i_j \geq 0$. 
\newc{For the purpose of our paper, the cost is a {\em generic} term. 
For instance, it can be a payment made to CA $j$ to display it to CA $i$'s user, or it may be associated with the advertising loss CA $i$ incurs by directing its user to CA $j$'s website. When the cost is payment, it can be money, tokens \cite{xu2013efficient} or Bitcoins \cite{nakamoto2008bitcoin}.}
Since this cost is bounded, without loss of generality we assume that $d^i_j \in [0,1]$ for all $i,j \in {\cal M}$.
In order make our model general, we also assume that there is a cost associated with recommending a type of content $c \in {\cal C}_i$, which is given by $d^i_c \in [0,1]$, for CA $i$. For instance, this can be a payment made to the multimedia source that owns content $c$. 

An intrinsic assumption we make is that the CAs are cooperative. That is, CA $j \in {\cal M}_{-i}$ will return the content that is mostly to be liked by CA $i$'s user when asked by CA $i$ to recommend a content. This cooperative structure can be justified as follows. Whenever a user likes the content of CA $j$ (either its own user or user of another CA), CA $j$ obtains a benefit. This can be either an additional payment made by CA $i$ when the content recommended by CA $j$ is liked by CA $i$'s user, or it can simply be the case that whenever a content of CA $j$ is liked by someone its popularity increases. 
However, we assume that the CAs' decisions do not change their pool of users. The future user arrivals to the CAs are independent of their past content matching strategies. For instance, users of a CA may have monthly or yearly subscriptions, so they will not shift from one CA to another CA when they like the content of the other CA. 

The goal of CA $i$ is to explore the matching actions in ${\cal K}_i$ to learn the 
best content for each context, while at the same time exploiting the best content for the user with context $x_i(t)$ arriving at each time instance $t$ to maximize its total number of likes minus costs. 
CA $i$'s problem can be modeled as a contextual bandit problem \cite{slivkins2009contextual, dudik2011efficient, langford2007epoch, chu2011contextual}, where likes and costs translate into rewards. 
In the next subsection, we formally define the benchmark solution which is computed using perfect knowledge about the probability that a content $c$ will be liked by a user with context $x$ (which requires complete knowledge of user and content characteristics). Then, we define the regret which is the performance loss due to uncertainty about the user and content characteristics. 

\vspace{-0.1in}
\subsection{Optimal Content Matching with Complete Information} \label{sec:centralized}

Our benchmark when evaluating the performance of the learning algorithms is the optimal solution which always recommends the content with the highest relevance score minus cost for CA $i$ from the set ${\cal C}$ given context $x_i(t)$ at time $t$. This corresponds to selecting the best matching action in ${\cal K}_i$ given $x_i(t)$.
Next, we define the expected rewards of the matching actions, and the {\em action selection} policy of the benchmark.
For a matching action $k \in {\cal M}_{-i}$, its relevance score is given as $\pi_k(x) :=  \pi_{c^*_k(x)}(x)$, where $c^*_k(x) := \argmax_{c \in {\cal C}_j} \pi_c(x)$.
For a matching action $k \in {\cal C}_i$ its relevance score is equal to the relevance score of content $k$.
The expected reward of CA $i$ from choosing action $k \in {\cal K}_i$ is given by the quasilinear utility function
\begin{align}
\mu^i_k(x) := \pi_k(x) - d^i_k  \label{eqn:netreward}
\end{align}
where $d^i_k \in [0,1]$ is the normalized cost of choosing action $k$ for CA $i$.
Our proposed system will also work for more general expected reward functions as long as the expected reward of a learner is a function of the relevance score of the chosen action and the cost (payment, communication cost, etc.) associated with choosing that action.
The {\em oracle} benchmark is given by
\begin{align}
k_i^*(x) := \argmax_{k \in {\cal K}_i} \mu^i_k(x) ~~ \forall x \in {\cal X}. \label{eqn:opt2}
\end{align}
The oracle benchmark depends on relevance scores as well as costs of matching content from its own content network or other CA's content network. 
The case $d^i_k = 0$ for all $k \in {\cal K}_i$ and $i \in {\cal M}$, corresponds to the scheme in which content matching has zero cost, hence
%\begin{align}
$k^*_i(x) = \argmax_{k \in {\cal K}_i} \pi_k(x) = \argmax_{c \in {\cal C}} \pi_c(x)$.  
%\end{align}  
This corresponds to the best centralized solution, where CAs act as a single entity.
On the other hand, when $d^i_k \geq 1$ for all $k \in {\cal M}_{-i}$ and $i \in {\cal M}$, in the oracle benchmark a CA must not cooperate with any other CA and should only use its own content. Hence
%\begin{align}
$k^*_i(x) = \argmax_{c \in {\cal C}_i} (\pi_c(x) - d^i_c)$.    
%\end{align}
%
%distributed solution which is computed offline given the complete knowledge about the data arrival process and classifier accuracies. 
%For simplicity, we focus on the distributed classification problem of classifier $i$ 
%In the optimal classification scheme, for each context $x \in {\cal X}$ classifier $i$ wants to find the optimal classifier from the set $\cup_{i \in {\cal M}} {\cal F}_i$, i.e., the classifier $k$ with the highest accuracy minus the cost of classification $d_k$.
%
%
%\com{In addition to Assumption \ref{}, we can prove tighter regret bounds if we assume more structure on the optimal solution. Specifically, the following assumption states that for almost all the points in the context space, there is a  gap between the accuracies of optimal and suboptimal classifiers. (Since the diameters of the cubes decrease with the number of slices, when we increase $m_T$, the suboptimality gap for the cubes adjacent to the boundary cubes decreases since they will become closer to the boundary. Therefore even if we assume that the optimal scheme partitions ${\cal X}$ into  $|\cup_{i \in {\cal M}} {\cal F}_i|$ sets in each of which a single classifier is optimal, the volume of cubes for which the suboptimality is less than $\epsilon>0$ remains constant. Therefore, if we do not increase the rate of explorations regret due to suboptimalities will be linear in time).}
%
% 
%Note that partitioning $[0,1]^d$ is not enough because the noise can carry a data point outside of $[0,1]^d$.
%
In the following subsections, we will show that independent of the values of relevance scores and costs, our algorithms will achieve sublinear regret (in the number of users or equivalently time) with respect to the oracle benchmark.

%
%\add{\vspace{-0.2in}}
\vspace{-0.1in}
\subsection{The Regret of Learning}

In this subsection we define the regret as a performance measure of the learning algorithm used by the CAs. Simply, the regret is the loss incurred due to the unknown system dynamics. Regret of a learning algorithm which selects the matching action/arm $a_i(t)$ at time $t$ for CA $i$ is defined with respect to the best matching action $k_i^*(x)$ given in (\ref{eqn:opt2}). 
Then, the regret of CA $i$ at time $T$ is
\add{\vspace{-0.05in}}
\begin{align}
%\textrm{({\bf P2})  }
R_i(T) &:= \sum_{t=1}^T \left( \pi_{k_i^*(x_i(t))}(x_i(t)) - d^i_{k_i^*(x_i(t))} \right) \notag \\ 
&- \textrm{E} \left[ \sum_{t=1}^T \left( \mathrm{I} ( y_i(t) = L ) - d^i_{a_i(t)} \right) \right] \label{eqn:tmmregret}.
\end{align}
Regret gives the convergence rate of the total expected reward of the learning algorithm to the value of the optimal solution given in (\ref{eqn:opt2}). Any algorithm whose regret is sublinear, i.e., $R_i(T) = O(T^\gamma)$ such that $\gamma<1$, will converge to the optimal solution in terms of the average reward.

\newc{
A summary of notations is given in Table \ref{tab:notation1}. 
In the next section, we propose an online learning algorithm which achieves sublinear regret when the user and content characteristics are static. 
}

\newc{
\begin{table}
\centering
{
{\fontsize{9}{10}\selectfont
\begin{tabular}{|l|}
\hline
${\cal M}$: Set of all CAs \\
\hline
${\cal C}_i$: Contents in the Content Network of CA $i$ \\ 
\hline 
$C_{\max}$: $\max_{i \in {\cal M}} |{\cal C}_i|$ \\
\hline
${\cal C}$: Set of all contents  \\
\hline
${\cal X}=[0,1]^d$: Context space \\
\hline
${\cal Y}$: Set of feedbacks a user can give \\
\hline
$x_i(t)$: $d$-dimensional context of $t$th user of CA $i$ \\
\hline
$y_i(t)$: Feedback of the $t$th user of CA $i$ \\
\hline
${\cal K}_i$: Set of content matching actions of CA $i$ \\
\hline 
$\pi_c(x)$: Relevance score of content $c$ for context $x$ \\
\hline
$d^i_k$: Cost of choosing matching action $k$ for CA $i$ \\
\hline
$\mu^i_k(x)$: Expected reward (static) of CA $i$ from  \\ matching action $k$ for context $x$ \\
\hline
$k^*_i(x)$: Optimal matching action of CA $i$ given \\ context $x$ (oracle benchmark) \\
\hline
$R_i(T)$: Regret of CA $i$ at time $T$ \\
\hline
$\beta_a := \sum_{t=1}^{\infty} 1/t^a$ \\
\hline
\end{tabular}
}
}
\caption{Notations used in problem formulation.}
\vspace{-0.3in}
\label{tab:notation1}
\end{table}
}

%which divides the context space $[0,1]^d$ to $(m_T)^d$ h

%ypercubes, and estimates the best classifier or best learner to call, in each of these hypercubes. Here, the number $m_T$ depends on the time horizon $T$ and is nondecreasing in $T$ which means that the number of hypercubes we consider is nondecreasing in $T$. The longer the time horizon, the finer should the partitions be in order to control the suboptimality resulting from taking averages over the entire hypercube. Secondly, we propose a distributed zooming algorithm that adaptively adjusts the number of hypercubes by zooming into the regions of the context space with high context arrival intensity. 

%\input{centralized}
\vspace{-0.1in}
\section{A Distributed Online Content Matching Algorithm} \label{sec:iid}

In this section we propose an online learning algorithm for content matching when the user and content characteristics are static.
In contrast to prior online learning algorithms that exploit the context information \cite{hazan2007online, slivkins2009contextual, dudik2011efficient, langford2007epoch, chu2011contextual, lu2010contextual}, which consider a single learner setting, the proposed algorithm helps a CA to learn from the {\em experience} of other CAs. 
With this mechanism, a CA is able to recommend content from multimedia sources that it has no direct connection, without needing to know the IDs of such multimedia sources and their content. It learns about these multimedia sources only through the other CAs that it is connected to. 

In order to bound the regret of this algorithm analytically we use the following assumption. When the content characteristics are static, 
we assume that each type of content has similar relevance scores for similar contexts; we formalize this in terms of a Lipschitz condition.
\begin{assumption} \label{ass:lipschitz2}
There exists $L>0$, $\gamma>0$ such that for all $x,x' \in {\cal X}$ and $c \in {\cal C}$, we have
$|\pi_{c}(x) - \pi_{c}(x')| \leq L ||x-x'||^\gamma$.    
\end{assumption}

Assumption \ref{ass:lipschitz2} indicates that the probability that a type $c$ content is liked by 
users with similar contexts will be similar to each other. For instance, if two users have similar age, gender, etc., then it is more likely that they like the same content. 
We call $L$ the {\em similarity constant} and $\gamma$ the {\em similarity exponent}. These parameters will depend on the characteristics of the users and the content. We assume that $\gamma$ is known by the CAs. However, an unknown $\gamma$ can be estimated online using the history of likes and dislikes by users with different contexts, and our proposed algorithms can be modified to include the estimation of $\gamma$. 
 
In view of this assumption, the important question becomes how to learn from the past experience which content to match with the current user. 
We answer this question by proposing an algorithm which partitions the context space of a CA, and learns the relevance scores of different types of content for each set in the partition, based only on the past experience in that set. 
The algorithm is designed in a way to achieve optimal tradeoff between the size of the partition and the past observations that can be used together to learn the relevance scores. It also includes a mechanism to help CAs learn from each other's users. 
We call our proposed algorithm the {\em DIStributed COntent Matching algorithm} (DISCOM), and its pseudocode is given in  Fig. \ref{fig:CLUP}, Fig. \ref{fig:CLUPmax} and Fig. \ref{fig:CLUPcoop}.

Each CA $i$ has two tasks: matching content with its own users and matching content with the users of other CAs when requested by those CAs.
We call the first task the {\em maximization task} (implemented by $\textrm{DISCOM}_{\textrm{max}}$ given in Fig. \ref{fig:CLUPmax}), since the goal of CA $i$ is to maximize the number of likes from its own users. The second task is called the {\em cooperation task}  (implemented by $\textrm{DISCOM}_{\textrm{coop}}$ given in Fig. \ref{fig:CLUPcoop}), since the goal of CA $i$ is to help other CAs obtain content from its own content network in order to maximize the likes they receive from their users.
This cooperation is beneficial to CA $i$ because of numerous reasons. Firstly, since every CA cooperates, CA $i$ can reach a much larger set of content including the content from other CA's content networks, hence will be able to provide content with higher relevance score to its users. Secondly, when CA $i$ helps CA $j$, it will observe the feedback of CA $j$'s user for the matched content, hence will be able to update the estimated relevance score of its content, which is beneficial if a user similar to CA $j$'s user arrives to CA $i$ in the future. 
Thirdly, payment mechanisms can be incorporated to the system such that CA $i$ gets a payment from CA $j$ when its content is liked by CA $j$'s user. 

In summary, there are two types of content matching actions for a user of CA $i$. In the first type, the content is recommended from a source that is directly connected to CA $i$, while in the second type, the content is recommended from a source that CA $i$ is connected through another CA.
The information exchange between multimedia sources and CAs for these two types of actions is shown in Fig. \ref{fig:twodiffrec1} and Fig. \ref{fig:twodiffrec2}. 

\begin{figure}
\begin{center}
\includegraphics[width=0.9\columnwidth]{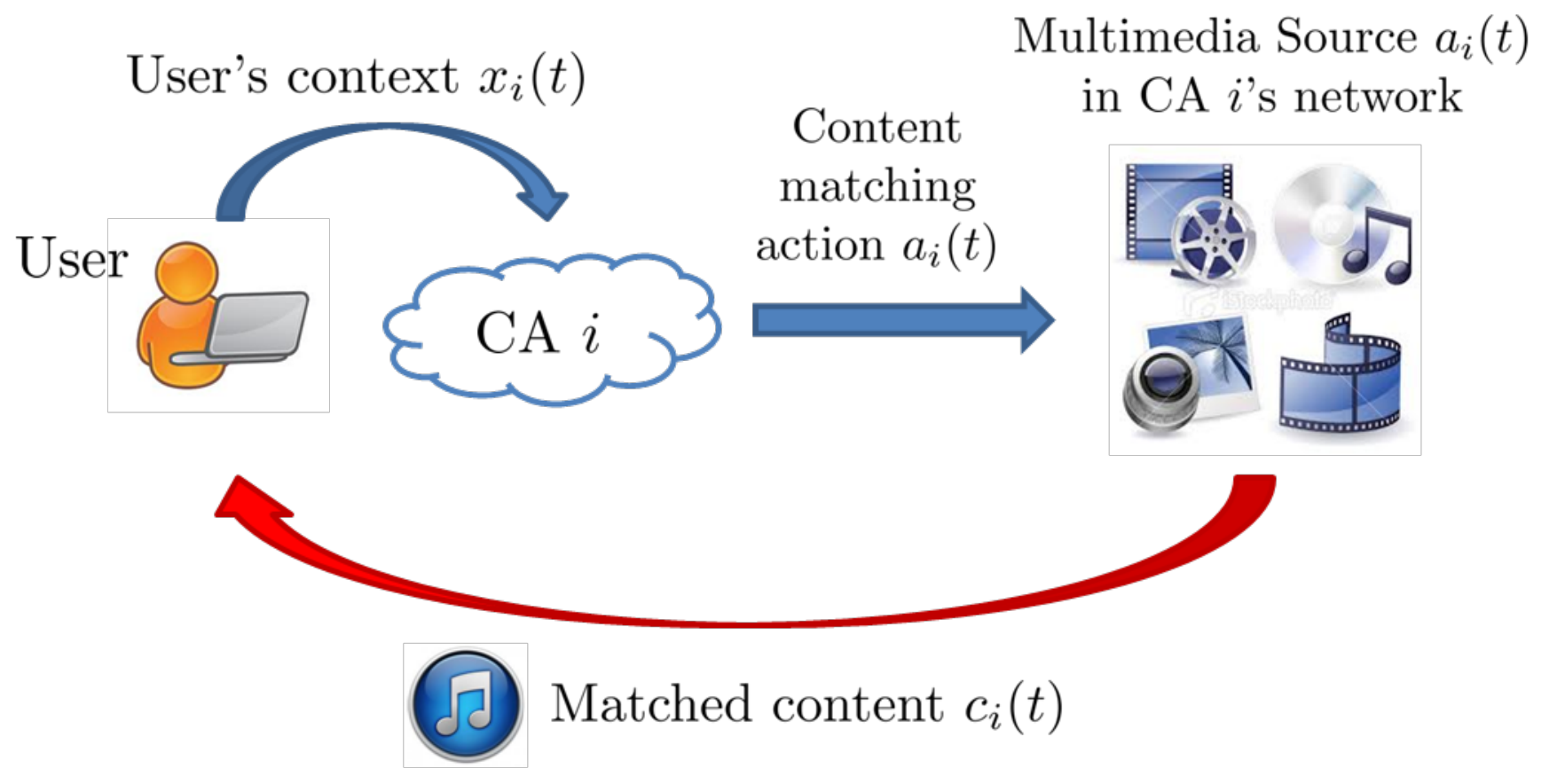}
\vspace{-0.1in}
\caption{Content matching within own content network.} 
%\vspace{-0.4in}
\label{fig:twodiffrec1}
\end{center}
\vspace{-0.2in}
\end{figure}

\begin{figure}
\begin{center}
\includegraphics[width=\columnwidth]{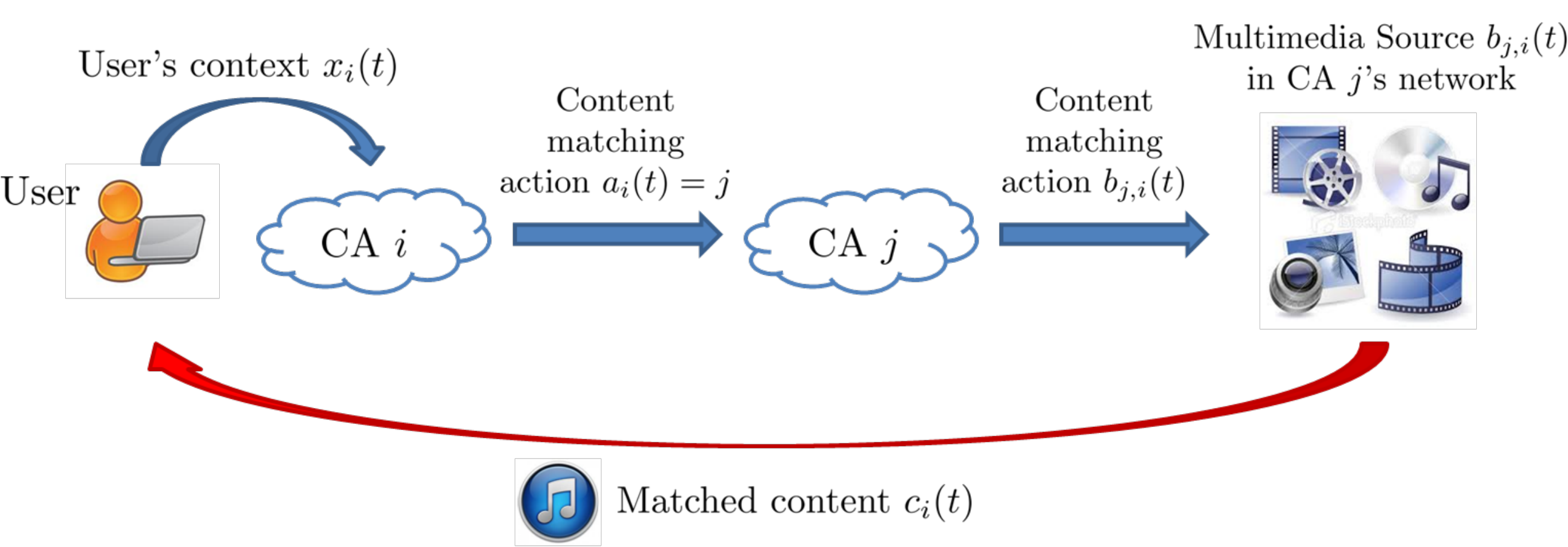}
\vspace{-0.2in}
\caption{Content matching from the network of another CA.} 
%\vspace{-0.4in}
\label{fig:twodiffrec2}
\end{center}
\vspace{-0.2in}
\end{figure}

Let $T$ be the time horizon of interest (equivalent to the number of users that arrive to each CA).
DISCOM creates a partition of ${\cal X} = [0,1]^d$ based on $T$. For instance $T$ can be the average number of visits to the CA's website in one day. Although in reality the average number of visits to different CAs can be different, our analysis of the regret in this section will hold since it is the worst-case analysis (assuming that users arrive only to CA $i$, while the other CAs only learn through CA $i$'s users). 
Moreover, the case of heterogeneous number of visits can be easily addressed if each CA informs other CAs about its average number of visits. Then, CA $i$ can keep $M$ different partitions of the context space; one for itself and $M-1$ for the other CAs. If called by CA $j$, it will match a content to CA $j$'s user based on the partition it keeps for CA $j$. Hence, we focus on the case when $T$ is common to all CAs. 

We first define $m_T$ as the {\em slicing level} used by DISCOM, which is an integer that is used to partition ${\cal X}$.
DISCOM forms a partition of ${\cal X}$ consisting of $(m_T)^d$ sets (hypercubes) where each set is a $d$-dimensional hypercube with edge length $1/m_T$. This partition is denoted by ${\cal P}_T$. 
The hypercubes in ${\cal P}_T$ are oriented such that one of them has a corner located at the origin of the $d$-dimensional Euclidian space. 
It is clear that the number of hypercubes is increasing in $m_T$, while their size is decreasing in $m_T$. When $m_T$ is small each hypercube covers a large set of contexts, hence the number of past observations that can be used to estimate relevance scores of matching actions in each set is large. However, the variation of the true relevance scores of the contexts within a hypercube increases with the size of the hypercube. 
DISCOM should set $m_T$ to a value that balances this tradeoff. 

A hypercube in ${\cal P}_T$ is denoted by $p$. The hypercube in ${\cal P}_T$ that contains context $x_i(t)$ is denoted by $p_i(t)$. When $x_i(t)$ is located at a boundary of multiple hypercubes in ${\cal P}_T$, it is randomly assigned to one of these hypercubes.

\begin{figure}[htb]
\add{\vspace{-0.1in}}
\fbox {
\begin{minipage}{0.95\columnwidth}
{\fontsize{9}{9}\selectfont
\flushleft{DISCOM for CA $i$:}
\begin{algorithmic}[1]
\STATE{Input: $H_1(t)$, $H_2(t)$, $H_3(t)$, $T$, $m_T$}
\STATE{Initialize: Partition ${\cal X}$ into hypercubes denoted by ${\cal P}_T$}
\STATE{Initialize: Set counters
$N^i_p =0$, $\forall p \in {\cal P}_T$,
$N^i_{k,p}=0, \forall k \in {\cal K}_i, p \in {\cal P}_T$, $N^{\textrm{tr}, i}_{j, p} =0, \forall j \in {\cal M}_{-i}, p \in {\cal P}_T$
}
\STATE{Initialize: Set relevance score estimates $\bar{r}^i_{k,p} =0$, $\forall k \in {\cal K}_i$, $p \in {\cal P}_T$
}
\WHILE{$t \geq 1$}
\STATE{Run $\textrm{DISCOM}_{\textrm{max}}$ to find $p=p_i(t)$, to obtain a matching action $a_i$, and value of $train$ flag}
\STATE{If $a_i \in {\cal M}_{-i}$ ask CA $a_i$ for content and pass $x_i(t)$}
\STATE{Receive ${\cal CA}_i(t)$, the set of CAs who requested content from CA $i$, and their contexts}
\IF{${\cal CA}_i(t) \neq \emptyset$}
\STATE{Run $\textrm{DISCOM}_{\textrm{coop}}$ to obtain the content to be selected $\boldsymbol{b}_i := \{ b_{i,j}  \}_{j \in  {\cal CA}_i(t)}$ and hypercubes that the contexts of the users in ${\cal CA}_i(t)$ lie in $\boldsymbol{p}_i := \{ p_{i,j}  \}_{j \in  {\cal CA}_i(t)}$}
\ENDIF
\IF{$a_i \in {\cal C}_i$}
\STATE{\newc{Pay cost $d^i_{a_i}$, obtain content $a_i$}}
\STATE{\newc{Show $a_i$ to the user}, receive feedback $r \in \{0,1\}$ drawn from $\textrm{Ber}_{a_i}(x_i(t))$\footnote{$\textrm{Ber}_{a_i}(x_i(t))$ is the Bernoulli distribution with expected value $\pi_{a_i}(x_i(t))$}}
\ELSE
\STATE{\newc{Pay cost $d^i_{a_i}$, obtain content $b_{a_i,i}$ from CA $a_i$}}
\STATE{\newc{Show $b_{a_i,i}$ to the user}, receive feedback $r \in \{0,1\}$ drawn from $\textrm{Ber}_{b_{a_i,i}}(x_i(t))$}
\ENDIF
\IF{$train = 1$}
\STATE{$N^{\textrm{tr}, i}_{a_i, p} ++$}
\ELSE
\STATE{$\bar{r}^i_{a_i,p} = (\bar{r}^i_{a_i,p} N^i_{a_i,p}  + r)/(N^i_{a_i,p}+1)$}
\STATE{$N^i_{p}++$, $N^i_{a_i,p} ++$}
\ENDIF
\IF{${\cal CA}_i(t) \neq \emptyset$}
\FOR{$j \in {\cal CA}_i(t)$}
\STATE{\newc{Send content $b_{i,j}$ to CA $j$'s user}}
\STATE{Observe feedback $r$ drawn from $\textrm{Ber}_{b_{i,j}}(x_j(t))$}
\STATE{$\bar{r}^i_{b_{i,j},p_{i,j}} = \frac{\bar{r}^i_{b_{i,j},p_{i,j}} N^i_{b_{i,j},p_{i,j}}  + r}{N^i_{b_{i,j},p_{i,j}}+1}$}
\STATE{$N^i_{p_{i,j}}++$, $N^i_{b_{i,j},p_{i,j}}++$}
\ENDFOR
\ENDIF
\STATE{$t=t+1$}
\ENDWHILE
\end{algorithmic}
}
\end{minipage}
} \caption{Pseudocode for DISCOM algorithm.} \label{fig:CLUP}
\add{\vspace{-0.12in}}
\end{figure}
\comment{
\begin{figure}[htb]
\fbox {
\begin{minipage}{0.95\columnwidth}
{\fontsize{9}{7}\selectfont
{\bf Train}($i$, $\alpha$, $n$):
\begin{algorithmic}[1]
\STATE{Call learner $\alpha$, send context $x_i(t)$. Receive reward $r^i_\alpha(x_i(t),t)$. Net reward $\tilde{r}_\alpha(t) =r^i_\alpha(x_i(t),t) - d^i_\alpha$. $n++$.}
\end{algorithmic}
{\bf Explore}($i$, $\alpha$, $n$, $r$):
\begin{algorithmic}[1]
\STATE{Select choice $\alpha$. If $\alpha \in {\cal M}_{-i}$ call learner $\alpha$, send context $x_i(t)$.
Receive reward $r^i_\alpha(x_i(t),t)$. Net reward $\tilde{r}_\alpha(t) = r^i_\alpha(x_i(t),t) - d^i_\alpha$. $r = \frac{n r + r^i_\alpha(x_i(t),t)  }{n + 1}$.  $n++$. }
\end{algorithmic}
{\bf Exploit}($i$, $\boldsymbol{n}$, $\boldsymbol{r}$):
\begin{algorithmic}[1]
\STATE{Select choice $\alpha \in \argmax_{k \in {\cal K}_i} r_k$. If $\alpha \in {\cal M}_{-i}$ call learner $\alpha$, send context $x_i(t)$. Receive reward $r^i_\alpha(x_i(t),t)$. Net reward $\tilde{r}_\alpha(t) = r^i_\alpha(x_i(t),t) - d^i_\alpha$. $r_{k} = \frac{n_\alpha r_{\alpha} + r^i_\alpha(x_i(t),t) }{n_\alpha + 1}$. $n_\alpha++$.  }
\end{algorithmic}
}
\end{minipage}
} \caption{Pseudocode of the training, exploration and exploitation modules.} \label{fig:mtrain}
\vspace{-0.1in}
\end{figure}
}
\comment{
\begin{figure}[htb]
\fbox {
\begin{minipage}{0.95\columnwidth}
{\fontsize{9}{9}\selectfont
{\bf Explore}($k$, $n$, $r$):
\begin{algorithmic}[1]
\STATE{select arm $k$}
\STATE{Receive reward $r_k(t) = I(k(x_i(t)) = y_t) - d_{k(x_i(t))}$}
\STATE{$r = \frac{n r + r_k(t)}{n + 1}$}
\STATE{$n++$}
\end{algorithmic}
}
\end{minipage}
} \caption{Pseudocode of the exploration module} \label{fig:mexplore}
\end{figure}
}

\comment{
\begin{figure}[htb]
\fbox {
\begin{minipage}{0.9\columnwidth}
{\fontsize{9}{9}\selectfont
{\bf Exploit}($\boldsymbol{n}$, $\boldsymbol{r}$, ${\cal K}_i$):
\begin{algorithmic}[1]
\STATE{select arm $k\in \argmax_{j \in {\cal K}_i} r_j$}
\STATE{Receive reward $r_k(t) = I(k(x_i(t)) = y_t) - d_{k(x_i(t))}$}
\STATE{$\bar{r}_{k} = \frac{n_k \bar{r}_{k} + r_k(t)}{n_k + 1}$}
\STATE{$n_k++$}
\end{algorithmic}
}
\end{minipage}
} \caption{Pseudocode of the exploitation module} \label{fig:mexploit}
\end{figure}
}

\begin{figure}[htb]
\add{\vspace{-0.1in}}
\fbox {
\begin{minipage}{0.95\columnwidth}
{\fontsize{9}{7}\selectfont
\flushleft{$\textrm{DISCOM}_{\max}$ (maximization part of DISCOM) for CA $i$:}
\begin{algorithmic}[1]
\STATE{$train=0$}
\STATE{Find the hypercube in ${\cal P}_T$ that $x_i(t)$ belongs to, i.e., $p_i(t)$}
\STATE{Let $p = p_i(t)$}
\STATE{Compute the set of under-explored matching actions ${\cal C}^{\textrm{ue}}_{i,p}(t)$ given in (\ref{eqn:underexploreda})}
\IF{${\cal C}^{\textrm{ue}}_{i,p}(t) \neq \emptyset$}
\STATE{Select $a_i$ randomly from ${\cal C}^{\textrm{ue}}_{i,p}(t)$}
\ELSE
\STATE{Compute the set of training candidates ${\cal M}^{\textrm{ct}}_{i,p}(t)$ given in (\ref{eqn:traincand})}
\STATE{//Update the counters of training candidates}
\FOR{$j \in {\cal M}^{\textrm{ut}}_{i,p}(t)$}
\STATE{Obtain $N^j_p$ from CA $j$, set $N^{\textrm{tr}, i}_{j,p} = N^j_p - N^i_{j,p}$}
\ENDFOR
\STATE{Compute the set of under-trained CAs ${\cal M}^{\textrm{ut}}_{i,p}(t)$ given in (\ref{eqn:undertrained})}
\STATE{Compute the set of under-explored CAs ${\cal M}^{\textrm{ue}}_{i,p}(t)$ given in (\ref{eqn:underexploredl})}
\IF{${\cal M}^{\textrm{ut}}_{i,p}(t) \neq \emptyset$}
\STATE{Select $a_i$ randomly from ${\cal M}^{\textrm{ut}}_{i,p}(t)$, $train=1$}
%\STATE{Run {\bf Train}($i$, $j$, $N^i_{1,j,l}$).}
\ELSIF{${\cal M}^{\textrm{ue}}_{i,p}(t) \neq \emptyset$}
\STATE{Select $a_i$ randomly from ${\cal M}^{\textrm{ue}}_{i,p}(t)$}
\ %STATE{Run {\bf Explore}($i$, $j$, $N^i_{2,j,l}$, $\bar{r}^i_{j,l}$)}
\ELSE
\STATE{Select $a_i$ randomly from $\argmax_{k \in {\cal K}_i} \bar{r}^i_{k,p} - d^i_k$}
\ENDIF
\ENDIF
\end{algorithmic}
}
\end{minipage}
} \caption{Pseudocode for the maximization part of DISCOM algorithm.} \label{fig:CLUPmax}
\end{figure}

\begin{figure}[htb]
\add{\vspace{-0.1in}}
\fbox {
\begin{minipage}{0.95\columnwidth}
{\fontsize{9}{7}\selectfont
\flushleft{$\textrm{DISCOM}_{\textrm{coop}}$ (cooperation part of DISCOM) for CA $i$}
\begin{algorithmic}[1]
\FOR{$j \in {\cal CA}_i(t)$}
\STATE{Find the set in ${\cal P}_T$ that $x_j(t)$ belongs to, i.e., $p_{i,j}$}
\STATE{Compute the set of under-explored matching actions ${\cal C}^{\textrm{ue}}_{i,p_{i,j}}(t)$ given in (\ref{eqn:underexploreda})}
\IF{${\cal C}^{\textrm{ue}}_{i,p_{i,j}}(t) \neq \emptyset$}
\STATE{Select $b_{i,j}$ randomly from ${\cal C}^{\textrm{ue}}_{i,p_{i,j}}(t)$}
\ELSE
\STATE{$b_{i,j} = \argmax_{c \in {\cal C}_i} \bar{r}^i_{c, p_{i,j}}$}
\ENDIF
\ENDFOR
\end{algorithmic}
}
\end{minipage}
} \caption{Pseudocode for the cooperation part of DISCOM algorithm.} \label{fig:CLUPcoop}
\add{\vspace{-0.12in}}
\end{figure}

$\textrm{DISCOM}_{\textrm{max}}$ operates as follows.
CA $i$ matches its user at time $t$ with a content by taking a matching action based on one of the three phases: {\em training} phase in which CA $i$ requests content from another CA $j$ for the purpose of helping CA $j$ to learn the relevance score of content in its content network for users with context $x_i(t)$ (but CA $i$ does not update the relevance score for CA $j$ because it thinks that CA $j$ may not know much about its own content), 
{\em exploration} phase in which CA $i$ selects a matching action in ${\cal K}_i$ and updates its relevance score based on the feedback of its user,
and {\em exploitation} phase in which CA $i$ chooses the matching action with the highest relevance score minus cost. 

Since the CAs are cooperative, when another CA requests content from CA $i$, CA $i$ will choose content from its content network with the highest estimated relevance score for the user of the requesting CA. 
To maximize the number of likes minus costs in exploitations,
CA $i$ must have an accurate estimate of the relevance scores of other CAs. This task is not trivial since CA $i$ does not know the content network of other CAs. 
In order to do this, CA $i$ should smartly select which of its users' feedbacks to use when estimating the relevance score of CA $j$. 
The feedbacks should come from previous times at which CA $i$ has a very high confidence that the content of CA $j$ matched with its user is the one with the highest relevance score for the context of CA $i$'s user.
Thus, the training phase of CA $i$ helps other CAs build accurate estimates about the relevance scores of their content, before CA $i$ uses any feedback for content coming from these CAs to form relevance score estimates about them.
In contrast, the exploration phase of CA $i$ helps it to build accurate estimates about the relevance score of its matching actions. 

At time $t$, the phase that CA $i$ will be in is determined by the amount of time it had explored, exploited or trained for past users with contexts similar to the context of the current user. 
For this CA $i$ keeps counters and control functions which are described below.
Let $N^i_p(t)$ be the number of user arrivals to CA $i$ with contexts in $p \in {\cal P}_T$ by time $t$ (its own arrivals and arrivals to other CAs who requested content from CA $i$) except the training phases of CA $i$.
For $c \in {\cal C}_i$, let $N^i_{c,p}(t)$ be the number of times content $c$ is selected in response to a user arriving to CA $i$ with context in hypercube $p$ by time $t$ (including times other CAs request content from CA $i$ for their users with contexts in set $p$).
Other than these, CA $i$ keeps two counters for each other CA in each set in the partition, which it uses to decide the phase it should be in.
The first one, i.e., $N^{\textrm{tr}, i}_{j,p}(t)$, is an estimate on the number of user arrivals with contexts in $p$ to CA $j$ from all CAs except the training phases of CA $j$ and exploration, exploitation phases of CA $i$.
This counter is only updated when CA $i$ thinks that CA $j$ should be trained.
The second one, i.e., $N^i_{j,p}(t)$, counts the number of users of CA $i$ with contexts in $p$ for which content is requested from CA $j$ at exploration and exploitation phases of CA $i$ by time $t$.

At each time slot $t$, CA $i$ first identifies $p_i(t)$.
Then, it chooses its phase at time $t$ by giving highest priority to exploration of content in its own content network, second highest priority to training of the other CAs, third highest priority to exploration of the other CAs, and lowest priority to exploitation. 
The reason that exploration of own content has a higher priority than training of other CAs is that it will minimize the number of times CA $i$ will be trained by other CAs, which we describe below.

First, CA $i$ identifies the set of under-explored content in its content network:
\begin{align}
{\cal C}^{\textrm{ue}}_{i,p}(t) &:= \{ c \in {\cal C}_{i}: N^i_{c,p}(t) \leq H_1(t) \}
\label{eqn:underexploreda}
\end{align}
where $H_1(t)$ is a deterministic, increasing function of $t$ which is called {\em the control function}.
The value of this function will affect the regret of DISCOM.
For $c \in {\cal C}_i$, the accuracy of relevance score estimates increase with $H_1(t)$, hence it should be selected to balance the tradeoff between accuracy and the number of explorations.
If ${\cal C}^{\textrm{ue}}_{i,p}(t)$ is non-empty, CA $i$ enters the exploration phase and randomly selects a content in this set to explore. 
Otherwise, it identifies the set of training candidates:
\begin{align}
{\cal M}^{\textrm{ct}}_{i,p}(t) := \{ j \in {\cal M}_{-i}: N^{\textrm{tr}, i}_{j,p}(t) \leq H_2(t)\} \label{eqn:traincand}
\end{align}
where $H_2(t)$ is a control function similar to $H_1(t)$.
Accuracy of other CA's relevance score estimates of content in their own networks increases with $H_2(t)$, hence it should be selected to balance the possible reward gain of CA $i$ due to this increase with the reward loss of CA $i$ due to the number of trainings.
If this set is non-empty, CA $i$ asks the CAs $j \in {\cal M}^{\textrm{ct}}_{i,p}(t)$ to report $N^j_p(t)$.
Based in the reported values it recomputes $N^{\textrm{tr},i}_{j,p}(t)$ as $N^{\textrm{tr},i}_{j,p}(t) = N^j_p(t) - N^i_{j,p}(t)$.
Using the updated values, CA $i$ identifies the set of under-trained CAs:
\begin{align}
{\cal M}^{\textrm{ut}}_{i,p}(t) := \{ j \in {\cal M}_{-i}: N^{\textrm{tr}, i}_{j,p}(t) \leq H_2(t)\}. \label{eqn:undertrained}
\end{align}
If this set is non-empty, CA $i$ enters the training phase and randomly selects a CA in this set to train it.
%\footnote{Most of the regret bounds proposed in this paper can also be achieved by setting $N^{\textrm{tr}, i}_{j,p}(t)$ to be the number of times learner $i$ trains learner $j$ by time $t$, without considering other context observations of learner $j$. 
%However, by recomputing $N^{\textrm{tr}, i}_{j,p}(t)$, learner $i$ can avoid many unnecessary trainings especially when own context arrivals of learner $j$ is adequate for it to form accurate estimates about its arms for  set $p$ or when learners other than $i$ have already helped learner $j$ to build accurate estimates for its arms in set $p$.}
When ${\cal M}^{\textrm{ct}}_{i,p}(t)$ or ${\cal M}^{\textrm{ut}}_{i,p}(t)$ is empty, this implies that there is no under-trained CA, hence CA $i$ checks if there is an under-explored matching action. 
The set of CAs for which CA $i$ does not have accurate relevance scores is given by 
\begin{align}
{\cal M}^{\textrm{ue}}_{i,p}(t) &:=  \{ j \in {\cal M}_{-i}: N^i_{j,p}(t) \leq H_3(t) \}
\label{eqn:underexploredl}
\end{align}
where $H_3(t)$ is also a control function similar to $H_1(t)$.
If this set is non-empty, CA $i$ enters the exploration phase and randomly selects a CA in this set to request content from to explore it.
Otherwise, CA $i$ enters the exploitation phase in which it selects the matching action with the highest estimated relevance score minus cost for its user with context $x_i(t) \in p = p_i(t)$, i.e.,
\begin{align}
a_i(t) \in \argmax_{k \in {\cal K}_i} \bar{r}^i_{k,p}(t) - d^i_k \label{eqn:maximizer}
\end{align}
where $\bar{r}^i_{k,p}(t)$ is the sample mean estimate of the relevance score of CA $i$ for matching action $k$ at time $t$, which is computed as follows. 
For $j \in {\cal M}_{-i}$, let ${\cal E}^i_{j,p}(t)$ be the set of feedbacks collected by CA $i$ at times it selected CA $j$ while CA $i$'s users' contexts are in set $p$ in its exploration and exploitation phases by time $t$.
For estimating the relevance score of contents in its own content network, CA $i$ can also use the feedback obtained from other CAs' users at times they requested content from CA $i$.
In order to take this into account, for $c \in {\cal C}_i$, let ${\cal E}^i_{c,p}(t)$ be the set of feedbacks observed by CA $i$ at times it selected its content $c$ for its own users with contexts in set $p$ union the set of feedbacks observed by CA $i$ when it selected its content $c$ for the users of other CAs with contexts in set $p$ who requests content from CA $i$ by time $t$.

Therefore, sample mean relevance score of matching action $k \in {\cal K}_i$ for users with contexts in set $p$ for CA $i$ is defined as
%
%\begin{align*}
$\bar{r}^i_{k,p}(t) = \left(\sum_{r \in {\cal E}^i_{k,p}(t)} r \right)/|{\cal E}^i_{k,p}(t)|$,
%\end{align*}
%
An important observation is that computation of $\bar{r}^i_{k,p}(t)$ does not take into account the matching costs.
Let $\hat{\mu}^i_{k,p}(t) := \bar{r}^i_{k,p}(t) - d^i_k$ be the estimated {\em net reward} (relevance score minus cost) of matching action $k$ for set $p$.
Of note, when there is more than one maximizer of (\ref{eqn:maximizer}), one of them is randomly selected.
In order to run DISCOM, CA $i$ does not need to keep the sets ${\cal E}^i_{k,p}(t)$ in its memory. 
$\bar{r}^i_{k,p}(t)$ can be computed by using only $\bar{r}^i_{k,p}(t-1)$ and the feedback at time $t$.

The cooperation part of DISCOM, i.e., $\textrm{DISCOM}_{\textrm{coop}}$ operates as follows.
Let ${\cal CA}_i(t)$ be the set CAs who request content from CA $i$ at time $t$. 
For each $j \in {\cal CA}_i(t)$, CA $i$ first checks if it has any under-explored content $c$ for $p_j(t)$, i.e., $c$ such that $N^i_{c,p_j(t)}(t) \leq H_1(t)$. If so, it randomly selects one of its under-explored content to match it with the user of CA $j$. Otherwise, it exploits its content in ${\cal C}_i$ with the highest estimated relevance score for CA $j$'s current user's context, i.e.,
\begin{align}
b_{i,j}(t) \in \argmax_{c \in {\cal C}_i} \bar{r}^i_{c,p_j(t)}(t). \label{eqn:maximizer2}
\end{align}

A summary of notations used in the description of DISCOM is given in Table \ref{tab:notation2}.
The following theorem provides a bound on the regret of DISCOM. 

\newc{
\begin{table}
\centering
{
{\fontsize{9}{10}\selectfont
\begin{tabular}{|l|}
\hline
$L$: Similarity constant. $\gamma$: Similarity exponent \\
\hline
$T$: Time horizon \\
\hline
$m_T$: Slicing level of DISCOM \\
\hline
${\cal P}_T$: DISCOM's partition of ${\cal X}$ into $(m_T)^d$ hypercubes \\
\hline
$p_i(t)$: Hypercube in ${\cal P}_T$ that contains $x_i(t)$ \\
\hline
$N^i_p(t)$: Number of {\em all} user arrivals to CA $i$ with contexts in \\ $p \in {\cal P}_T$ by time $t$ except the training phases of CA $i$ \\
\hline
$N^i_{c,p}(t)$: Number of times content $c$ is selected in response to \\ a user arriving to CA $i$ with context in hypercube $p$ by time $t$ \\
\hline
$N^{\textrm{tr}, i}_{j,p}(t)$: Estimate of CA $i$ on the number of user arrivals with \\ contexts in $p$ to CA $j$ from all CAs except the training phases \\ of CA $j$ and exploration, exploitation phases of CA $i$ \\
\hline
$N^i_{j,p}(t)$: Number of users of CA $i$ with contexts in $p$ for which \\ content is requested from CA $j$ at exploration and exploitation \\ phases of CA $i$ by time $t$ \\
\hline
$H_1(t), H_2(t), H_3(t)$: Control functions of DISCOM \\
\hline
${\cal C}^{\textrm{ue}}_{i,p}(t)$: Set of under-explored content in ${\cal C}_i$ \\
\hline
${\cal M}^{\textrm{ct}}_{i,p}(t)$: Set of training candidates of CA $i$ \\
\hline
${\cal M}^{\textrm{ut}}_{i,p}(t)$: Set of CAs under-trained by CA $i$ \\
\hline
${\cal M}^{\textrm{ue}}_{i,p}(t)$: Set of CAs under-explored by CA $i$ \\
\hline
$\bar{r}^i_{k,p}(t)$: Sample man relevance score of action $k$ of CA $i$ at time $t$ \\
\hline
$\hat{\mu}^i_{k,p}(t)$ Estimated net reward of action $k$ of CA $i$ at time $t$ \\
\hline
\end{tabular}
}
}
\caption{Notations used in definition of DISCOM.}
\vspace{-0.3in}
\label{tab:notation2}
\end{table}
}

\begin{theorem}\label{theorem:cos}
When DISCOM is run by all CAs with parameters $H_1(t) = t^{2\gamma/(3\gamma+d)} \log t$, $H_2(t) = C_{\max} t^{2\gamma/(3\gamma+d)} \log t$, $H_3(t) = t^{2\gamma/(3\gamma+d)} \log t$ and $ m_T  = \left\lceil T^{1/(3\gamma + d)} \right\rceil$,\footnote{For a number $r \in \mathbb{R}$, let $\lceil r  \rceil$ be the smallest integer that is greater than or equal to $r$.} we have
\begin{align*}
& R_i(T) \leq  4 (M + C_{\max} +1 ) \beta_2 \\
&+ T^{\frac{2\gamma+d}{3\gamma+d}}
\left( \frac{ 14 L d^{\gamma/2}+12 + 4 (|{\cal C}_i| +M) M C_{\max} \beta_2    }{(2\gamma+d)/(3\gamma+d)} \right. \\
& \left. + 2^{d+1} Z_i \log T \right) \\
&+ T^{\frac{d}{3\gamma+d}} 2^{d+1} ( |{\cal C}_i| + 2 (M-1)  ),
\end{align*}
i.e., $R_i(T) = \tilde{O} \left(M C_{\max} T^{\frac{2\gamma+d}{3\gamma+d}} \right)$,\footnote{$\tilde{O}(\cdot)$ is the Big-O notation in which the terms with logarithmic growth rates are hidden.}
where $Z_i = |{\cal C}_i| + (M-1)(C_{\max}+1)$.
\end{theorem}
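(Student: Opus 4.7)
The plan is to decompose $R_i(T)$ across the hypercubes of the partition ${\cal P}_T$, and within each hypercube $p$ into four parts: (a) the Lipschitz discretization bias from replacing $x_i(t)$ by a representative $\bar x_p \in p$, which by Assumption~\ref{ass:lipschitz2} costs at most $L(\sqrt{d}/m_T)^\gamma$ per step; (b) regret from CA $i$'s training rounds in $p$; (c) regret from CA $i$'s exploration rounds in $p$; and (d) exploitation regret from choosing an action that is not near-optimal in $p$. Since the instantaneous reward lies in $[-1,1]$, each deviation step contributes at most a constant, so (b) and (c) reduce to counting.

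For (b)--(c), the algorithm's counters are capped by the control functions: $N^{\textrm{tr},i}_{j,p} \leq H_2(T)+1$, $N^i_{j,p} \leq H_3(T)+1$, and $N^i_{c,p} \leq H_1(T)+1$. Summing over all arms in ${\cal K}_i$ (at most $|{\cal C}_i|+M-1$) and all $(m_T)^d$ hypercubes, training plus exploration contribute $O((|{\cal C}_i|+M)(m_T)^d H(T)) = O(T^{(2\gamma+d)/(3\gamma+d)} \log T)$. For (a), summing $L(\sqrt{d}/m_T)^\gamma$ across $T$ steps with $m_T = \lceil T^{1/(3\gamma+d)}\rceil$ also yields $O(L d^{\gamma/2} T^{(2\gamma+d)/(3\gamma+d)})$, matching the leading $T^{(2\gamma+d)/(3\gamma+d)}$ terms in the theorem. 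The lower-order $T^{d/(3\gamma+d)}$ term comes from the initial per-hypercube visits needed to enter exploitation, scaled by $2^d$ from the edge effects of ${\cal P}_T$.

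Part (d) is the technical core and the main obstacle. For each $p$, I would define the set of near-optimal actions whose per-cube expected net reward is within $O(L d^{\gamma/2} m_T^{-\gamma})$ of the best, so that exploiting any such action contributes only at the discretization-bias level. A Chernoff--Hoeffding argument then shows that for own content $c \in {\cal C}_i$, once $H_1(t) = t^{2\gamma/(3\gamma+d)} \log t$ samples have been accumulated in $p$, the tail probability that $\bar r^i_{c,p}$ deviates from its mean by more than the tolerance is $O(t^{-2})$. The subtle case is $k = j \in {\cal M}_{-i}$: when CA $i$ exploits CA $j$, the returned content is itself the output of (\ref{eqn:maximizer2}), so a failure decomposes into (i) CA $j$ returning a sub-maximal content from ${\cal C}_j$, or (ii) $\bar r^i_{j,p}$ being far from the mean of the content returned by CA $j$. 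The choice $H_2(t) = C_{\max} H_1(t)$ is designed precisely so that by the time CA $i$ leaves its training phase for $j$, CA $j$ has at least $H_1(t)$ exploration samples for each of its contents in $p$; a union bound over $|{\cal C}_j| \leq C_{\max}$ contents preserves an $O(t^{-2})$ failure tail for (i), and the same concentration argument with $H_3(t)$ handles (ii). Summing these tails across $t$ collapses via $\sum_{t \geq 1} t^{-2} = \beta_2$ into the constants $4(M+C_{\max}+1)\beta_2$ and $4(|{\cal C}_i|+M)MC_{\max}\beta_2$ of the stated bound, while the near-optimal exploitation bias accumulates to $O(L d^{\gamma/2} T^{(2\gamma+d)/(3\gamma+d)})$. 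The slicing $m_T = \lceil T^{1/(3\gamma+d)}\rceil$ is exactly what balances $(m_T)^d H(T)$ against $T \cdot m_T^{-\gamma}$, yielding the overall $\tilde O(M C_{\max} T^{(2\gamma+d)/(3\gamma+d)})$ rate.
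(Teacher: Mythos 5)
Your decomposition (discretization bias, training, exploration, exploitation error) mirrors the paper's split of $R_i(T)$ into $\mathrm{E}[R^e_i(T)]+\mathrm{E}[R^s_i(T)]+\mathrm{E}[R^n_i(T)]$, and your counting of training/exploration slots over the $(m_T)^d$ hypercubes and the balancing of $(m_T)^d H(T)$ against $T\, m_T^{-\gamma}$ are exactly right. However, the exploitation analysis as you describe it has two gaps. First, you define near-optimality by a \emph{static} gap $O(Ld^{\gamma/2}m_T^{-\gamma})=O(T^{-\gamma/(3\gamma+d)})$, yet at time $t<T$ the control functions only guarantee $t^{z}\log t$ samples per action, so the Hoeffding confidence radius is of order $t^{-z/2}$, which exceeds your tolerance for $t<T$; the $O(t^{-2})$ tail you invoke fails for actions whose gap lies between $T^{-z/2}$ and $t^{-z/2}$. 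The paper avoids this by making the suboptimality threshold time-varying, $(4Ld^{\gamma/2}+6)\,t^{-z/2}$, so that the per-step near-optimal bias is $O(t^{-z/2})$ and still sums to $O(T^{1-z/2})$.

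Second, and more substantively, your treatment of actions $j\in{\cal M}_{-i}$ misses the sample-contamination problem. The estimate $\bar r^i_{j,p}(t)$ averages feedback from \emph{all} past exploration and exploitation calls to CA $j$, including those at which CA $j$ happened to return a suboptimal content from ${\cal C}_j$; even after training, each such call is suboptimal with probability $O(C_{\max}t^{-2})$, so the expected number of contaminated samples in ${\cal E}^i_{j,p}(t)$ is bounded by a constant ($2C_{\max}\beta_2$), but the realized number is random. The paper controls this with the event ${\cal B}^i_{j,p}(t)$ that at most $t^{\phi}$ (with $\phi=z/2$) samples are contaminated and bounds its complement via Markov's inequality, obtaining a failure probability of order $MC_{\max}\beta_2\, t^{-z/2}$ --- which sums to $O(T^{1-z/2})$, not to a constant. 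This is precisely why the coefficient $4(|{\cal C}_i|+M)MC_{\max}\beta_2$ in the theorem multiplies $T^{(2\gamma+d)/(3\gamma+d)}$ rather than appearing as an additive constant; your claim that these tails ``collapse via $\sum_t t^{-2}=\beta_2$'' into that coefficient misplaces the term and hides the missing Markov step. With a time-varying threshold and this contamination bound added, your outline matches the paper's proof.
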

\begin{proof}
\jremove{The proof is given Appendix \ref{app:mainproof}.}
\aremove{The proof is given in our online technical report \cite{tekinTMM}.}
\end{proof}

For any $d>0$ and $\gamma>0$, the regret given in Theorem \ref{theorem:cos} is sublinear in time (or number of user arrivals). This guarantees that the regret per-user, i.e., the time averaged regret, converges to $0$
%\begin{align}
($\lim_{T \rightarrow \infty} \mathrm{E} [R_i(T)]/T = 0$). 
%\end{align}
It is also observed that the regret increases in the dimension $d$ of the context. 
By Assumption \ref{ass:lipschitz2}, a context is similar to another context if they are similar in each dimension, hence number of hypercubes in the partition ${\cal P}_T$ increases with $d$. 

In our analysis of the regret of DISCOM we assumed that $T$ is fixed and given as an input to the algorithm. DISCOM can be made to run independently of the final time $T$ by using a standard method called {\em the doubling trick} (see, e.g., \cite{slivkins2009contextual}). The idea is to divide time into rounds with geometrically increasing lengths and run a new instance of DISCOM at each round. For instance, consider rounds $\tau \in \{1,2,\ldots\}$, where each round has length $2^{\tau}$. Run a new instance of DISCOM at the beginning of each round with time parameter $2^\tau$. This modified version will also have $\tilde{O}\left(T^{(2\gamma+d)/(3\gamma+d)}\right)$ regret. 

\newc{Maximizing the satisfaction of an individual user is as important as maximizing the overall satisfaction of all users. 
The next corollary shows that by using DISCOM, CAs guarantee that their users will almost always be provided with the best content available within the entire content network.}
\newc{
\begin{corollary} \label{cor:confidence}
Assume that DISCOM is run with the set of parameters given in Theorem \ref{theorem:cos}. When DISCOM is in exploitation phase for CA $i$, we have
\begin{align}
& \mathrm{P} ( \mu^i_{a_i(t)}(x_i(t)) < \mu^i_{k^*_i(x_i(t))}(x_i(t))  - \delta_T ) \notag \\
& \leq \frac{2 |{\cal K}_i|}{t^2} + \frac{2 |{\cal K}_i| M C_{\max} \beta_2}{t^{\gamma/(3\gamma+d)}}. \notag
\end{align}
where $\delta_T = (6Ld^{\gamma/2} + 6) T^{-\gamma/(3\gamma+d)}$.
\end{corollary}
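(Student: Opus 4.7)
The proof plan is to show that whenever CA $i$ in an exploitation phase selects an action $a_i(t)$ that is worse than $k^*_i(x_i(t))$ by more than $\delta_T$, at least one concentration event must have failed, and to bound each such event.

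First, I would invoke the exploitation precondition to guarantee sufficient sampling: since $a_i(t)$ is selected by exploitation at hypercube $p = p_i(t)$, every $k \in \mathcal{K}_i$ has been sampled at least $H_1(t)$ or $H_3(t) = t^{2\gamma/(3\gamma+d)} \log t$ times, and every $j \in \mathcal{M}_{-i}$ has $N^{\mathrm{tr}, i}_{j,p}(t) > H_2(t)$. By Assumption \ref{ass:lipschitz2}, since the diameter of a hypercube in $\mathcal{P}_T$ is $\sqrt{d}/m_T$, the Lipschitz bias within $p$ is bounded by $L d^{\gamma/2} m_T^{-\gamma} \leq L d^{\gamma/2} T^{-\gamma/(3\gamma+d)}$, so $|\mathrm{E}[\bar{r}^i_{c,p}(t)] - \pi_c(x_i(t))| \leq L d^{\gamma/2} T^{-\gamma/(3\gamma+d)}$ for each $c \in \mathcal{C}_i$. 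Applying the Hoeffding inequality to $\bar{r}^i_{k,p}(t)$ with tolerance $T^{-\gamma/(3\gamma+d)}$ and the sample count above yields roughly $2\exp(-2\log t) = 2/t^2$ per action, giving the first term $2|\mathcal{K}_i|/t^2$ after a union bound over $|\mathcal{K}_i|$ actions.

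For matching actions $k = j \in \mathcal{M}_{-i}$, the conditional mean of $\bar{r}^i_{j,p}(t)$ is additionally biased, because at each past cooperation instance $s$ at which CA $i$ called CA $j$, the content $b_{j,i}(s)$ that CA $j$ returned may have been a suboptimal element of $\mathcal{C}_j$. I would use an auxiliary lemma drawn from the proof of Theorem \ref{theorem:cos} that bounds the per-step exploitation-mistake probability of CA $j$'s cooperative rule by a $2|\mathcal{K}_j|/s^2$-type expression; summing over past cooperation times $s$ generates a $\beta_2 = \sum_{s=1}^\infty 1/s^2$ factor, and accounting for the at most $M C_{\max}$ content/CA combinations contributes the second term $2|\mathcal{K}_i| M C_{\max} \beta_2 / t^{\gamma/(3\gamma+d)}$. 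Here the training phase is essential: the choice $H_2(t) = C_{\max} H_1(t)$ guarantees that by the time CA $i$ starts trusting CA $j$'s estimates, each content in $\mathcal{C}_j$ has itself been explored at least $H_1(t)$ times in $p$ (via pigeonhole), so the cooperation-mistake probability actually decays at the Hoeffding rate.

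Finally, I would assemble the pieces: the Lipschitz slack contributes $2 L d^{\gamma/2} T^{-\gamma/(3\gamma+d)}$ (one for $k^*_i$ and one for $a_i$), the Hoeffding slack contributes $2 T^{-\gamma/(3\gamma+d)}$, and the cooperation slack contributes further constant multiples of $T^{-\gamma/(3\gamma+d)}$; a suboptimality gap exceeding $\delta_T = (6L d^{\gamma/2} + 6) T^{-\gamma/(3\gamma+d)}$ then forces at least one of the concentration events to fail, and the coefficients $6L d^{\gamma/2}$ and $6$ are chosen to comfortably dominate the sum of all these contributions. The main obstacle is precisely the cooperation-error step: one must quantify how CA $j$'s past mistakes bias CA $i$'s estimate $\bar{r}^i_{j,p}(t)$, which requires a recursive application of the very concentration argument one is proving, closed by the training control function $H_2$ being tuned so that the recursion does not blow up.
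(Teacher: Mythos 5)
Your overall skeleton matches the paper's proof. The paper defines the set ${\cal L}^i_{p}(t)$ of suboptimal actions via a threshold $(4Ld^{\gamma/2}+6)t^{-z/2}$ on the hypercube-wise gap $\underline{\mu}^i_{k^*_i(p),p}-\overline{\mu}^i_{k,p}$, observes that any exploited action outside this set is within $\delta_T$ of the optimum at $x_i(t)$ once the Lipschitz slack $L d^{\gamma/2} m_T^{-\gamma}$ is folded in, and then union-bounds the probability of selecting an action inside ${\cal L}^i_{p_i(t)}(t)$ using the per-action bound $\mathrm{P}({\cal V}^i_k(t),{\cal W}^i(t)) \le 2t^{-2} + 2MC_{\max}\beta_2 t^{-\gamma/(3\gamma+d)}$ already established in the proof of Lemma \ref{lemma:suboptimal1}. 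Your first term (Hoeffding plus union bound over $|{\cal K}_i|$ actions) and your slack accounting reproduce this faithfully.

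The gap is in how you obtain the second term. As described, summing CA $j$'s per-call cooperative-mistake probabilities over past calls yields only that the \emph{expected number} of contaminated samples in ${\cal E}^i_{j,p}(t)$ is bounded by a constant times $\beta_2$; by itself this gives a constant probability bound, not one decaying like $t^{-\gamma/(3\gamma+d)}$, so the stated second term does not follow from the mechanism you give. The paper closes this with a device you omit: it introduces the event ${\cal B}^i_{j,p}(t)$ that at most $t^{\phi}$ samples (with $\phi=z/2$) are contaminated, applies Markov's inequality to get $\mathrm{P}({\cal B}^i_{j,p}(t)^c,{\cal W}^i(t)) \le \mathrm{E}[X^i_{j,p}(t)]\,t^{-\phi} \le 2|{\cal C}_j|\beta_2 t^{-z/2}$, and notes that on ${\cal B}^i_{j,p}(t)$ the contamination shifts the sample mean by at most $t^{\phi-z}=t^{-z/2}$, which is absorbed into $H_t$ and hence into $\delta_T$. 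Relatedly, the difficulty you flag at the end --- that quantifying CA $j$'s cooperative mistakes requires a recursive application of the concentration argument being proved --- is not present in the paper's argument: $\textrm{DISCOM}_{\textrm{coop}}$ only exploits a content $c\in{\cal C}_j$ after $N^j_{c,p}$ exceeds $H_1(t)$, so $\mathrm{P}(\Xi^i_{j,p}(t'))\le 2|{\cal C}_j|(t')^{-2}$ follows from a direct Chernoff--Hoeffding bound on CA $j$'s own sample means of its own content, with no recursion to close.
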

\begin{proof}
\jremove{The proof is given Appendix \ref{app:confidence}.}
\aremove{The proof is given in our online technical report \cite{tekinTMM}.}
\end{proof}
}
\newc{
\begin{remark}\label{rem:differentiable}
\textbf{(Differential Services)} 
Maximizing the performance for an individual user is particularly important for providing {\em differential services} based on the types of the users. For instance, a CA may want to provide higher quality recommendations to a subscriber (high type user) who has paid for the subscription compared to a non-subscriber (low type user). To do this, the CA can exploit the best content for the subscribed user, while perform exploration on a different user that is not subscribed. 
\end{remark}
}

\vspace{-0.1in}
\section{Regret When Feedback is Missing}\label{sec:missingfeedback}

When analyzing the performance of DISCOM, we assumed that the users always provide a feedback: like or dislike.
However, in most of the online content aggregation platforms user feedback is not always available. 
In this section we consider the effect of missing feedback on the performance of the proposed algorithm. 
We assume that each user gives a feedback with probability $p_r$ (which is unknown to the CAs). If the user at time $t$ does not give feedback, we assume that DISCOM does not update its counters. This will result in a larger number of trainings and explorations compared to the case when feedback is always available. 
The following theorem gives an upper bound on the regret of DISCOM for this case.

\begin{theorem} \label{thm:nolabel}
Let the DISCOM algorithm run with parameters $H_1(t) = t^{2\gamma/(3\gamma+d)} \log t$, $H_2(t) = C_{\max} t^{2\gamma/(3\gamma+d)} \log t$, $H_3(t) = t^{2\gamma/(3\gamma+d)} \log t$, and $ m_T  = \left\lceil T^{1/(3\gamma + d)} \right\rceil$. Then, if a user reveals its feedback with probability $p_r$, we have for CA $i$,
\begin{align*}
& R_i(T) \leq  4 (M + C_{\max} +1 ) \beta_2 \\
&+ T^{\frac{2\gamma+d}{3\gamma+d}}
\left( \frac{ 14 L d^{\gamma/2}+12 + 4 (|{\cal C}_i| +M) M C_{\max} \beta_2    }{(2\gamma+d)/(3\gamma+d)} \right. \\
& \left. + \frac{2^{d+1} Z_i}{p_r} \log T \right) \\
&+ T^{\frac{d}{3\gamma+d}} 2^{d+1} \frac{|{\cal C}_i| + 2 (M-1)} {p_r}  ,
\end{align*}
i.e., 
%\begin{align}
$R_i(T) = \tilde{O} \left(M C_{\max} T^{\frac{2\gamma+d}{3\gamma+d}}/p_r \right)$  ,    \notag
%\end{align}
where $Z_i = |{\cal C}_i| + (M-1)(C_{\max}+1)$, $\beta_a := \sum_{t=1}^{\infty} 1/t^a$.
\end{theorem}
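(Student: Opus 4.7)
The plan is to mirror the proof of Theorem 1, treating the missing-feedback model as a perturbation that only affects the expected number of rounds spent in the training and exploration phases, while leaving the accuracy of the sample-mean estimates driving exploitation intact. I would decompose the regret of CA $i$ as in Theorem 1 into (i) a Lipschitz bias term of order $L d^{\gamma/2} m_T^{-\gamma}$ per user from using piecewise-constant estimates on the hypercubes of edge length $1/m_T$, (ii) a training regret, (iii) an exploration regret, and (iv) a suboptimal-exploitation regret controlled by a Hoeffding-type concentration bound on each $\bar{r}^i_{k,p}(t)$.

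The key observation is that the exploitation component carries over verbatim. DISCOM updates each estimator $\bar{r}^i_{k,p}(t)$ only in rounds where feedback is actually received, and the counters $N^i_{c,p}$, $N^{\textrm{tr},i}_{j,p}$, $N^i_{j,p}$ track exactly those same rounds. Therefore, whenever CA $i$ enters exploitation in hypercube $p$ at time $t$, the number of samples used to form $\bar{r}^i_{k,p}(t)$ is at least $H_1(t)$, $H_2(t)$, or $H_3(t)$, which is precisely the condition that the Hoeffding step in Theorem 1's proof requires. Consequently, the ``suboptimal exploitation'' contribution and, in particular, all the $T^{(2\gamma+d)/(3\gamma+d)}$ terms not carrying the $Z_i$ factor are inherited without change.

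What does change is the expected number of training and exploration rounds per hypercube. During an exploration phase for content $c \in \mathcal{C}_i$ in hypercube $p$, the counter $N^i_{c,p}$ is incremented with probability $p_r$ independently at each visit, so the total number of such visits needed to drive the counter above the threshold $H_1(T)$ is stochastically dominated by a negative binomial random variable with $\lceil H_1(T) \rceil$ required successes and success probability $p_r$, whose expectation is at most $\lceil H_1(T) \rceil / p_r$. The identical argument applies to the training counter $N^{\textrm{tr},i}_{j,p}$ with threshold $H_2(T)$ and to the other-CA exploration counter $N^i_{j,p}$ with threshold $H_3(T)$. Summing these inflated per-hypercube bounds over the $(m_T)^d$ hypercubes and across the $|\mathcal{C}_i|$ contents and $M-1$ neighbors reproduces the combinatorial pre-factors of Theorem 1 scaled by $1/p_r$, yielding the $2^{d+1} Z_i \log T / p_r$ term in the $T^{(2\gamma+d)/(3\gamma+d)}$ coefficient and the $2^{d+1}(|\mathcal{C}_i|+2(M-1))/p_r$ term in the $T^{d/(3\gamma+d)}$ coefficient.

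The main bookkeeping subtlety is handling the training-count update $N^{\textrm{tr},i}_{j,p} = N^j_p - N^i_{j,p}$, since $N^j_p$ is advanced by feedback from users routed through CA $j$ by several neighbors and the missing-feedback thinning must be applied consistently across all of those arrivals. Because the per-arrival feedback indicators are i.i.d.\ Bernoulli$(p_r)$ across CAs and rounds, the same negative-binomial bound applies to the stopping time at which this shared counter first crosses $H_2(T)$; hence the training-phase accounting is valid. Substituting the inflated training and exploration counts into the regret decomposition from Theorem 1 then yields the stated bound, with only the $Z_i \log T$ and $|\mathcal{C}_i|+2(M-1)$ coefficients acquiring the $1/p_r$ factor and all other terms unchanged.
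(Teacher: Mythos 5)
Your proposal is correct and follows essentially the same route as the paper's proof: both arguments observe that the exploitation-phase bounds (the analogues of Lemmas \ref{lemma:suboptimal1} and \ref{lemma:nearoptimal}) are unaffected because counters and sample means are updated only on rounds with observed feedback, and both inflate the expected number of training and exploration slots by a factor of $1/p_r$ via a negative-binomial waiting-time argument. The only cosmetic difference is that the paper aggregates all required feedback observations into a single quantity $A_i(T) = Z_i T^{2\gamma/(3\gamma+d)}\log T + (|{\cal C}_i|+2(M-1))$ and applies the negative-binomial bound once, whereas you apply it counter by counter before summing.
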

\begin{proof}
\jremove{The proof is given Appendix \ref{app:theorem2}.}
\aremove{The proof is given in our online technical report \cite{tekinTMM}.}
\end{proof}

From Theorem \ref{thm:nolabel}, we see that missing feedback does not change the time order of the regret. However, the regret is scaled with $1/p_r$, which is the expected number of users required for a single feedback.

\comment{
\vspace{-0.15in}
\section{A distributed adaptive context partitioning algorithm} \label{sec:zooming}

In most of the real-world applications of online distributed data mining, the data can be both temporally and spatially correlated and data arrival patterns can be non-uniform.
%
%\rev{For example, cyber-attacks in a network usually follow a bursty Markovian model where a long sequence of time slots with no attacks is followed by a long sequence of time slots with attacks (e.g., KDD Cup 1999 data set from UCI KDD archive).} 
% 
Intuitively it seems that the loss due to choosing a suboptimal arm for a context can be further minimized if the algorithm inspects the regions of space with large number of data (hence context) arrivals more carefully. Next, we do this by introducing the {\em distributed context zooming} algorithm (DCZA).

\vspace{-0.2in}
\subsection{The DCZA algorithm}

In the previous section the finite partition of hypercubes ${\cal P}_T$ is \rev{formed by CoS} at the beginning by choosing the slicing parameter $m_T$. \rev{Differently, DCZA} adaptively generates the partition by learning from the context arrivals. Similar to CoS, DCZA estimates the qualities of the arms for each set in the partition.
DCZA starts with a single hypercube which is the entire context space ${\cal X}$, then divides the space into finer regions and explores them as more data arrives. In this way, the algorithm focuses on parts of the space in which there is large number of data arrivals. 
The idea of zooming into the regions of context space with high arrivals is previously addressed in \cite{slivkins2009contextual} by activating balls with smaller radius over time. However, the results in \cite{slivkins2009contextual} cannot be generalized to a distributed setting because each learner may have different active balls for the same context at the same time. Our proposed algorithm uses a more structured zooming with hypercubes to address the distributed nature of our problem.
Basically, the learning algorithm for learner $i$ should zoom into the regions of space with large number of data arrivals, but it should also persuade other learners to zoom to the regions of the space where learner $i$ has a large number of data arrivals. The pseudocode of DCZA is given in Figure \ref{fig:DDZA}, and the training, exploration, exploitation and initialization modules are given in Figures \ref{fig:mtrain} and \ref{fig:minitialize}.

For simplicity, in this section let ${\cal X} = [0,1]^d$, which is known by all learners. In principle, DCZA will work for any ${\cal X}$ that is bounded given that DCZA knows a hypercube $C_U$ which covers ${\cal X}$, i.e., ${\cal X} \subset C_U$. We call a $d$-dimensional hypercube which has sides of length $2^{-l}$ a level $l$ hypercube. Denote the partition of ${\cal X}$ generated by level $l$ hypercubes by ${\cal P}_l$. We have $|{\cal P}_l| = 2^{ld}$. Let ${\cal P} := \cup_{l=0}^\infty {\cal P}_l$ denote the set of all possible hypercubes. Note that ${\cal P}_0$ contains only a single hypercube which is ${\cal X}$ itself. 
At each time step, DCZA keeps a set of hypercubes that cover the context space which are mutually exclusive. We call these hypercubes {\em active} hypercubes, and denote the set of active hypercubes at time $t$ by ${\cal A}_t$.  Clearly, we have $\cup_{C \in {\cal A}_t} C = {\cal X}$. Denote the active hypercube that contains $x_t$ by $C_t$. The arm chosen at time $t$ only depends on the previous observations and actions taken on $C_t$. 
Let $N^i_C(t)$ be the number of times context arrives to hypercube $C$ in learner $i$ by time $t$. Once activated, a level $l$ hypercube $C$ will stay active until the first time $t$ such that $N^i_C(t) \geq A 2^{pl}$, where $p>0$ and $A>0$ are parameters of DCZA. After that, DCZA will divide $C$ into $2^d$ level $l+1$ hypercubes. 

When context $x_t \in C \in {\cal A}_t$ arrives, DCZA either explores or exploits one of the arms in ${\cal K}_i$. Similar to CoS, for each arm in ${\cal F}_i$, DCZA have a single exploration control function $D_1(t)$, while for each arm in ${\cal M}_{-i}$, DCZA have training and exploration control functions $D_2(t)$ and $D_3(t)$ that controls when to train, explore or exploit. 
For an arm $k \in {\cal F}_i$, all the observations are used by learner $i$ to estimate the expected reward of that arm. This estimation is different for $k \in {\cal M}_{-i}$. This is because learner $i$ cannot choose the classification function that is used by learner $k$. If the estimated rewards of classification functions of learner $k$ are inaccurate, $i$'s estimate of $k$'s reward will be different from the expected reward of $k$'s optimal classification function. 
Therefore, learner $i$ uses the rewards from learner $k$ to estimate the expected reward of learner $k$ only if it believes that learner $k$ estimated the expected rewards of its own classification functions accurately. In order for learner $k$ to estimate the rewards of its own classification functions accurately, if the number of data arrivals to learner $k$ in set $C$ is small, learner $i$ {\em trains} learner $k$ by asking it to classify $i$'s data and returns the true label to learner $k$ to make it learn from its actions. 
In order to do this, learner $i$ keeps two counters $N^i_{1,k,C}(t)$ and $N^i_{2,k,C}(t)$, which are initially set to $0$. At the beginning of each time step for which $N^i_{1,k,C}(t) \leq D_2(t)$, learner $i$ asks $k$ to send it $N^k_C(t)$ which is the number of data arrivals to learner $k$ from the activation of $C$ to time $t$ including the data sent by learner $i$. If $C$ is not activated by $k$ yet, then it sends $N^k_C(t) = 0$ and activates the hypercube $C$. Then learner $i$ sets $N^i_{1,k,C}(t) = N^k_C(t) - N^i_{2,k,C}(t)$ and checks again if $N^i_{1,k,C}(t) \leq D_2(t)$. If so, then it trains learner $k$ and updates $N^i_{1,k,C}(t)$. If $N^i_{1,k,C}(t) > D_2(t)$, this means that learner $k$ is trained enough so it will almost always select its optimal classification function when called by $i$. Therefore, $i$ will only use observations when $N^i_{1,k,C}(t) > D_2(t)$ to estimate the expected reward of $k$. To have sufficient observations from $k$ before exploitation, $i$ explores $k$ when $N^i_{1,k,C}(t) > D_2(t)$ and $N^i_{2,k,C}(t) \leq D_3(t)$ and updates $N^i_{2,k,C}(t)$. For simplicity of notation we let $N^i_{k,c} := N^i_{2,k,C}(t)$ for $k \in {\cal M}_{-i}$.
Let
\begin{align*}
&{\cal S}^i_C(t) := \left\{ k \in {\cal F}_i \textrm{ such that } N^i_{k,C}(t) \leq D_1(t)  \textrm{ or } k \in {\cal M}_{-i} \right. \\
&\left. \textrm{ such that } N^i_{1,k,C}(t) \leq D_2(t) \textrm{ or } N^i_{2,k,C}(t) \leq D_3(t)   \right\}.
\end{align*} 
If $S^i_C(t) \neq \emptyset$ then DCZA randomly selects an arm in $S^i_C(t)$ to explore, while if $S^i_C(t) = \emptyset$, DCZA selects an arm in 
%
%\begin{align*}
$\argmax_{k \in {\cal K}_i} \bar{r}^i_{k,C_t}(t)$,
%\end{align*}
%
where $\bar{r}^i_{k,C_t}(t)$ is the sample mean of the rewards collected from arm $k$ in $C_t$ from the activation of $C_t$ to time $t$ for $k \in {\cal F}_i$, and it is the sample mean of the rewards collected from \rev{exploration and exploitation} steps of arm $k$ in $C_t$ from the activation of $C_t$ to time $t$ for $k \in {\cal M}_{-i}$.

\add{\vspace{-0.15in}}
\subsection{Analysis of the regret of DCZA}

We analyze the regret of DCZA under different context arrivals. In our first setting, we consider the worst-case scenario where there are no arrivals to learners other than $i$. In this case learner $i$ should train all the other learners in order to learn the optimal classification scheme. In our second setting, we consider the best scenario where data arrival to each learner is the same. In this case $i$ does not need to train other learners and the regret is much smaller. 
%Then in our third setting we introduce correlation between the classifiers and assume that at each time step data arriving to classifier $i$ and classifier $k$ will be same with probability $q_k$. 

We start with a simple lemma which gives an upper bound on the highest level hypercube that is active at any time $t$.
\begin{lemma}\label{lemma:levelbound}
All the active hypercubes ${\cal A}_t$ at time $t$ have at most a level of 
%
%\begin{align*}
$(\log_2 t)/p + 1$.
%\end{align*}
%
\end{lemma}
\remove{
\begin{proof}
Let $l+1$ be the level of the highest level active hypercube. We must have
\begin{align*}
A \sum_{j=0}^{l} 2^{pj} < t,
\end{align*}
otherwise the highest level active hypercube will be less than $l+1$. We have for $t/A >1$,
\begin{align*}
A \frac{2^{p(l+1)}-1}{2^p-1} < t
\Rightarrow 2^{pi} < \frac{t}{A} 
\Rightarrow i < \frac{\log_2 t}{p}.
\end{align*}
\end{proof}
}

In order to analyze the regret of DCZA, we first bound the regret in each level $l$ hypercube. We do this based on the worst-case and identical data arrival cases separately. The following lemma bounds the regret due to explorations in a level $l$ hypercube. 

\begin{lemma} \label{lemma:adapexplore}
Let $D_1(t) = D_3(t) =  t^z \log t $ and $D_2(t) = F_{\max} t^z \log t $. Then, for any level $l$ hypercube the regret due to explorations by time $t$ is bounded above by
\add{
$(|{\cal F}_i| + (M-1)(F_{\max} + 1)) (t^z \log t +1)$.
}
\remove{
\begin{align}
(|{\cal F}_i| + (M-1)(F_{\max} + 1)) (t^z \log t +1). \label{eqn:adapexplore1}
\end{align}
}
When the data arriving to each learner is identical and $|{\cal F}_i| \leq |{\cal F}_k|$, $k \in {\cal M}_{-i}$, regret due to explorations by time $t$ is bounded above by
\add{
$(|{\cal F}_i| + (M-1)) (t^z \log t +1)$.
}
\remove{
\begin{align}
(|{\cal F}_i| + (M-1)) (t^z \log t +1). \label{eqn:adapexplore2}
\end{align}
}
\end{lemma}
\remove{
\begin{proof}
The proof is similar to Lemma \ref{lemma:explorations}. Note that when the data arriving to each learner is the same and $|{\cal F}_i| \leq |{\cal F}_k|$, $k \in {\cal M}_{-i}$, we have $N^i_{1,k,C}(t) > D_2(t)$ for all $k \in {\cal M}_{-i}$ whenever $N^i_{j,C}(t) > D_1(t)$ for all $j \in {\cal F}_i$.
\end{proof}
}
\rev{From Lemma \ref{lemma:adapexplore}, the regret due to explorations increases exponentially with $z$ for each hypercube. 
}

For a level $l$ hypercube $C$, the set of suboptimal arms is given by
\vspace{-0.1in}
\begin{align*}
{\cal L}^i_{C,l,B} := \left\{ k \in {\cal K}_i :  \underline{\mu}_{k^*(C),C} - \overline{\mu}_{k,C} > B L d^{\alpha/2} 2^{-l \alpha} \right\},
\end{align*}
where $B>0$ is a constant.
In the next lemma we bound the regret due to choosing a suboptimal action in the exploitation steps in a level $l$ hypercube. 

\begin{lemma} \label{lemma:suboptimal}
Let ${\cal L}^i_{C,l,B}$, $B = 12/(L d^{\alpha/2}2^{-\alpha}) +2)$ denote the set of suboptimal actions for level $l$ hypercube $C$. When DCZA is run with parameters $p>0$, $2\alpha/p \leq z<1$, $D_1(t) = D_3(t) = t^z \log t$ and $D_2(t) = F_{\max} t^z \log t$, for any level $l$ hypercube $C$, the regret due to choosing suboptimal actions in exploitation steps, i.e., $E[R_{C,s}(T)]$, is bounded above by
\vspace{-0.1in}
\begin{align*}
4 \beta_2 |{\cal F}_i| + 8 (M-1) F_{\max} \beta_2 T^{z/2}/z.
\end{align*}
\vspace{-0.2in}
\end{lemma}
\remove{
\begin{proof}
The proof of this lemma is similar to the proof of Lemma \ref{lemma:suboptimal}, thus some steps are omitted.
Let $\Omega$ denote the space of all possible outcomes, and $w$ be a sample path. The event that the algorithm exploits in $C$ at time $t$ is given by
\begin{align*}
{\cal W}^i_{C}(t) := \{ w : S^i_{C}(t) = \emptyset, x_i(t) \in C, C \in {\cal A}_t  \}.
\end{align*}
Similar to the proof of Lemma \ref{lemma:suboptimal}, we will bound the probability that the algorithm chooses a suboptimal arm in an exploitation step in $C$, and then bound the expected number of times a suboptimal arm is chosen by the algorithm. Recall that loss in every step can be at most 2. Let ${\cal V}^i_{k,C}(t)$ be the event that a suboptimal action $k$ is chosen. Then
\begin{align*}
E[R_{C,s}(T)] \leq \sum_{t=1}^T \sum_{k \in {\cal L}^i_{C,l,B} } P({\cal V}^i_{k,C}(t), {\cal W}^i_{C}(t)).
\end{align*} 
Let ${\cal B}^i_{k,C}(t)$ be the event that at most $t^{\phi}$ samples in ${\cal E}^i_{k,C}(t)$ are collected from suboptimal classification functions of the $k$-th arm. Obviously for any $k \in {\cal F}_i$, ${\cal B}^i_{k,C}(t) = \Omega$, while this is not always true for $k \in {\cal M}_{-i}$. 
We have
\begin{align}
P \left( {\cal V}^i_{k,C}(t), {\cal W}^i_{C}(t) \right) 
&\leq P \left( \bar{r}^{\textrm{b}}_{k,C}(N^i_{k,C}(t))
\geq \bar{r}^{\textrm{w}}_{k^*(C),C}(N^i_{k^*(C),C}(t))
-  2 t^{\phi-1} , 
\bar{r}^{\textrm{b}}_{k,l}(N^i_{k,C}(t)) < \overline{\mu}_{k,C} + L d^{\alpha/2} 2^{-l\alpha} \right. \notag \\
& \left. + H_t +  2 t^{\phi-1}, 
 \bar{r}^{\textrm{w}}_{k^*(C),C}(N^i_{k^*(C),C}(t)) > \underline{\mu}_{k^*(C),C} - L d^{\alpha/2} 2^{-l\alpha} - H_t,
{\cal W}^i_{C}(t)    \right). \label{eqn:makezero} \\
&+ P \left( \bar{r}^{\textrm{b}}_{k,C}(N^i_{k,C}(t)) \geq \overline{\mu}_{k,C} + H_t, {\cal W}^i_{C}(t) \right) \notag \\
&+ P \left( \bar{r}^{\textrm{w}}_{k^*(C),C}(N^i_{k^*(C),C}(t))  \leq \underline{\mu}_{k^*(C),C} - H_t + 2 t^{\phi-1}, {\cal W}^i_{C}(t) \right) \notag \\
&+ P(({\cal B}^i_{k,C}(t))^c), \notag
\end{align}
where $H_t >0$. In order to make the probability in (\ref{eqn:makezero}) equal to $0$, we need
\begin{align}
4 t^{\phi-1} + 2H_t \leq (B-2) L d^{\alpha/2} 2^{-l\alpha}. \label{eqn:adaptivecondition1}
\end{align}
By Lemma \ref{lemma:levelbound}, (\ref{eqn:adaptivecondition1}) holds when 
\begin{align}
4 t^{\phi-1} + 2H_t \leq (B-2) L d^{\alpha/2} 2^{-\alpha} t^{-\alpha/p}. \label{eqn:adaptivecondition2}
\end{align}
For $H_t = 4 t^{\phi-1}$, $\phi = 1 - z/2$, $z \geq 2\alpha/p$ and $B = 12/(L d^{\alpha/2}2^{-\alpha}) +2)$, (\ref{eqn:adaptivecondition2}) holds by which (\ref{eqn:makezero}) is equal to zero. Also by using a Chernoff-Hoeffding bound we can show that
\begin{align*}
P \left( \bar{r}^{\textrm{b}}_{k,C}(N^i_{k,C}(t)) \geq \overline{\mu}_{k,C} + H_t, {\cal W}^i_{C}(t) \right) \leq e^{-2 (16 \log t)} \leq \frac{1}{t^2},
\end{align*}
and
\begin{align*}
P \left( \bar{r}^{\textrm{w}}_{k^*(C),C}(N^i_{k^*(C),C}(t))  \leq \underline{\mu}_{k^*(C),C} - H_t + 2 t^{\phi-1}, {\cal W}^i_{C}(t) \right) \leq e^{-2 (4 \log t)} \leq \frac{1}{t^2}.
\end{align*}
We also have $P({\cal B}^i_{k,C}(t)^c)=0$ for $k \in {\cal F}_i$ and 
\begin{align*}
P({\cal B}^i_{k,C}(t)^c) &\leq \frac{E[X^i_{k,C}(t)]}{t^\phi} \\
& \leq 2 F_{\max} \beta_2 t^{z/2 - 1}.
\end{align*}
for $k \in {\cal M}_{-i}$, where $X^i_{k,C}(t)$ is the number of times a suboptimal classification function of learner $k$ is selected when learner $i$ calls $k$ in exploration and exploitation phases in an active hypercube $C$ by time $t$. Combining all of these we get
\begin{align*}
P \left( {\cal V}^i_{k,C}(t), {\cal W}^i_{C}(t) \right)  \leq \frac{2}{t^2},
\end{align*}
for $k \in {\cal F}_i$ and
\begin{align*}
P \left( {\cal V}^i_{k,C}(t), {\cal W}^i_{C}(t) \right)  \leq \frac{2}{t^2} + 2 F_{\max} \beta_2 t^{z/2 - 1},
\end{align*}
for $k \in {\cal M}_{-i}$. These together imply that
\begin{align*}
E[R_{C,s}(T)] \leq 4 \beta_2 |{\cal F}_i| + 8 (M-1) F_{\max} \beta_2 \frac{T^{z/2}}{z}.
\end{align*}
\end{proof}
}

\rev{From Lemma \ref{lemma:suboptimal}, we see that the regret due to explorations increases exponentially with $z$ for each hypercube. 
}
In the next lemma we bound the regret due to choosing near optimal arms in a level $l$ hypercube.
\begin{lemma}\label{lemma:adapnearopt}
${\cal L}^i_{C,l,B}$, $B = 12/(L d^{\alpha/2}2^{-\alpha}) +2)$ denote the set of suboptimal actions for level $l$ hypercube $C$. When DCZA is run with parameters $p>0$, $2\alpha/p \leq z<1$, $D_1(t) = D_3(t) = t^z \log t$ and $D_2(t) = F_{\max} t^z \log t$, for any level $l$ hypercube $C$, the regret due to choosing near optimal actions in exploitation steps, i.e., $E[R_{C,n}(T)]$, is bounded above by
\begin{align*}
A B L d^{\alpha/2} 2^{p-\alpha} T^{\frac{p-\alpha}{p}} + 2 (M-1) F_{\max} \beta_2
\end{align*}
\end{lemma}
\remove{
\begin{proof}
Consider a level $l$ hypercube $C$. Let $X^i_{k,C}(t)$ denote the random variable which is the number of times a suboptimal classification function for arm $k \in {\cal M}_{-i}$ is chosen in exploitation steps of $i$ when the context is in set $C \in {\cal A}_t$ by time $t$. Similar to the proof of Lemma \ref{lemma:callother}, we have
\begin{align*} 
E[X^i_{k,C}(t)] \leq 2 F_{\max} \beta_2.
\end{align*}
Thus when a near optimal $k \in {\cal M}_{-i}$ is chosen the contribution to the regret from suboptimal classification functions of $k$ is bounded by $4 F_{\max} \beta_2$. The one-step regret of any near optimal classification function of any near optimal $k \in {\cal M}_{-i}$ is bounded by $2 B L d^{\alpha/2} 2^{-l \alpha}$. The one-step regret of any near optimal classification function $k \in {\cal F}_{i}$ is bounded by $B L d^{\alpha/2} 2^{-l \alpha}$. Since $C$ remains active for at most $A 2^{pl}$ context arrivals, we have
\begin{align*}
E[R_{C,n}(T)] &\leq 2 A B L d^{\alpha/2} 2^{(p-\alpha)l} + 2 (M-1) F_{\max} \beta_2.
%&\leq A B L d^{\alpha/2} 2^{p-\alpha} T^{\frac{p-\alpha}{p}} + 2 (M-1) F_{\max} \beta_2,
\end{align*}
%
%where the last inequality follows from Lemma \ref{lemma:levelbound}.
\end{proof}
}
\rev{From Lemma \ref{lemma:adapnearopt}, we see that the regret due to choosing near optimal actions in each hypercube increses with the parameter $p$ that determines how much the hypercube will remain active, and decreases with $\alpha$.}

Next we combine the results from Lemmas \ref{lemma:adapexplore}, \ref{lemma:suboptimal} and \ref{lemma:adapnearopt}, to obtain our regret bounds. All these lemmas bound the regret for a single level $l$ hypercube. The bounds in Lemmas \ref{lemma:adapexplore} and \ref{lemma:suboptimal} are independent of the level of the hypercube, while the bound in Lemma \ref{lemma:adapnearopt} depends on the level of the hypercube. We can also derive a level independent bound for $E[R_{C,n}(T)]$, but we can get a tighter regret bound by using the level dependent regret bound. 
In order to get the desired regret bound, we need to consider how many hypercubes of each level is formed by DCZA up to time $T$. The number of such hypercubes explicitly depends on the data/context arrival process. We will give regret bounds for the best and worst case context processes. Let the worst case process be the one in which all data that arrived up to time $T$ is uniformly distributed inside the context space, with minimum distance between any two context samples being $T^{-1/d}$. Let the best case process be the one in which all data arrived up to time $T$ is located inside the same hypercube $C$ of level $(\log_2 T)/p +1$. Also let the worst case correlation between the learners be the one in which data only arrives to learner $i$, and the best case correlation between the learners be the one in which the same data arrives to all learners at the same time. The following theorem characterizes the regret under these four possible extreme cases.
}
\comment{
\vspace{-0.1in}
For the worst case arrivals C1 and C2, the time parameter of the regret approaches linear as $d$ increases. This is intuitive since the gains of zooming diminish when data is not concentrated in a region of space. The regret order of DCZA is 
\vspace{-0.2in}
\begin{align*}
O \left( T^{\frac{d+\alpha/2+\sqrt{9\alpha^2 + 8 \alpha d}/2}{d+3 \alpha/2+\sqrt{9\alpha^2 + 8 \alpha d}/2}}  \right),
\end{align*}
while the regret order of CoS is
\vspace{-0.1in}
\begin{align*}
O \left( T^{\frac{d+2\alpha}{d+3 \alpha}}  \right) = 
O \left( T^{\frac{d+\alpha/2+\sqrt{9\alpha^2}/2}{d+3 \alpha/2+\sqrt{9\alpha^2}/2}}  \right).
\end{align*}
This is intuitive since CoS is designed to capture the worst-case arrival process by forming a uniform partition over the context space, while DCZA adaptively learns over time that the best partition over the context space is a uniform one. The difference in the regret order between DCZA and CoS is due to the fact that DCZA starts with a single hypercube and splits it over time to reach the uniform partition which is optimal for the worst case arrival process, while CoS starts with the uniform partition at the beginning. Note that for $\alpha d$ small, the regret order of DCZA is very close to the regret order of CoS.
For C2 which is the best correlation case, the constant that multiplies the highest order term is $|{\cal K}_i|$, while for C1 which is the worst correlation case this constant is $|{\cal F}_i| (M-1) (F_{\max}+1)$ which is much larger. The regret of any intermediate level correlation will lie between these two extremes. 
For the best case arrivals C3 and C4, the time parameter of the regret does not depend on the dimension of the problem $d$. The regret is $O(T^{2/3})$ up to a logarithmic factor independent of $d$, which is always better than the $O(T^{(d+2\alpha)/(d+3\alpha)})$ bound of CoS. The difference between the regret terms of C3 and C4 is similar to the difference between C1 and C2.

Next, we assess the computation and memory requirements of DCZA and compare it with CoS. DCZA needs to keep the sample mean reward estimates of ${\cal K}_i$ arms for each active hypercube. A level $l$ active hypercube becomes inactive if the context arrivals to that hypercube exceeds $A 2^{pl}$. Because of this, the number of active hypercubes at any time $T$ may be much smaller than the number of activated hypercubes by time $T$.
For cases C1 and C2, the maximum number of activated hypercubes is $O(|{\cal K}_i| T^{\frac{d}{d+(3\alpha + \sqrt{9\alpha^2+8\alpha})/2}})$, while for any $d$ and $\alpha$, the memory requirement of CoS is upper bounded by $O(|{\cal K}_i| T^{d/(d+3\alpha)})$. This means that based on the data arrival process, the memory requirement of DCZA can be higher than CoS. 
However, since DCZA only have to keep the estimates of rewards in currently active hypercubes, but not all activated hypercubes, in reality the memory requirement of DCZA can be much smaller than CoS which requires to keep the estimates for every hypercube at all times. Under the best case data arrival given in C3 and C4, at any time step there is only a single active hypercube. Therefore, the memory requirement of DCZA is only $O({\cal K}_i)$, which is much better than CoS. Finally DCZA does not require final time $T$ as in input while CoS requires it. Although CoS can be combined with the doubling trick to make it independent of $T$, the constants that multiply the time order of regret will be large.
}

\comment{
\begin{figure}[htb]
\fbox {
\begin{minipage}{0.95\columnwidth}
{\fontsize{8}{7}\selectfont
{\bf Initialize}(${\cal B}$):
\begin{algorithmic}[1]
\FOR{$C \in {\cal B}$}
\STATE{Set $N^i_C = 0$, $N^i_{k,C}=0$, $\bar{r}_{k,C}=0$ for $C \in {\cal A}, k \in {\cal K}_i$, $N^i_{1,k,C}=0$ for $k \in {\cal M}_{-i}$}
\ENDFOR
\end{algorithmic}
}
\end{minipage}
} \caption{Pseudocode of the initialization module} \label{fig:minitialize}
\add{\vspace{-0.23in}}
\end{figure}
}

\vspace{-0.1in}
\section{Learning Under Dynamic User and Content Characteristics} \label{sec:dynamic}

When the user and content characteristics change over time, the relevance score of content $c$ for a user with context $x$ changes over time. 
In this section, we assume that the following relation holds between the probabilities that a content will be liked with users with similar contexts at two different times $t$ and $t'$. 
\begin{assumption} \label{ass:lipschitz3}
For each $c \in {\cal C}$, there exists $L>0$, $\gamma>0$ such that for all $x,x' \in {\cal X}$, we have
\begin{align}
|\pi_{c,t}(x) - \pi_{c,t'}(x')| \leq L \left(   ||x-x'|| \right)^\gamma  + |t - t'|/T_s    \notag
\end{align}
where $1/T_s > 0$ is the speed of the change in user and content characteristics. We call $T_s$ the {\em stability parameter}.
\end{assumption}

Assumption \ref{ass:lipschitz3} captures the temporal dynamics of content matching which is absent in Assumption \ref{ass:lipschitz2}.
Such temporal variations are often referred to as {\em concept drift} \cite{gao2007appropriate, masud2009integrating}. 
When there is concept drift, a learner should also consider which past information to take into account when learning, in addition to how to combine the past information to learn the best matching strategy. 

The following modification of DISCOM will deal with dynamically changing user and content characteristics by using a time window of past observations in estimating the relevance scores.  
The modified algorithm is called DISCOM with time window (DISCOM-W). This algorithm groups the time slots into rounds $\zeta=1,2,\ldots$ each having a fixed length of $2 \tau_h$ time slots, where $\tau_h$ is an integer called {\em the half window length}.
Some of the time slots in these rounds overlap with each other as given in Fig. \ref{fig:ACAPW}.
 The idea is to keep separate control functions and counters for each round, and calculate the sample mean relevance scores for groups of similar contexts based only on the observations that are made during the time window of that round. 
We call $\eta=1$ the {\em initialization round}. 
The control functions for the initialization round of DISCOM-W is the same as the control functions $H_1(t)$, $H_2(t)$ and $H_3(t)$ of DISCOM whose values are given in Theorem \ref{theorem:cos}.
For the other rounds $\zeta>1$, the control functions depend on $\tau_h$ and are given as
\begin{align}
H^{\tau_h}_1(t) = H^{\tau_h}_3(t) = (t \mod \tau_h +1)^z \log (t \mod \tau_h +1)  \notag
\end{align}
and
\begin{align}
H^{\tau_h}_2(t) = C_{\max} (t \mod \tau_h +1)^z \log (t \mod \tau_h +1) \notag 
\end{align}
for some $0<z<1$. 
Each round $\eta$ is divided into two sub-rounds. Except the initialization round, i.e., $\eta=1$, the first sub-round is called the {\em passive} sub-round, while the second sub-round is called the {\em active} sub-round. For the initialization round both sub-rounds are active sub-rounds. 
In order to reduce the number of trainings and explorations, DISCOM-W has an overlapping round structure as shown in Fig. \ref{fig:ACAPW}. For each round except the initialization round, passive sub-rounds of round $\eta$, overlaps with the active sub-round of round $\eta-1$.
DISCOM-W operates in the same way as DISCOM in each round.
DISCOM-W can be viewed as an algorithm which generates a new instance of DISCOM at the beginning of each round, with the modified control functions.
DISCOM-W runs two different instances of DISCOM at each round.
One of these instances is the active instance based on which content matchings are performed, and the other one is the passive instance which learns through the content matchings made by the active instance. 

 Let the instance of DISCOM that is run by DISCOM-W at round $\eta$ be $\textrm{DISCOM}_{\eta}$. 
The hypercubes of $\textrm{DISCOM}_{\eta}$ are formed in a way similar to DISCOM's.
The input time horizon is taken as $T_s$ which is the stability parameter given in Assumption \ref{ass:lipschitz3}, and the slicing parameter $m_{T_s}$ is set accordingly. $\textrm{DISCOM}_{\eta}$ uses the partition of ${\cal X}$ into $(m_{T_s})^d$ hypercubes denoted by ${\cal P}_{T_s}$.
When all CAs are using DISCOM-W, the matching action selection of CA $i$ only depends on the history of content matchings and feedback observations at round $\eta$. 
If time $t$ is in the active sub-round of round $\eta$, matching action of CA $i \in {\cal M}$ is taken according to $\textrm{DISCOM}_{\eta}$. As a result of the content matching, sample mean relevance scores and counters of both $\textrm{DISCOM}_{\eta}$ and $\textrm{DISCOM}_{\eta+1}$ are updated. Else if time $t$ is in the passive sub-round of round $\eta$, matching action of CA $i \in {\cal M}$ is taken according to $\textrm{DISCOM}_{\eta-1}$ (see Fig. \ref{fig:ACAPW}). As a result of this, sample mean relevance scores and counters of both $\textrm{DISCOM}_{\eta-1}$ and $\textrm{DISCOM}_{\eta}$ are updated.

\begin{figure}
\begin{center}
\includegraphics[width=0.95\columnwidth]{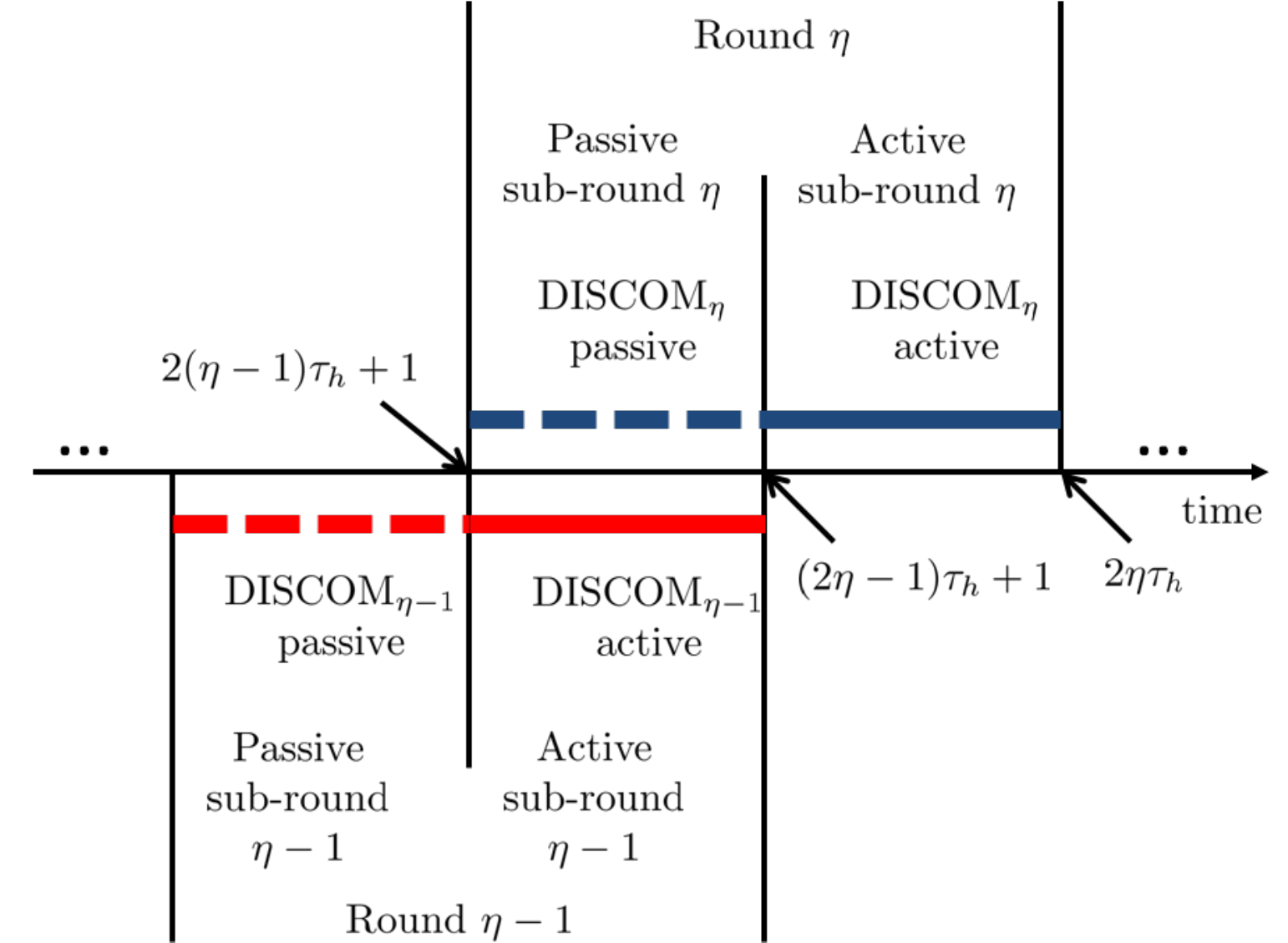}
\vspace{-0.1in}
\caption{Operation of DISCOM-W showing the round structure and the different instances of DISCOM running for each round.} 
\vspace{-0.2in}
\label{fig:ACAPW}
\end{center}
\end{figure}

At the start of a round $\eta$, the relevance score estimates and counters for $\textrm{DISCOM}_{\eta}$ are equal to zero. However, due to the two sub-round structure, 
when the active sub-round of round $\eta$ starts, CA $i$ already has some observations for the context and actions taken in the passive sub-round of that round, hence depending on the arrivals and actions in the passive sub-round, the CA may even start the active sub-round by exploiting, whereas it should have always spent some time in training and exploration if it starts an active sub-round without any past observations (cold start problem).

In this section, due to the concept drift, even though the context of a past user can be similar to the context of the current user, their relevance scores for a content $c$ can be very different. Hence DISCOM-W assumes that a past user is similar to the current user only if it arrived in the current round. Since round length is fixed, it is impossible to have sublinear number of similar context observations for every $t$. 
Thus, achieving sublinear regret under concept drift is not possible. Therefore, in this section we focus on the average regret which is given by 
\begin{align}
& R^{\textrm{avg}}_i(T) := \frac{1}{T} \sum_{t=1}^T \left( \pi_{k_i^*(x_i(t))}(x_i(t)) - d^i_{k_i^*(\boldsymbol{x}_i(t))} \right)  \notag \\
&- \frac{1}{T} \mathrm{E} \left[ \sum_{t=1}^T \left( \mathrm{I} (   y_i(t) = L ) - d^i_{a_i(t)} \right) \right].     \notag
\end{align}

The following theorem bounds the average regret of DISCOM-W.
\begin{theorem}\label{thm:adaptivemain2}
When DISCOM-W is run with parameters
\begin{align}
H^{\tau_h}_1(t) &= H^{\tau_h}_3(t) = (t\textrm{ mod }\tau_h +1)^{\frac{2\gamma}{3\gamma+d}} \log ( t\textrm{ mod }\tau_h +1)     \notag \\
H^{\tau_h}_2(t) &= C_{\max} (t\textrm{ mod }\tau_h +1)^{\frac{2\gamma}{3\gamma+d}} \log (t\textrm{ mod }\tau_h +1)       \notag
\end{align}
$m_{T_s} = \lceil T_s^{\frac{1}{3\gamma+d}} \rceil$ and $\tau_h = \lfloor T_s^{(3\gamma+d)/(4\gamma+d)} \rfloor$,\footnote{For a number $b$, $\lfloor b \rfloor$ denotes the largest integer that is smaller than or equal to $b$.} where $T_s$ is the stability parameter which is given in Assumption \ref{ass:lipschitz3}, the time averaged regret of CA $i$ by time $T$ is 
\begin{align*}
R^{\textrm{avg}}_i(T) = \tilde{O} \left(     T_s^{ \frac{-\gamma}{4\gamma+d}} \right)
\end{align*}
for any $T >0$. Hence DISCOM-W is $\epsilon = \tilde{O} \left(     T_s^{ \frac{-\gamma}{4\gamma+d}} \right)$ approximately optimal in terms of the average reward.
%
%\begin{align*}
%&R_i(T) \leq T^{f_1(\alpha)} \left( 8 D Z_i \log T + \frac{2 B L A 2^{2+\alpha+\sqrt{9\alpha^2 + 8\alpha}}}{2^{\frac{2+\alpha+\sqrt{9\alpha^2 + 8\alpha}}{2}}-1} \right) \\
%&+  T^{f_2(\alpha)} 8(M-1)C_{\max} \beta_2 (3\alpha + \sqrt{9\alpha^2+8\alpha})/(4\alpha) \\
%& + T^{f_3(\alpha)} (8 D Z_i 
%+ 4 (M-1) C_{\max} \beta_2) + 2(1+D) |{\cal F}_i| \beta_2,
%\end{align*}
%
%where $Z_i = |{\cal F}_i| + (M-1) (C_{\max}+1)$, 
%$f_1(\alpha) = (2+\alpha+\sqrt{9\alpha^2+8\alpha})/(2+3\alpha+\sqrt{9\alpha^2+8\alpha})$,
%$f_2(\alpha) = 2\alpha/(3\alpha+\sqrt{9\alpha^2+8\alpha})$,
%$f_3(\alpha) = 2/(2+3\alpha+\sqrt{9\alpha^2+8\alpha})$.
%
\end{theorem}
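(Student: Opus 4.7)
The plan is to reduce the analysis to a per-round regret of a single DISCOM instance, augmented by a bias term that accounts for temporal drift, and then to optimize the round length $\tau_h$. The overlapping two–sub-round structure means that round $\eta$'s active decisions are made by $\textrm{DISCOM}_{\eta}$ only after it has accumulated the feedback of the passive sub-round, so the per-round analysis carries over without a cold-start penalty. The horizon $T$ is simply partitioned into at most $\lceil T/\tau_h \rceil$ active sub-rounds whose regrets are summed and then normalized by $T$.

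\textbf{Step 1 (Per-round regret with drift).} Fix a round $\eta$ of length $\tau_h$. I would mirror the proof of Theorem \ref{theorem:cos} with horizon $\tau_h$ in place of $T$, but replace the pointwise concentration argument for $\bar r^i_{k,p}(t)$ with one that accommodates Assumption \ref{ass:lipschitz3}. For any two times $t,t'$ inside the round and any two contexts $x,x' \in p \in {\cal P}_{T_s}$, the assumption gives
\begin{align*}
|\pi_{c,t}(x) - \pi_{c,t'}(x')| \;\le\; L\, d^{\gamma/2}\, m_{T_s}^{-\gamma} \;+\; \tau_h / T_s.
\end{align*}
Hence the suboptimality-gap threshold used to classify arms as ``near-optimal'' must be inflated by $2\tau_h/T_s$ compared with the static setting. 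The Chernoff--Hoeffding step of Theorem \ref{theorem:cos} still delivers the $O(1/t^2)$ bound on the probability of exploiting a truly suboptimal arm, provided the control functions grow as prescribed.

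\textbf{Step 2 (Summing over rounds and optimizing $\tau_h$).} Repeating the training, exploration, and near-optimal-arm counts from the proof of Theorem \ref{theorem:cos}, with the drift bias tacked on, the per-round regret is of order
\begin{align*}
\tilde{O}\!\left(\tau_h^{(2\gamma+d)/(3\gamma+d)} + \tau_h \bigl(L\, d^{\gamma/2}\, m_{T_s}^{-\gamma} + \tau_h/T_s\bigr)\right).
\end{align*}
Plugging in $m_{T_s} = \lceil T_s^{1/(3\gamma+d)}\rceil$ and summing over $\lceil T/\tau_h\rceil$ rounds and dividing by $T$ yields a time-averaged regret of order
\begin{align*}
\tilde{O}\!\left(\tau_h^{-\gamma/(3\gamma+d)} + T_s^{-\gamma/(3\gamma+d)} + \tau_h/T_s\right).
\end{align*}
Balancing the first and third terms gives $\tau_h \asymp T_s^{(3\gamma+d)/(4\gamma+d)}$, exactly the choice made in the theorem, and both terms become $\tilde{O}\bigl(T_s^{-\gamma/(4\gamma+d)}\bigr)$; since $\gamma/(3\gamma+d) > \gamma/(4\gamma+d)$, the pure spatial term is dominated and absorbed, delivering the claimed $\epsilon = \tilde{O}\bigl(T_s^{-\gamma/(4\gamma+d)}\bigr)$ bound uniformly in $T$.

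\textbf{Main obstacle.} The trickiest step will be propagating the additive $\tau_h/T_s$ drift bias through the concentration inequality while preserving the per-slot misidentification probability of $O(1/t^2)$. In DISCOM the bias of $\bar r^i_{k,p}(t)$ is purely spatial and is killed by the choice of $m_T$; here the sample mean is additionally biased because past feedback collected inside the round corresponds to a slightly different relevance function. I expect the fix to be largely bookkeeping: widen the near-optimality margin by $2\tau_h/T_s$, check that the inflated margin still permits the standard exponential tail bound under the control functions $H_1^{\tau_h}, H_2^{\tau_h}, H_3^{\tau_h}$, and verify that observations imported from the adjoining passive sub-round drift by at most $\tau_h/T_s$ as well, so that their use in $\textrm{DISCOM}_{\eta}$ does not spoil the concentration argument. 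Once this is handled, the remaining terms are inherited from the proof of Theorem \ref{theorem:cos} and the optimization over $\tau_h$ is routine.
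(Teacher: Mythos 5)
Your proposal follows essentially the same route as the paper's proof: a per-round reduction to the static analysis of Theorem \ref{theorem:cos} with the suboptimality margin inflated by an additive drift term of order $\tau_h/T_s$ (the paper uses $4\tau_h/T_s$ since a round spans $2\tau_h$ slots), followed by summing the exploration/training, suboptimal, and near-optimal contributions per round, normalizing, and balancing $\tau_h^{-\gamma/(3\gamma+d)}$ against $\tau_h/T_s$ to obtain $\tau_h \asymp T_s^{(3\gamma+d)/(4\gamma+d)}$ and the rate $\tilde{O}\bigl(T_s^{-\gamma/(4\gamma+d)}\bigr)$. Your exponent bookkeeping matches the paper's optimization over $z$ and $\phi$, so the argument is correct and not materially different.
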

\begin{proof}
\jremove{The proof is given Appendix \ref{app:theorem3}.}
\aremove{The proof is given in our online technical report \cite{tekinTMM}.}
\end{proof}

From the result of this theorem we see that the average regret decays as the stability parameter $T_s$ increases. This is because, DISCOM-W will use a longer time window (round) when $T_s$ is large, and thus can get more observations to estimate the sample mean relevance scores of the matching actions in that round, which will result in better estimates hence smaller number of suboptimal matching action selections. Moreover, the average number of trainings and explorations required decrease with the round length.

\section{Numerical Results} \label{sec:numerical}

In this section we provide numerical results for our proposed algorithms DISCOM and DISCOM-W on real-world datasets.

\subsection{Datasets}

\newc{For all the datasets below, for a CA the cost of choosing a content within the content network and the cost of choosing another CA is set to $0$. Hence, the only factor that affects the total reward is the users' ratings for the contents.}

\newc{
\textbf{Yahoo! Today Module (YTM)} \cite{li2010contextual}: The dataset contains news article webpage recommendations of Yahoo! Front Page. Each instance is composed of (i) ID of the recommended content, (ii) the user's context ($2$-dimensional vector), (iii) the user's click information.  
The user's click information for a webpage/content is associated with the relevance score of that content. It is equal to $1$ if the user clicked on the recommended webpage and $0$ else. The dataset contains $T=70000$ instances and $40$ different types of content. 
We generate $4$ CAs and assign $10$ of the $40$ types of content to each CA's content network. Each CA has direct access to content in its own network, while it can also access to the content in other CAs' content network by requesting content from these CAs. 
Users are divided into four groups according to their contexts and each group is randomly assigned to one of the CAs. Hence, the user arrival processes to different CA's are different.
The performance of a CA is evaluated in terms of the average number of clicks, i.e., click through rate (CTR), of the contents that are matched with its users.
}

\newc{
\textbf{Music Dataset (MD)}: The dataset contains contextual information and ratings (like/dislike) of music genres (classical, rock, pop, rap) collected from 413 students at UCLA.
We generate $2$ CAs each specialized in two of the four music genres. Users among the 413 users randomly arrive to each CA. A CA either recommends a music content that is in its content network or asks another CA, specialized in another music genre, to provide a music item. As a result, the rating of the user for the genre of the provided music content is revealed to the CA. 
The performance of a CA is evaluated in terms of the average number of likes it gets for the contents that are matched with its users.
}

\newc{
\textbf{Yahoo! Today Module (YTM) with Drift (YTMD)}: This dataset is generated from YTM to simulate the scenario where the user ratings for a particular content changes over time. After every $10000$ instances, $20$ contents are randomly selected and user clicks for these contents are set to $0$ (no click) for the next $10000$ instances. For instance, this can represent a scenario where some news articles lose their popularity a day after they become available while some other news articles related to ongoing events will stay popular for several days. 
}
\vspace{-0.1in}
\subsection{Learning Algorithms}\label{sec:numalgo}

\newc{
While DISCOM and DISCOM-W are the first distributed algorithms to perform content aggregation (see Table \ref{tab:comparison}), we compare their performance with distributed versions of the centralized algorithms proposed in \cite{li2010contextual,chu2011contextual,bouneffouf2012hybrid,slivkins2009contextual}.
In the distributed implementation of these centralized algorithms, we assume that each CA runs an independent instance of these algorithms. For instance, when implementing a centralized algorithm $A$ on the distributed system of CAs, we assume that each CA $i$ runs its own instance of $A$ denoted by $A_i$.
When CA $i$ selects CA $j$ as a matching action in ${\cal K}_i$ by using its algorithm $A_i$, CA $j$ will select the content for CA $i$ using its algorithm $A_j$ with CA $i$'s user's context on the set of contents ${\cal C}_j$. In our numerical results, each algorithm is run for different values of its input parameters. The results are shown for the parameter values for which the corresponding algorithm performs the best. 
}

\newc{
\textbf{DISCOM}: Our algorithm given in Fig. \ref{fig:CLUP} with control functions $H_1(t)$, $H_2(t)$ and $H_3(t)$ divided by $10$ for MD, and by $20$ for YTM and YTMD to reduce the number of trainings and explorations.\footnote{The number of trainings and explorations required in the regret bounds are the worst-case numbers. In reality, good performance is achieved with a much smaller number of trainings and explorations.} 
}

\newc{
\textbf{DISCOM-W}: Our algorithm given in Fig. \ref{fig:ACAPW} which is the time-windowed version of DISCOM with control functions $H_1(t)$, $H_2(t)$ and $H_3(t)$ divided by $20$ to reduce the number of trainings and explorations.
}

\newc{As we mentioned in Remark \ref{rem:differentiable}, both DISCOM and DISCOM-W can provide differential services to its users. In this case both algorithms always exploit for the users with high type (subscribers) and if necessary can train and explore for the users with low type (non-subscribers). Hence, the performance of DISCOM and DISCOM-W for differential services is equal to their performance for the set of high type users. 
}

\newc{
\textbf{LinUCB} \cite{li2010contextual,chu2011contextual}: This algorithm computes an index for each matching action by assuming that the relevance score of a matching action for a user is a linear combination of the contexts of the user. Then for each user it selects the matching action with the highest index.
}

\newc{
\textbf{Hybrid-$\epsilon$} \cite{bouneffouf2012hybrid}:
This algorithm forms context-dependent sample mean rewards for the matching actions by considering the history of observations and decisions for groups of contexts that are similar to each other. For user $t$ it either explores a random matching action with probability $\epsilon_t$ or exploits the best matching action with probability $1-\epsilon_t$, where $\epsilon_t$ is decreasing in $t$. 
}

\newc{
\textbf{Contextual zooming (CZ)} \cite{slivkins2009contextual}: This algorithm adaptively creates balls over the joint action and context space, calculates an index for each ball based on the history of selections of that ball, and at each time step selects a matching action according to the ball with the highest index that contains the current context. 
}

\vspace{-0.1in}
\subsection{Yahoo! Today Module Simulations}

\newc{
In YTM each instance (user) has two contexts $(x_1, x_2) \in [0,1]^2$. We simulate the algorithms in Section \ref{sec:numalgo} for three different context sets in which the learning algorithms only decide based on (i) the first context $x_1$, (ii) the second context $x_2$, and (iii) both contexts $(x_1, x_2)$ of the users. 
The $m_T$ parameter of DISCOM for these simulations is set to the optimal value found in Theorem 1 (for $\gamma=1$) which is $\lceil T^{1/4} \rceil$ for simulations with a single context and $\lceil T^{1/5} \rceil$ for simulations with both contexts. DISCOM is run for numerous $z$ values ranging from $1/4$ to $1/2$. 
Table \ref{tab:yahoosim} compares the performance of DISCOM, LinUCB, Hybrid-$\epsilon$ and CZ. All of the algorithms are evaluated at the parameter values in which they perform the best. 
As seen from the table the CTR for DISCOM with differential services is $16\%$, $5\%$ and $7\%$  higher than the best of LinUCB, Hybrid-$\epsilon$ and CZ for contexts $x_1$, $x_2$ and $(x_1, x_2)$, respectively. 
%As we mentioned in Section \ref{sec:iid}, the exploitation performance is important for applications in which the CAs are required to provide differential services to ensure the satisfaction of higher type users (subscribers). 
%As expected DISCOM-W performs the best in this dataset in terms of the average number of clicks for the users it exploits, with about $23\%$, $11.3\%$ and $51.6\%$ improvement over the best of LinUCB, Hybrid-$\epsilon$ and CZ, for types of contexts $x_1$, $x_2$, and $(x_1, x_2)$, respectively.
}

\begin{table}
\centering
{
{\fontsize{9}{9}\selectfont
\setlength{\tabcolsep}{.1em}
\begin{tabular}{|c|c|c|c|c|c|}
\hline
Context & DISCOM & DISCOM & LinUCB  & Hybrid-$\epsilon$ & CZ  \\
  & & (diff. serv.) & & & \\
\hline
$x_1$ &  6.37  &  7.30  & 6.31 &  5.92 & 4.29 \\
$x_2$ &  6.14   &  6.45  & 4.72  & 6.14 & 4.39 \\
$(x_1, x_2)$ & 5.93  & 6.61  &  5.65 & 6.15  & 4.24 \\
\hline
\end{tabular}
}
}
\caption{Comparison of the CTR$\times 10^{2}$  achieved by CA 1 for DISCOM and other learning algorithms for YTM.}
\vspace{-0.3in}
\label{tab:yahoosim}
\end{table}

\newc{
Table \ref{tab:musicsweep} compares the performance of DISCOM, the percentage of training, exploration and exploitation phases for different control functions (different $z$ parameters) for simulations with context $x_2$. As expected, the percentage of trainings and explorations increase with the control function. As $z$ increases matching actions are explored with a higher accuracy, and hence the average exploitation reward (CTR) increases. 
}

\begin{table}
\centering
{
{\fontsize{9}{9}\selectfont
\setlength{\tabcolsep}{.1em}
\begin{tabular}{|c|c|c|c|}
\hline
 $z$ & 1/4 & 1/3 & 1/2  \\
 \hline
CTR$\times10^{2}$ & 5.13 & 5.29 & 6.14 \\
CTR$\times10^{2} $ in exploitations  & 5.14 & 5.34 & 6.45 \\
Exploit \% & 98.9 & 97.7 & 90.2 \\
Explore \% & 0.5 & 0.6 & 1.9 \\
Train \% & 0.7 & 1.7 & 7.9 \\
\hline
\end{tabular}
}
}
\caption{The CTR, training, exploration and exploitation percentages of CA 1 using DISCOM with context $x_2$ for YTM.}
\vspace{-0.3in}
\label{tab:musicsweep}
\end{table}

\vspace{-0.1in}
\subsection{Music Dataset Simulations}

\newc{
Table \ref{tab:musicsim} compares the performance of DISCOM, LinUCB, Hybrid-$\epsilon$ and CZ for the music dataset. The parameter values used for DISCOM for the result in Table \ref{tab:musicsim} are $z=1/8$ and $m_T= 4$. From the results is is observed that DISCOM achieves $10\%$ improvement over LinUCB, $5\%$ improvement over Hybrid-$\epsilon$, and $28\%$ improvement over CZ in terms of the average number of likes achieved for the users of  CA $1$. Moreover, the average number of likes received by DISCOM for the high type users (differential services) is even higher, which is $13\%$, $8\%$ and $32\%$ higher than LinUCB, HE and CZ, respectively.  
}

\begin{table}
\centering
{
{\fontsize{9}{9}\selectfont
\setlength{\tabcolsep}{.1em}
\begin{tabular}{|c|c|c|c|c|c|}
\hline
Algorithm & DISCOM & DISCOM & LinUCB  & Hybrid-$\epsilon$ & CZ  \\
  & & (diff. serv.) & & & \\
\hline
Avg. num. & 0.717  &  0.736  & 0.652 & 0.683 & 0.559 \\
of likes               &  &&  & & \\
\hline
\end{tabular}
}
}
\caption{Comparison among DISCOM and other learning algorithms for MD.}
\vspace{-0.3in}
\label{tab:musicsim}
\end{table}

\vspace{-0.1in}
\subsection{Yahoo! Today Module with Drift Simulations}

\newc{
Table \ref{tab:driftsim} compares the performance of DISCOM-W with half window length ($\tau_h=2500$) and $m_T=10$, DISCOM (with $m_T$ set equal to $\lceil T^{1/4} \rceil$ simulations with a single context dimension and $\lceil T^{1/5} \rceil$ for the simulation with two context dimensions),  LinUCB, Hybrid-$\epsilon$ and CZ. For the results in the table, the $z$ parameter value of DISCOM and DISCOM-W are set to the $z$ value in which they achieve the highest number of clicks. Similarly, LinUCB, Hybrid-$\epsilon$ and CZ are also evaluated at their best parameter values. 
The results show the performance of DISCOM and DISCOM-W for differential services. DISCOM-W performs the best in this dataset in terms of the average number of clicks, with about $23\%$, $11.3\%$ and $51.6\%$ improvement over the best of LinUCB, Hybrid-$\epsilon$ and CZ, for types of contexts $x_1$, $x_2$, and $(x_1, x_2)$, respectively.
}

\begin{table}
\centering
{
{\fontsize{9}{9}\selectfont
\setlength{\tabcolsep}{.1em}
\begin{tabular}{|c|c|c|c|c|c|}
\hline
Contexts  & DISCOM-W & DISCOM & LinUCB  & Hybrid-$\epsilon$ & CZ  \\
used  & (diff. serv.) & (diff. serv.) & & & \\
\hline
 $x_1$   &  6.3 & 5.5  & 5.1 & 3.0  & 2.4 \\
 $x_2$    &  5.1  & 4.2  & 4.2  & 4.6  & 2.4 \\
 $(x_1,x_2)$    &  6.9  & 3.8  & 4.6 & 4.1 & 2.3 \\
\hline
\end{tabular}
}
}
\caption{The CTR$\times 10^{2}$ of DISCOM-W and DISCOM for differential services, and the CTR of other learning algorithms for YTMD.}
\vspace{-0.3in}
\label{tab:driftsim}
\end{table}

% no \IEEEPARstart
\vspace{-0.1in}
\section{Conclusion} \label{sec:conc}
In this paper we considered novel online learning algorithms for content matching by a distributed set of CAs. We have characterized the relation between the user and content characteristics in terms of a relevance score, and proposed online learning algorithms that learns to match each user with the content with the highest relevance score.
When the user and content characteristics are static, the best matching between content and each type of user can be learned perfectly, i.e., the average regret due to suboptimal matching goes to zero. 
When the user and content characteristics are dynamic, depending on the rate of the change, an approximately optimal matching between content and each user type can be learned.
In addition to our theoretical results, we have validated the concept of distributed content matching on real-world datasets.
An interesting future research direction is to investigate the interaction between different CAs when they compete for the same pool of users. Should a CA send a content that has a high chance of being liked by another CA's user to increase its immediate reward, or should it send a content that has a high chance of being disliked by the other CA's user to divert that user from using that CA and switch to it instead. 

\jremove{
\appendices
\vspace{-0.15in}
 \section{A bound on divergent series} \label{app:seriesbound}
 For $p>0$, $p \neq 1$,
\begin{align*}
\sum_{t=1}^{T} t^{-p} \leq 1 + (T^{1-p} -1) / (1-p).
\end{align*}
\begin{proof}
See \cite{chlebus2009approximate}.
\end{proof}
\vspace{-0.15in}
 \section{Proof of Theorem \ref{theorem:cos}} \label{app:mainproof}
 \subsection{Necessary Definitions and Notations}

Let $\beta_a := \sum_{t=1}^{\infty} 1/t^a$, and let $\log(.)$ denote logarithm in base $e$.
For each hypercube $p \in {\cal P}_T$ let 
%
%\begin{align*}
$\overline{\pi}_{c,p} := \sup_{x \in p} \pi_c(x)$,
$\underline{\pi}_{c,p} := \inf_{x \in p} \pi_c(x)$, for $c \in {\cal C}$, and
$\overline{\mu}^i_{k,p} := \sup_{x \in p} \mu^i_k(x)$,
$\underline{\mu}^i_{k,p} := \inf_{x \in p} \mu^i_k(x)$, for $k \in {\cal K}_i$.
%\end{align*}
%
Let $x^*_p$ be the context at the center (center of symmetry) of the hypercube $p$. We define the optimal matching action of CA $i$ for hypercube $p$ as
%
%\begin{align*}
$k^*_i(p) := \argmax_{k \in {\cal K}_i} \mu^i_k(x^*_p)$.
%\end{align*}
%
When the hypercube $p$ is clear from the context, we will simply denote the optimal matching action for hypercube $p$ with $k^*_i$.
Let
\begin{align*}
{\cal L}^i_p(t) := \left\{ k \in {\cal K}_i : \underline{\mu}^i_{k^*_i(p),p} - \overline{\mu}^i_{k,p} > (4 L d^{\gamma/2} + 6) t^{-z/2}  \right\}
\end{align*}
be the set of suboptimal matching actions of CA $i$ at time $t$ in hypercube $p$.
Also related to this let
\begin{align*}
{\cal C}^j_p(t) := \left\{ c \in {\cal C}_j : \underline{\pi}_{c^*_j(p),p} - \overline{\pi}_{c,p} > (4 L d^{\gamma/2} + 6)  t^{-z/2} \right\}
\end{align*}
be the set of suboptimal contents of CA $j$ at time $t$ in hypercube $p$, where $c^*_j(p) = \argmax_{c \in {\cal C}_j} \pi_c(x^*_p)$. Also when the hypercube $p$ is clear from the context we will just use $c^*_j$.
The contents in ${\cal C}^j_p(t) $ are the ones that CA $j$ should not select when called by another CA.
The regret given in (3) can be written as a sum of three components: 
\begin{align}
R_i(T) = \mathrm{E} [R^e_i(T)] + \mathrm{E} [R^s_i(T)] + \mathrm{E} [R^n_i(T)]      \notag
\end{align}
where $R^e_i(T)$ is the regret due to trainings and explorations by time $T$, $R^s_i(T)$ is the regret due to suboptimal matching action selections in exploitations by time $T$ and $R^n_i(T)$ is the regret due to near optimal matching action selections in exploitations by time $T$, which are all random variables. 

\vspace{-0.1in}
\subsection{Bounding the Regret in Training, Exploration and Exploitation phases.}
In the following lemmas we will bound each of these terms separately. The following lemma bounds $\mathrm{E} [R^e_i(T)]$. 
%
%\rev{Due to space constraints some of the proofs are not included in the paper. For the complete proofs please see the online appendix \cite{}}.
%
\begin{lemma} \label{lemma:explorations}
Consider all CAs running DISCOM with parameters $H_1(t) = t^{z} \log t$, $H_2(t) = C_{\max} t^{z} \log t$, $H_3(t) = t^{z} \log t$ and $m_T = \left\lceil T^{\kappa} \right\rceil$, where $0<z<1$ and $0<\kappa<1/d$. Then, we have
\begin{align*}
\mathrm{E} [R^e_i(T)] &\leq  2^{d+1} (|{\cal C}_i| + (M-1) (C_{\max} + 1) ) T^{z + \kappa d} \log T \\
&+ 2^{d+1} ( |{\cal C}_i| + 2 (M-1) ) T^{\kappa d}  .
\end{align*}
\end{lemma}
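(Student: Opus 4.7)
The plan is to bound $\mathrm{E}[R^e_i(T)]$ by counting, for each hypercube in the partition ${\cal P}_T$, the maximum number of training and exploration steps that DISCOM can perform there, multiplying by the maximum per-step reward gap (which is at most $2$, since both the oracle and the realized one-step reward lie in $[-1,1]$), and then aggregating over all hypercubes.

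First I would fix a hypercube $p \in {\cal P}_T$ and inspect the pseudocode of $\textrm{DISCOM}_{\max}$: exploration of own contents, training of other CAs, and exploration of other CAs occur sequentially, each driven by the condition that the associated counter is no larger than the corresponding control function. This implies a deterministic per-hypercube cap: each content $c \in {\cal C}_i$ is explored at most $H_1(T)+1 = T^z \log T + 1$ times in $p$; each CA $j \in {\cal M}_{-i}$ is trained at most $H_2(T)+1 = C_{\max}T^z\log T + 1$ times in $p$; and each CA $j \in {\cal M}_{-i}$ is explored at most $H_3(T)+1 = T^z\log T + 1$ times in $p$. Summing these gives a per-hypercube budget on training and exploration of
\begin{align*}
\bigl(|{\cal C}_i| + (M-1)(C_{\max}+1)\bigr)\, T^z \log T + |{\cal C}_i| + 2(M-1).
\end{align*}

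Next I would bound the number of hypercubes by $(m_T)^d = \lceil T^{\kappa}\rceil^d \leq (2T^\kappa)^d = 2^d T^{\kappa d}$ for $T \geq 1$, and note that the per-step contribution to the regret in any training or exploration step is at most $2$. Multiplying the per-hypercube cap by $2 \cdot 2^d T^{\kappa d}$ and separating the $T^z \log T$ term from the constant term yields exactly
\begin{align*}
\mathrm{E}[R^e_i(T)] &\leq 2^{d+1}\bigl(|{\cal C}_i|+(M-1)(C_{\max}+1)\bigr) T^{z+\kappa d}\log T \\
&\quad + 2^{d+1}\bigl(|{\cal C}_i| + 2(M-1)\bigr) T^{\kappa d},
\end{align*}
as claimed. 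Since the bound on the step count is deterministic, the expectation is immediate.

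The main obstacle is the training-count argument: the counter $N^{\textrm{tr},i}_{j,p}$ is refreshed from $N^j_p$ reported by CA $j$ and therefore depends on the activity of all other CAs sending training requests to $j$. I would handle this by taking the worst case in which CA $i$ alone is responsible for advancing $N^{\textrm{tr},i}_{j,p}$ above $H_2(T)$ in $p$, which is valid because any additional increments from other CAs can only terminate $i$'s training phase for $j$ earlier. A secondary technicality is that $H_1,H_2,H_3$ are time-dependent, but since they are nondecreasing in $t$, evaluating them at $T$ suffices to cap the number of phases; the residual ``$+1$'' in each per-action count is what produces the lower-order $T^{\kappa d}$ term in the bound.
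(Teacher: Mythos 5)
Your proposal is correct and follows essentially the same route as the paper's proof: a deterministic per-hypercube cap of $H_1(T)+1$ own-content explorations per $c\in{\cal C}_i$, $H_2(T)+1$ trainings and $H_3(T)+1$ explorations per $j\in{\cal M}_{-i}$, multiplied by the per-step loss bound of $2$ and the hypercube count $(m_T)^d\leq 2^d T^{\kappa d}$. The additional remarks on the worst-case handling of the refreshed training counter and the monotonicity of the control functions are consistent with (and slightly more explicit than) the paper's argument.
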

\begin{proof}
Since time slot $t$ is a training or an exploration slot for CA $i$ if and only if 
\begin{align}
{\cal M}^{\textrm{ut}}_{i,p_i(t)}(t) \cup {\cal M}^{\textrm{ue}}_{i,p_i(t)}(t) \cup {\cal C}^{\textrm{ue}}_{i,p_i(t)}(t)  \neq \emptyset       \notag
\end{align}
up to time $T$, there can be at most   $\left\lceil T^{z} \log T \right\rceil$ exploration slots in which a content $c \in {\cal C}_i$ is matched with the user of CA $i$, 
$\left\lceil C_{\max} T^{z} \log T \right\rceil$ training slots in which CA $i$ selects CA $j \in {\cal M}_{-i}$, $\left\lceil T^{z} \log T \right\rceil$ exploration slots in which CA $i$ selects CA $j \in {\cal M}_{-i}$. Result follows from summing these terms and the fact that $(m_T)^d \leq 2^d T^{\kappa d}$ for any $T \geq 1$.
The additional factor of 2 comes from the fact that the realized regret at any time slot can be at most $2$.
\end{proof}

%From Lemma \ref{lemma:explorations}, we see that the regret due to trainings and explorations is linear in the number of hypercubes $(m_T)^d$, hence exponential in parameter $\kappa$ and $z$. We conclude that $z$ and $\kappa$ should be small enough to achieve sublinear regret in exploration slots.

For any $k \in {\cal K}_i$ and $p \in {\cal P}_T$, the sample mean $\bar{r}^i_{k,p}(t)$ of the relevance score of matching action $k$ represents a random variable which is the average of the independent samples in set ${\cal E}^i_{k,p}(t)$.
Since these samples are not identically distributed, in order to facilitate our analysis of the regret, we generate two different artificial i.i.d. processes to bound the probabilities related to  $\hat{\mu}^i_{k,p}(t) = \bar{r}^i_{k,p}(t) - d^i_k$, $k \in {\cal K}_i$.
The first one is the {\em best} process for CA $i$ in which the net reward of the matching action $k$ for a user with context in $p$ is sampled from an i.i.d. Bernoulli process with mean $\overline{\mu}^i_{k,p}$, the other one is the {\em worst} process for CA $i$ in which this net reward is sampled from an i.i.d. Bernoulli process with mean $\underline{\mu}^i_{k,p}$. Let $\hat{\mu}^{\textrm{b},i}_{k,p}(z)$ denote the sample mean of the $z$ samples from the best process and $\hat{\mu}^{\textrm{w},i}_{k,p}(z)$ denote the sample mean of the $z$ samples from the worst process for CA $i$.
We will bound the terms $\mathrm{E} [R^n_i(T)]$ and $\mathrm{E} [R^s_i(T)]$ by using these artificial processes along with the similarity information given in Assumption 1.

Let $\Xi^i_{j,p}(t)$ be the event that a suboptimal content $c \in {\cal C}_{j}$ is selected by CA $j \in {\cal M}_{-i}$, when it is called by CA $i$ for a context in set $p$ for the $t$th time in the exploitation phases of CA $i$.
Let $X^i_{j,p}(t)$ denote the random variable which is the number of times CA $j$ selects a suboptimal content when called by CA $i$ in exploitation slots of CA $i$ when the context is in set $p \in {\cal P}_T$ by time $t$.
Clearly, we have
\begin{align}
X^i_{j,p}(t) = \sum_{t'=1}^{|{\cal E}^i_{j,p}(t)|} \mathrm{I}(\Xi^i_{j,p}(t')) \notag
\end{align}
where $\mathrm{I}(\cdot)$ is the indicator function which is equal to $1$ if the event inside is true and $0$ otherwise.
The following lemma bounds $\mathrm{E} [R^s_i(T)]$.
\begin{lemma} \label{lemma:suboptimal1}
Consider all CAs running DISCOM with parameters $H_1(t) = t^{z} \log t$, $H_2(t) = C_{\max} t^{z} \log t$, $H_3(t) = t^{z} \log t$ and $m_T = \left\lceil T^{\kappa} \right\rceil$, where $0<z<1$ and $\kappa = z/(2\gamma)$. Then, we have
%\vspace{-0.1in}
%
\begin{align*}
\mathrm{E} [R^e_i(T)] &\leq  4 (|{\cal C}_i| + M) \beta_2  \\
&+ 4 (|{\cal C}_i| + M)  M C_{\max} \beta_2 \frac{T^{1-z/2}} {1-z/2} .
\end{align*}
\end{lemma}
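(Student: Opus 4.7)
The plan is to adapt the standard confidence-bound analysis for contextual bandits to the distributed setting. Fix a hypercube $p \in {\cal P}_T$ and consider a time $t$ at which CA $i$ exploits for a user with context in $p$, i.e.\ the event ${\cal W}^i_p(t) := \{{\cal C}^{\textrm{ue}}_{i,p}(t) \cup {\cal M}^{\textrm{ut}}_{i,p}(t) \cup {\cal M}^{\textrm{ue}}_{i,p}(t) = \emptyset,\ x_i(t) \in p\}$. Because the per-slot loss is at most $2$, it suffices to bound $\sum_{t=1}^T \sum_{k \in {\cal L}^i_p(t)} \mathrm{P}(a_i(t)=k,\ {\cal W}^i_p(t))$ uniformly in $p$ and exploit the fact that each $t$ falls in exactly one hypercube, so that the outer sum over ${\cal P}_T$ collapses.

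To bound the probability I would invoke the artificial best/worst Bernoulli processes introduced in the excerpt. Setting the confidence radius $H_t := t^{-z/2}$, the event $\{a_i(t)=k\} \cap {\cal W}^i_p(t)$ for a suboptimal $k \in {\cal L}^i_p(t) \cap {\cal C}_i$ is contained in the union of (a) $\hat{\mu}^{\textrm{b},i}_{k,p} \geq \overline{\mu}^i_{k,p} + H_t$, (b) $\hat{\mu}^{\textrm{w},i}_{k^*_i,p} \leq \underline{\mu}^i_{k^*_i,p} - H_t$, and (c) the residual event on which neither (a) nor (b) holds yet $k$ still beats $k^*_i$. Since the exploitation condition forces at least $H_1(t) = t^z \log t$ samples per arm, a Chernoff--Hoeffding bound gives $\mathrm{P}(\mathrm{a}), \mathrm{P}(\mathrm{b}) \leq t^{-2}$. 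Event (c) is void: by Assumption \ref{ass:lipschitz2} the Lipschitz slack across $p$ is at most $L d^{\gamma/2} m_T^{-\gamma} \leq L d^{\gamma/2} t^{-z/2}$ using $\kappa = z/(2\gamma)$, and the definition of ${\cal L}^i_p(t)$ leaves the strict margin $(4Ld^{\gamma/2}+6)t^{-z/2} - 2Ld^{\gamma/2}t^{-z/2} - 2H_t > 0$.

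For $k \in {\cal M}_{-i}$ the same scheme applies, with one extra subtlety: the reward observed by CA $i$ is drawn from whichever content CA $k$ actually returns, which need not be the optimal $c^*_k(p)$ if CA $k$ is itself undertrained. Here I would use the counter $X^i_{k,p}(t)$ to track contaminated samples and show $\mathrm{E}[X^i_{k,p}(t)] \leq 2 C_{\max} \beta_2$. The argument is recursive: whenever CA $i$ queries CA $k$ in an exploration or exploitation slot, the training phase of CA $i$ has already ensured $N^k_p(t) \geq H_2(t) = C_{\max} t^z \log t$, so CA $k$'s internal exploration threshold $H_1(t)$ is cleared for every content in ${\cal C}_k$, and the same Chernoff--Hoeffding argument applied inside CA $k$ yields only $O(1)$ suboptimal returns in expectation. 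Folding the resulting $O(t^{-z/2})$ bias into the failure probability, summing $t^{-2}$ and $t^{-z/2}$ over $t \leq T$ via $\sum_t t^{-2} \leq \beta_2$ and $\sum_t t^{-z/2} \leq T^{1-z/2}/(1-z/2)$, multiplying by $|{\cal L}^i_p(t)| \leq |{\cal C}_i|+M$ arms and by the per-slot regret $2$, and then collapsing the hypercube sum, recovers the two claimed terms.

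The main obstacle is making the recursive training argument precise. One must verify that every single sample CA $i$ uses in $\bar r^i_{k,p}$ was generated while CA $k$ was past its own exploration threshold for $p$, so that the inner Chernoff--Hoeffding bound is legitimate; this is exactly what the priority ordering of DISCOM and the synchronised counter $N^{\textrm{tr},i}_{j,p}$ are designed to enforce, but cementing the claim requires a careful case analysis of which of CA $k$'s phases are in progress at the relevant times, together with a union bound over all CAs in ${\cal M}_{-k}$ whose trainings collectively push $N^k_p$ past $H_2(t)$.
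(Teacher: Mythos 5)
Your proposal follows essentially the same route as the paper's proof: bound the per-slot probability of a suboptimal exploitation via the best/worst artificial Bernoulli processes, kill the crossing event by fitting the Lipschitz slack and confidence radius inside the gap defining ${\cal L}^i_p(t)$, control the contaminated samples from called CAs through the Markov inequality on $\mathrm{E}[X^i_{j,p}(t)] \leq 2C_{\max}\beta_2$, and sum $t^{-2}$ and $t^{-z/2}$ over $t$ to obtain the $\beta_2$ and $T^{1-z/2}/(1-z/2)$ terms. The only cosmetic difference is that the paper folds the contamination bias $2t^{\phi-z}$ into the gap condition rather than the failure probability, but your margin still closes, so the argument is sound (and you correctly read the lemma as bounding $\mathrm{E}[R^s_i(T)]$ despite the typo in its statement).
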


\begin{proof} 
Consider time $t$. For simplicity of notation let $p = p_i(t)$. Let 
\begin{align*}
{\cal W}^i(t) := \{ {\cal M}^{\textrm{ut}}_{i,p_i(t)}(t) \cup {\cal M}^{\textrm{ue}}_{i,p_i(t)}(t) \cup {\cal C}^{\textrm{ue}}_{i,p_i(t)}(t)  = \emptyset  \}
\end{align*}
be the event that CA $i$ exploits at time $t$.

First, we will bound the probability that CA $i$ selects a suboptimal matching action in an exploitation slot.
Then, using this we will bound the expected number of times a suboptimal matching action is selected by CA $i$ in exploitation slots.
Note that every time a suboptimal matching action is selected by CA $i$, since $\mu^i_k(x) = \pi^i_k(x) - d^i_k \in [-1,1]$ for all $k \in {\cal K}_i$, the realized (hence expected) loss is bounded above by $2$.
Therefore $2$ times the expected number of times a suboptimal matching action is chosen in an exploitation slot bounds the regret due to suboptimal matching actions in exploitation slots.
Let ${\cal V}^i_{k}(t)$ be the event that matching action $k \in {\cal K}_i$ is chosen at time $t$ by CA $i$.
We have
\begin{align*}
R^s_i(T) \leq 2 \sum_{t=1}^T \sum_{k \in {\cal L}^i_{p_i(t)}(t)} \mathrm{I}(  {\cal V}^i_{k}(t), {\cal W}^i(t) ) .
\end{align*} 
Taking the expectation
\begin{align}
\mathrm{E}[R^s_i(T)] \leq 2 \sum_{t=1}^T \sum_{k \in {\cal L}^i_{p_i(t)}(t)} \mathrm{P}({\cal V}^i_{k}(t), {\cal W}^i(t) ). \label{eqn:subregret}
\end{align}
Let ${\cal B}^i_{j,p}(t)$ be the event that at most $t^{\phi}$ samples in ${\cal E}^i_{j,p}(t)$ are collected from suboptimal content of CA $j$.
Let ${\cal B}^i(t) := \bigcap_{j \in {\cal M}_{-i} } {\cal B}^i_{j,p_i(t)}(t)$.
For a set ${\cal A}$, let ${\cal A}^c$ denote the complement of that set.
For any $k \in {\cal K}_i$, we have
\begin{align}
& \mathrm{P} \left( {\cal V}^i_{k,p}(t), {\cal W}^i(t) \right) \notag \\
& \leq \mathrm{P} \left( \hat{\mu}^i_{k,p}(t) \geq \overline{\mu}^i_{k,p} + H_t, {\cal W}^i(t), {\cal B}^i(t)  \right) \notag  \\
&+ \mathrm{P} \left( \hat{\mu}^i_{k^*_i,p}(t) \leq \underline{\mu}^i_{k^*_i,p} - H_t, {\cal W}^i(t), {\cal B}^i(t) \right) 
+ \mathrm{P} ({\cal B}^i(t)^c) \notag  \\
&+ \mathrm{P} \left( \hat{\mu}^i_{k,p}(t) \geq \hat{\mu}^i_{k^*_i,p}(t), 
\hat{\mu}^i_{k,p}(t) < \overline{\mu}^i_{k,p} + H_t, \right. \notag \\
& \left. \hat{\mu}^i_{k^*_i,p}(t) > \underline{\mu}^i_{k^*_i,p} - H_t,
{\cal W}^i(t), {\cal B}^i(t)  \right)  \label{eqn:ubound1}
\end{align}
for some $H_t >0$.
We have for any suboptimal matching action $k \in {\cal L}^i_p(t)$,
\begin{align}
& \mathrm{P} \left( \hat{\mu}^i_{k,p}(t) \geq \hat{\mu}^i_{k^*_i,p}(t), 
\hat{\mu}^i_{k,p}(t) < \overline{\mu}^i_{k,p} + H_t, \right. \notag \\
& \left. \hat{\mu}^i_{k^*_i,p}(t) > \underline{\mu}^i_{k^*_i,p} - H_t,
{\cal W}^i(t), {\cal B}^i_{k,p}(t)  \right) \notag \\
&\leq \mathrm{P} \left( \hat{\mu}^{\textrm{b},i}_{k,p}(|{\cal E}^i_{k,p}(t)|) 
\geq \hat{\mu}^{\textrm{w},i}_{k^*_i,p}(|{\cal E}^i_{k^*_i,p}(t)|)
-  t^{\phi-z} , \right. \notag \\
& \left. \hat{\mu}^{\textrm{b},i}_{k,p}(|{\cal E}^i_{k,p}(t)|) < \overline{\mu}^i_{k,p} + L \left( \sqrt{d}/m_T \right)^\gamma + H_t +  t^{\phi-z}, \right. \notag \\
& \left. \hat{\mu}^{\textrm{w},i}_{k^*_i,p}(|{\cal E}^i_{k^*_i,p}(t)|) > \underline{\mu}^i_{k^*_i,p} - L \left( \sqrt{d}/m_T \right)^\gamma - H_t, \right. 
 \left. {\cal W}^i(t)    \right). \notag
\end{align}
For $k \in {\cal L}^i_p(t)$, when
\begin{align}
2 L \left( \sqrt{d}/m_T \right)^\gamma + 2H_t + 2t^{\phi-z} \leq (4 L d^{\gamma/2}+6) t^{-z/2} 
\label{eqn:boundcond}
\end{align}
the three inequalities given below
\begin{align*}
& \underline{\mu}_{k^*_i,p} - \overline{\mu}^i_{k,p} > (4 L d^{\gamma/2}+6) t^{-z/2} \\
& \hat{\mu}^{\textrm{b},i}_{k,p}(|{\cal E}^i_{k,p}(t)|) < \overline{\mu}^i_{k,p} + L \left(  \sqrt{d}/m_T \right)^\gamma + H_t + t^{\phi-z} \\
& \hat{\mu}^{\textrm{w},i}_{k^*_i,p}(|{\cal E}^i_{k,p}(t)|) > \underline{\mu}^i_{k^*_i,p} - L \left(  \sqrt{d}/m_T \right)^\gamma - H_t
\end{align*}
together imply that 
%
%\begin{align*}
$\hat{\mu}^{\textrm{b},i}_{k,p}(|{\cal E}^i_{k,p}(t)|) < \hat{\mu}^{\textrm{w},i}_{k^*_i,p}(|{\cal E}^i_{k,p}(t)|) -  t^{\phi-z}$,
%\end{align*}
%
which implies that
\begin{align}
& \mathrm{P} \left( \hat{\mu}^i_{k,p}(t) \geq \hat{\mu}^i_{k^*_i,p}(t), 
\hat{\mu}^i_{k,p}(t) < \overline{\mu}^i_{k,p} + H_t, \right. \notag \\
& \left. \hat{\mu}^i_{k^*_i,p}(t) > \underline{\mu}^i_{k^*_i,p} - H_t,
{\cal W}^i(t), {\cal B}^i_{k,p}(t)  \right) = 0. \label{eqn:vktbound1}
\end{align}
Let $H_t =  2 t^{\phi-z} + L d^{\gamma/2} m^{-\gamma}_T$. A sufficient condition that implies (\ref{eqn:boundcond}) is
\begin{align}
& 4 L d^{\gamma/2} t^{- \kappa \gamma} + 6 t^{\phi-z} \leq (4 L d^{\gamma/2}+6) t^{-z/2} \label{eqn:maincondition}
\end{align}
which holds for all $t \geq 1$ when $\phi=z/2$ and $\kappa \gamma \geq z/2$. 
Using a Chernoff-Hoeffding bound, for any $k \in {\cal L}^i_{p_i(t)}(t)$, since on the event ${\cal W}^i(t)$, $|{\cal E}^i_{k,p_i(t)}(t)| \geq t^z \log t$, we have
\begin{align}
\mathrm{P} \left( \hat{\mu}^i_{k,p}(t) \geq \overline{\mu}^i_{k,p} + H_t, {\cal W}^i(t), {\cal B}^i(t) \right) \leq t^{-2}  \label{eqn:vktbound22}
\end{align}
and
\begin{align}
\mathrm{P} \left( \hat{\mu}^i_{k^*_i,p}(t) \leq \underline{\mu}^i_{k^*_i,p} - H_t, {\cal W}^i(t), {\cal B}^i(t) \right) \leq t^{-2}. \label{eqn:vktbound32}
\end{align}

Since $\{ {\cal B}^i_{j,p}(t)^c, {\cal W}^i(t)  \} = \{ X^i_{j,p}(t) \geq t^\phi \}$, by
applying the Markov inequality, we have
\begin{align*}
\mathrm{P} ({\cal B}^i_{j,p}(t)^c, {\cal W}^i(t)) \leq \mathrm{E} [X^i_{j,p}(t)]  t^{-\phi} .
\end{align*}
Since
\begin{align*}
X^i_{j,p}(t) = \sum_{t'=1}^{|{\cal E}^i_{j,p}(t)|} \mathrm{I}(\Xi^i_{j,p}(t'))
\end{align*}
and
\begin{align*}
& \mathrm{P} \left( \Xi^i_{j,p}(t) \right) \\
&\leq \sum_{m \in {\cal C}^j_p(t) } \mathrm{P} \left( \bar{r}^j_{m,p}(t) \geq \bar{r}^{j}_{c^*_j, p}(t) \right) \\
&\leq \sum_{m \in  {\cal C}^j_p(t) }
\left(  \mathrm{P} \left( \bar{r}^j_{m,p}(t) \geq \overline{\pi}_{m,p} + H_t, {\cal W}^i(t) \right) \right. \\  
& \left. + \mathrm{P} \left( \bar{r}^{j}_{c^*_j, p}(t) \leq \underline{\pi}_{c^*_j,p} - H_t, {\cal W}^i(t) \right)  
+ \mathrm{P} \left( \bar{r}^j_{m,p}(t) \geq \bar{r}^j_{c^*_j,p}(t), \right. \right. \\
& \left. \left. \bar{r}^j_{m,p}(t) < \overline{\pi}_{m,p} + H_t,
 \bar{r}^{j}_{c^*_j,p}(t) > \underline{\pi}_{c^*_j,p} - H_t ,
{\cal W}^i(t) \right)  \right).
\end{align*}
When (\ref{eqn:maincondition}) holds, the last probability in the sum above is equal to zero while the first two probabilities are upper bounded by $e^{-2(H_t)^2 t^z \log t}$.
Thus, we have
\begin{align*}
\mathrm{P} \left( \Xi^i_{j,p}(t) \right) \leq \sum_{m \in  {\cal C}^j_p(t) } 2 e^{-2(H_t)^2 t^z \log t} \leq
2 |{\cal C}_{j}| t^{-2}.
\end{align*}
This implies that 
\begin{align*}
\mathrm{E} [X^i_{j,p}(t)] \leq \sum_{t'=1}^{\infty} \mathrm{P} (\Xi^i_{j,p}(t')) \leq 2 |{\cal C}_{j}| \sum_{t'=1}^\infty (t')^{-2} .
\end{align*}
Therefore, by the Markov inequality and union bound we get
\begin{align}
\mathrm{P} ({\cal B}^i_{j,p_i(t)}(t)^c, {\cal W}^i(t)) 
&= \mathrm{P} (X^i_{j,p_i(t)}(t) \geq t^\phi) \notag \\
&\leq 2 |{\cal C}_{j}| \beta_2 t^{-z/2} \notag 
\end{align}
and
\begin{align}
\mathrm{P} ({\cal B}^i(t)^c, {\cal W}^i(t)) 
&\leq 2 M C_{\max} \beta_2 t^{-z/2} . \label{eqn:selectionbound}
\end{align}

Then, using (\ref{eqn:vktbound1}), (\ref{eqn:vktbound22}), (\ref{eqn:vktbound32}) and (\ref{eqn:selectionbound}), we have 
\begin{align*}
\mathrm{P} \left( {\cal V}^i_{k}(t), {\cal W}^i(t)  \right) \leq 2 t^{-2} 
+ 2 M C_{\max}  \beta_2 t^{-z/2},
\end{align*}
for any $k \in {\cal L}^i_{p_i(t)}(t)$, and
By (\ref{eqn:subregret}), and by the result of Appendix A, we get the stated bound for $\mathrm{E} [R^s_i(T)]$.
%
%\begin{align}
%E[R_s(T)] &\leq 2^d T^{\kappa d}
% \left( 2 (M-1+|{\cal F}_i|) \beta_2  \right. \notag \\
%& \left. +  2 (M-1) C_{\max} \beta_2 \sum_{t=1}^T \frac{1}{t^{1-z/2}} %\right) \notag \\
%&\leq  2^{d+1} (M-1+|{\cal F}_i|) \beta_2 T^{\kappa d} \notag \\
%&+ \frac{2^{d+2} (M-1)  C_{\max} \beta_2}{z} T^{\kappa d + z/2}, %\label{eqn:regret_s}
%\end{align}
%
%where (\ref{eqn:regret_s}) follows from . 
%
\end{proof}

\comment{
Each time CA $i$ requests content from CA $j$, CA $j$ matches a content in ${\cal C}_{j}$ with CA $i$'s user. There is a positive probability that CA $j$ will match a suboptimal content, which implies that even if CA $j$ is near optimal for CA $i$, selecting CA $j$ may not yield a near optimal outcome. We need to take this into account, in order to bound $\mathrm{E}[R^n_i(T)]$. 
The next lemma bounds the expected number of such happenings.
\begin{lemma} \label{lemma:callother}
Consider all CAs running DISCOM with parameters $H_1(t) = t^{z} \log t$, $H_2(t) = C_{\max} t^{z} \log t$, $H_3(t) = t^{z} \log t$ and $m_T = \left\lceil T^{\kappa} \right\rceil$, where $0<z<1$ and $\kappa =z/(2\gamma)$. Then, we have
\begin{align*}
\mathrm{E} [X^i_{j,p}(t)] \leq 2 C_{\max} \beta_2,
\end{align*}
for $j \in {\cal M}_{-i}$.
\end{lemma}
\begin{proof}
Similar to the proof of Lemma \ref{lemma:suboptimal1}, we have
\begin{align}
\mathrm{E} [X^i_{j,p}(t)] & \leq \sum_{t'=1}^{\infty} \mathrm{P}(\Xi^i_{j,p}(t')) 
 \leq 2 |{\cal C}_{j}| \sum_{t'=1}^\infty \frac{1}{t^2}    .  \notag
\end{align}
\end{proof}

\begin{proof}
We have 
%
%\begin{align*}
$X^i_{j,p}(t) = \sum_{t'=1}^{|{\cal E}^i_{j,p}(t)|} I(\Xi^i_{j,p}(t'))$,
%\end{align*}
%
and
\begin{align*}
&\mathrm{P} \left( \Xi^i_{j,p}(t) \right) 
\leq \sum_{m \in {\cal F}^j_p(t)  } 
\mathrm{P} \left( \bar{r}^j_{m,p}(t) \geq \bar{r}^{j}_{f^*_j, p}(t) \right) \\
&\leq \sum_{m \in{\cal F}^j_p(t) }
\left(  \mathrm{P} \left( \bar{r}^j_{m,p}(t) \geq \overline{\pi}_{m,p} + H_t, {\cal W}^i(t) \right) \right. \\  
& \left. + \mathrm{P} \left( \bar{r}^{j}_{f^*_j,p}(t) \leq \underline{\pi}^{j}_{f^*_j,p} - H_t, {\cal W}^i(t) \right) 
 + \mathrm{P} \left( \bar{r}^j_{m,p}(t) \geq \bar{r}^{j}_{f^*_j,p}(t), \right. \right. \\ 
& \left. \left. \bar{r}^j_{m,p}(t) < \overline{\pi}_{m,p} + H_t,
 \bar{r}^{j}_{f^*_j,p}(t) > \underline{\pi}^{j}_{f^*_j,p} - H_t ,
{\cal W}^i(t) \right)  \right).
\end{align*}
Let $H_t = 2 t^{-z/2}$.
Similar to the proof of Lemma \ref{lemma:suboptimal1}, the last probability in the sum above is equal to zero while the first two probabilities are upper bounded by $e^{-2(H_t)^2 t^z \log t}$.
Therefore, we have
%
%\begin{align*}
$\mathrm{P} \left( \Xi^i_{j,p}(t) \right) \leq \sum_{m \in {\cal F}^j_p(t) } 2 e^{-2(H_t)^2 t^z \log t} \leq 2 |{\cal F}_{j}|/t^2$.
%\end{align*}
%
These together imply that 
%
%\begin{align*}
$\mathrm{E} [X^i_{j,p}(t)] \leq \sum_{t'=1}^{\infty} P(\Xi^i_{j,p}(t')) \leq 2 |{\cal F}_{j}| \sum_{t'=1}^\infty 1/t^2$.
%\end{align*}
\end{proof}
}
The next lemma bounds $\mathrm{E} [R^n_i(T)]$.

\begin{lemma} \label{lemma:nearoptimal}
Consider all CAs running DISCOM with parameters $H_1(t) = t^{z} \log t$, $H_2(t) = C_{\max} t^{z} \log t$, $H_3(t) = t^{z} \log t$ and $m_T = \left\lceil T^{\kappa} \right\rceil$, where $0<z<1$ and $\kappa = z/ (2\gamma)$. Then, we have
\begin{align*}
\mathrm{E} [R^n_i(T)] \leq \frac{ (14 L d^{\gamma/2} + 12) }{1-z/2} T^{1-z/2} + 4 C_{\max} \beta_2.
\end{align*}
\end{lemma}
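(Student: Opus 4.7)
The plan is to bound the one-step expected regret incurred in exploitation slots at which the chosen action is near-optimal (i.e. $a_i(t)\notin {\cal L}^i_{p_i(t)}(t)$) and then sum over $t\leq T$. Fix such a slot with context $x=x_i(t)\in p=p_i(t)$ and near-optimal action $k=a_i(t)$. Using Assumption~\ref{ass:lipschitz2}, the fact that the diameter of $p$ is at most $\sqrt{d}/m_T$, and the optimality of $k^*_i(p)$ at the center $x^*_p$, I would derive
$\mu^i_{k^*_i(x)}(x)\leq \mu^i_{k^*_i(p)}(x^*_p)+Ld^{\gamma/2}m_T^{-\gamma}\leq \underline{\mu}^i_{k^*_i(p),p}+2Ld^{\gamma/2}m_T^{-\gamma}$,
and in the opposite direction $\mu^i_k(x)\geq \overline{\mu}^i_{k,p}-Ld^{\gamma/2}m_T^{-\gamma}$. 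Combining with near-optimality $\underline{\mu}^i_{k^*_i(p),p}-\overline{\mu}^i_{k,p}\leq (4Ld^{\gamma/2}+6)t^{-z/2}$, and noting that the choice $\kappa=z/(2\gamma)$ gives $m_T^{-\gamma}\leq T^{-z/2}\leq t^{-z/2}$, this yields a one-step gap of at most $(7Ld^{\gamma/2}+6)t^{-z/2}$ when $k\in{\cal C}_i$.

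For the case $k=j\in{\cal M}_{-i}$ the actual expected reward is $\pi_{c_j}(x)-d^i_j$, where $c_j$ is the content CA $j$ returns. I would split the one-step regret as
$[\mu^i_{k^*_i(x)}(x)-\mu^i_j(x)] + [\pi_{c^*_j(x)}(x)-\mathrm{E}[\pi_{c_j}(x)]]$.
The first bracket is handled exactly as above. For the second bracket I further split on whether CA $j$'s choice is near-optimal ($c_j\notin{\cal C}^j_p(t)$) or suboptimal ($c_j\in{\cal C}^j_p(t)$). In the near-optimal subcase, the identical Lipschitz argument applied to $\pi$ in place of $\mu^i$ gives another $(7Ld^{\gamma/2}+6)t^{-z/2}$ term. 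In the suboptimal subcase, the realized per-slot regret is bounded crudely by $2$.

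Summing the near-optimal one-step terms over $t\leq T$ (at most two such contributions, one from CA $i$'s matching gap and one from CA $j$'s content gap) and invoking the series bound from Appendix~\ref{app:seriesbound} yields
$2(7Ld^{\gamma/2}+6)\,T^{1-z/2}/(1-z/2) = (14Ld^{\gamma/2}+12)\,T^{1-z/2}/(1-z/2)$.
For the suboptimal-content contribution I reuse the Chernoff--Hoeffding calculation already performed in the proof of Lemma~\ref{lemma:suboptimal1}, which shows $\mathrm{P}(\Xi^i_{j,p}(t))\leq 2|{\cal C}_j|/t^2$ when the main condition (\ref{eqn:maincondition}) holds; hence
$\mathrm{E}[X^i_{j,p}(T)]\leq 2|{\cal C}_j|\beta_2\leq 2C_{\max}\beta_2$.
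Multiplying by the realized-regret factor $2$ contributes $4C_{\max}\beta_2$, completing the claimed bound.

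The main obstacle is bookkeeping rather than any new concentration argument: I must be careful to apply the $t^{-2}$ tail (from Lemma~\ref{lemma:suboptimal1}'s proof) in terms of \emph{actual} time $t$ rather than per-hypercube sample index, so that the constant term does not pick up a spurious $|{\cal P}_T|$ factor, and I must make sure that in the decomposition for $k=j\in{\cal M}_{-i}$ the near-optimal content case and suboptimal content case cover all possibilities (including both exploration-induced and exploitation-induced choices of $c_j$ by $\textrm{DISCOM}_{\textrm{coop}}$). Once the decomposition is set up, the rest follows by the same Lipschitz-plus-gap calculation used for the matching-action level and a direct use of the already-established concentration bound.
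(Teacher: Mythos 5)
Your proposal is correct and follows essentially the same route as the paper's proof: the same decomposition into a matching-action gap of at most $(7Ld^{\gamma/2}+6)t^{-z/2}$, an additional identical content-gap term when a near-optimal CA $j$ is called (giving the factor $2$ and hence $14Ld^{\gamma/2}+12$), and a suboptimal-content event of probability $O(C_{\max}t^{-2})$ contributing $4C_{\max}\beta_2$, all summed via the divergent-series bound. No substantive differences to report.
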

\begin{proof}
At any time $t$, for any $k \in  {\cal K}_i -  {\cal L}^i_p(t)$ and $x \in p$, we have 
\begin{align}
\mu^i_{ k^*_i(x) }(x) - \mu^i_k(x) \leq (7 L d^{\gamma/2} + 6) t^{-z/2} .      \notag
\end{align}
Similarly, for any $j \in {\cal M}$, $c \in {\cal C}_j -  {\cal C}^j_p(t)$ and $x \in p$, we have 
\begin{align}
\pi_{ c^*_j(x) }(x) - \pi_c(x) \leq (7 L d^{\gamma/2} + 6) t^{-z/2}.      \notag
\end{align}

Due to the above inequalities, if a near optimal action in ${\cal C}_i \cap  ({\cal K}_i -  {\cal L}^i_p(t))$ is chosen by CA $i$ at time $t$, the contribution to the regret is at most 
$(7 L d^{\gamma/2} + 6) t^{-z/2}$.
If a near optimal CA $j \in {\cal M}_{-i} \cap  ({\cal K}_i -  {\cal L}^i_p(t) )$ is called by CA $i$ at time $t$, and if CA $j$ selects one of its near optimal contents in ${\cal C}_j - {\cal C}^j_p(t)$, then the contribution to the regret is at most $2 (7 L d^{\gamma/2} + 6) t^{-z/2}$.
Moreover since we are in an exploitation step, the near-optimal CA $j$ that is chosen can choose one of its suboptimal contents in ${\cal C}^j_p(t)$ with probability at most $2 C_{\max} t^{-2}$, which will result in an expected regret of at most $4 C_{\max} t^{-2}$. 

Therefore, the total regret due to near optimal choices of CA $i$ by time $T$ is upper bounded by 
\begin{align*}
& (14 L d^{\gamma/2} + 12) \sum_{t=1}^T t^{-z/2} + 4 C_{\max} \sum_{t=1}^T t^{-2} \\
& \leq \frac{ (14 L d^{\gamma/2} + 12) }{1-z/2} T^{1-z/2} + 4 C_{\max} \beta_2 .
\end{align*}
by using the result in Appendix \ref{app:seriesbound}. 
\end{proof}

Next, we give the proof ot Theorem 1 by combining the results of the above lemmas. 

%\rev{From Lemma \ref{lemma:nearoptimal}, we see that the regret due to near optimal matching actions depends exponentially on $\theta$ which is related to negative of $\kappa$ and $z$. Therefore $\kappa$ and $z$ should be chosen as large as possible to minimize the regret due to near optimal arms.} 

\vspace{-0.1in}
\subsection{Proof of Theorem 1} 
The highest orders of regret that come from Lemmas 1, 2 and 3 are $\tilde{O}(T^{\kappa d + z})$, $O(T^{1- z/2})$, $O(T^{1-z/2})$. We need to optimize them with respect to the constraint
Regret is minimized when $\kappa d + z = 1 - z/2$, which is attained by $z = 2\gamma/(3\gamma+d)$. 
Result follows from summing the bounds in Lemmas \ref{lemma:explorations}, \ref{lemma:suboptimal1} and \ref{lemma:nearoptimal}.

 \section{Proof of Corollary \ref{cor:confidence}} \label{app:confidence}
 
From the proof of Lemma 2, for $z=2\gamma/(3\gamma+d)$, we have 
\begin{align}
\mathrm{P} \left( {\cal V}^i_{k}(t), {\cal W}^i(t)  \right) 
\leq 2 t^{-2} 
+ 2 M C_{\max} \beta_2  t^{-\gamma/(3\gamma+d)}       \notag
\end{align}
for $k \in {\cal L}^i_{p_i(t)}(t)$.
This implies that 
\begin{align}
& \mathrm{P} \left( a_i(t) \in  {\cal L}^i_{p_i(t)}(t) , {\cal W}^i(t)  \right) \notag \\
& \leq \hspace{-0.2in} \sum_{k \in {\cal L}^i_{p_i(t)}(t) } \mathrm{P} \left( {\cal V}^i_{k}(t), {\cal W}^i(t)  \right)  \notag \\
& \leq \frac{2 |{\cal K}_i|}{t^2} + \frac{2 |{\cal K}_i| M C_{\max} \beta_2}{t^{\gamma/(3\gamma+d)}}. \notag
\end{align} 

The difference between the expected reward of an action within a hypercube from its expected reward at the center of the hypercube is at most $L d^{\gamma/2}/(m_T)^\gamma$. Since $m_T = \lceil T^{1/(3\gamma+d)} \rceil$, $a_i(t) \in {\cal K}_i - {\cal L}^i_{p_i(t)}(t)$ implies that 
\begin{align}
\mu^i_{a_i(t)}(x_i(t)) & \geq \mu^i_{k^*_i(x_i(t))}(x_i(t))  - 
(6Ld^{\gamma/2} + 6) T^{-\gamma/(3\gamma+d)}   .   \notag
\end{align}

 \section{Proof of Theorem \ref{thm:nolabel}} \label{app:theorem2}

In order for time $t$ to be an exploitation slot for CA $i$ it is required that ${\cal M}^{\textrm{ut}}_{i,p_i(t)}(t) \cup {\cal M}^{\textrm{ue}}_{i,p_i(t)}(t) \cup {\cal C}^{\textrm{ue}}_{i,p_i(t)}(t)  = \emptyset$.
Since the counters of DISCOM are updated only when feedback is received, and since the control functions are the same as the ones that are used in the setting where feedback is always available, the regret due to suboptimal and near optimal matching actions by time $t$ with missing feedback will not be any greater than the regret due to suboptimal and near optimal matching actions for the case when the users always provide feedback. 
Therefore, the bounds given in Lemmas 2 and 3 will also hold for the case with missing feedback. Only the regret due to trainings and explorations increases, since more trainings and explorations are needed before the counters exceed the values of the control functions such that the relevance score estimates are accurate enough to exploit. 
Consider any $p \in {\cal P}_T$. From the definition of DISCOM, the number of exploration slots in which content $c \in {\cal C}_i$ is matched with CA $i$'s user and the user's feedback is observed is at most $\lceil T^{2\gamma/(3\gamma+d)}\rceil$. The number of training slots in which CA $i$ requested content from CA $j \in {\cal M}_{-i}$ and received the feedback about this content from its user is at most $\left\lceil C_{\max} T^{2\gamma/(3\gamma+d)} \log T \right\rceil$. The number of exploration slots in which CA $i$ selected CA $j \in {\cal M}_{-i}$ is at most $\left\lceil T^{2\gamma/(3\gamma+d)} \log T \right\rceil$. 

Let $\tau_{\textrm{exp}}(T)$ be the random variable which denotes the smallest time step for which for each $c \in {\cal C}_i$ there are $\lceil T^{2\gamma/(3\gamma+d)}\rceil$ feedback observations, for each $j \in {\cal M}_{-i}$ there are $\left\lceil C_{\max} T^{2\gamma/(3\gamma+d)} \log T \right\rceil$ feedback observations for the trainings and $\left\lceil T^{2\gamma/(3\gamma+d)} \log T \right\rceil$ feedback observations for the explorations. 
Then, $\mathrm{E} [\tau_{\textrm{exp}}(T)]$ is the expected number of training plus exploration slots by time $T$. Let $Y_{\textrm{exp}}(t)$ be the random variable which denotes the number of time slots in which the feedback is not provided by the users of CA $i$ till CA $i$ received $t$ feedbacks from its users. Let $A_i(T) = Z_i T^{2\gamma/(3\gamma+d)} \log T +  (|{\cal C}_i| + 2(M-1))$. We have
\begin{align*}
\mathrm{E}[\tau_{\textrm{exp}}(T)] = \mathrm{E} [Y_{\textrm{exp}}(A_i(T))] + A_i(T).
\end{align*}
$Y_{\textrm{exp}}(A_i(T))$ is a negative binomial random variable with probability of observing no feedback at any time $t$ equals to $1-p_r$. Therefore,
\begin{align}
\mathrm{E} [Y_{\textrm{exp}}(A_i(T))] = (1-p_r)A_i(T)/p_r.      \notag
\end{align}
Using this, we get
\begin{align*}
\mathrm{E} [\tau_{\textrm{exp}}(T)] = A_i(T)/p_r.
\end{align*}
The regret bound follows from substituting this into the proof of Theorem 1.

 \section{Proof of Theorem \ref{thm:adaptivemain2}} \label{app:theorem3}
 The basic idea is to choose $\tau_h$ in a way that the regret due to variation of relevance scores over time and the regret due to variation of estimated relevance scores due to the limited number of observations during each round is balanced. 
Majority of the steps of this proof is similar to the proof of Theorem 1 hence some of the steps are omitted. 

Consider a round $\eta$ of length $2\tau_h$. Denote the set of time slots in round $\eta$ by $[\eta]$. 
For any $c \in {\cal C}$ let
\begin{align}
\bar{\pi}_{c,p,\eta} &:= \sup_{x \in p, t \in [\eta]} \pi_{c,t}(x) ,  \notag \\
\underline{\pi}_{c,p,\eta} &:= \inf_{x \in p, t \in [\eta]} \pi_{c,t}(x) . \notag 
\end{align}
For any $k \in {\cal K}_i$ let
\begin{align}
 \bar{\mu}^i_{k,p,\eta} &:= \sup_{x \in p, t \in [\eta]} \mu^i_{k,t}(x) ,  \notag \\
 \underline{\mu}^i_{k,p,\eta} &:= \inf_{x \in p, t \in [\eta]} \mu^i_{k,t}(x) ,  \notag 
\end{align}
where $\mu^i_{k,t}(x)$ is defined as the time-varying version of $\mu^i_{k}(x)$ given in (1) under Assumption 2.
For CA $i$, the set of suboptimal matching actions is given as 
\begin{align}
{\cal L}^i_{p,\eta}(t) &:= \left\{ k \in {\cal K}_i :  \underline{\mu}^i_{k^*_i(p),p,\eta} - \bar{\mu}^i_{k,p,\eta} \right. \notag \\
&\left. > (4 L d^{\gamma/2} +6 ) ( t\textrm{ mod } \tau_h +1)^{-z/2} + \frac{4\tau_h}{T_s}  \right\} ,     \notag
\end{align}
where $k^*_i(p)$ is the matching action with the highest net reward for the context at the center of $p$ at the time slot in the middle of round $\eta$. 

Consider the regret due to explorations and trainings for $\textrm{DISCOM}_{\eta}$ incurred over times when it is in the active sub-phase (over $\tau_h$ time slots). 
Similar to the proof of Lemma 1 it can be shown that the regret due to trainings and explorations is 
\begin{align}
\mathrm{E} [R^e_i(\tau_h)] = \tilde{O} \left( \tau_h^{z+\kappa d}     \right)    .  \notag
\end{align}
Similar to the proof of Lemma 2, it can be shown that the regret due to suboptimal matching action selections is 
\begin{align}
\mathrm{E} [R^s_i(\tau_h)]  =   O \left( \tau_h^{1-z/2} \right)   \notag
\end{align}
when $\kappa = z/(2\gamma)$.
Since the definition of a sub-optimal matching action is different for dynamic user and content characteristics, the regret due to near optimal matching actions in ${\cal K}_i - {\cal L}^i_{p,\eta}(t)$ is different from Lemma 3. 
At time $t$ which is in round $\eta$, since a near optimal matching action's contribution to the one-step regret is at most 
\begin{align}
(8 L d^{\gamma/2} +12 ) (t \mod \tau_h +1)^{-z/2} + 4 \tau_h/T_s      \notag
\end{align}
summing over all time slots in a round $\eta$, we have
\begin{align}
\mathrm{E} [R^n_i(\tau_h)]  =   O \left( \tau_h^{1-z/2} \right) + O \left( \frac{\tau_h^2}{T_s} \right) .         \notag
\end{align}
Clearly we have $\mathrm{E} [R^s_i(\tau_h)] \leq \mathrm{E} [R^e_i(\tau_h)] $. 
Let $\tau_h = \lfloor T_s^\phi \rfloor$ for some $\phi>0$. Then we have
\begin{align}
\frac{ \mathrm{E} [R^e_i(\tau_h)]}{\tau_h} =  \tilde{O} \left( T_s^{\phi z+\phi \kappa d - \phi}     \right)  ,     \notag
\end{align}
and
\begin{align}
\frac{ \mathrm{E} [R^n_i(\tau_h)]}{\tau_h}  =   O \left( T_s^{-\phi z/2} \right) + O \left( T_s^{\phi-1} \right) .     \notag
\end{align}
The sum $(\mathrm{E} [R^e_i(\tau_h)] + \mathrm{E} [R^s_i(\tau_h)]  + \mathrm{E} [R^n_i(\tau_h)])/\tau_h$ is minimized by setting $z = 2\gamma/(3\gamma+d)$ and $\phi = 1/(1+z/2)$. 
Hence, $\tau_h = \lfloor T_s^{\frac{3\gamma+d}{4\gamma+d}} \rfloor$ the order of the time averaged regret is equal to 
$\tilde{O}\left( T_s^{\frac{-\gamma}{4\gamma+d}}\right)$. 
      
}

%\vspace{-0.2in}
\bibliographystyle{IEEE}
\bibliography{OSA}

% Generated by IEEEtran.bst, version: 1.12 (2007/01/11)
\begin{thebibliography}{10}
\providecommand{\url}[1]{#1}
\csname url@samestyle\endcsname
\providecommand{\newblock}{\relax}
\providecommand{\bibinfo}[2]{#2}
\providecommand{\BIBentrySTDinterwordspacing}{\spaceskip=0pt\relax}
\providecommand{\BIBentryALTinterwordstretchfactor}{4}
\providecommand{\BIBentryALTinterwordspacing}{\spaceskip=\fontdimen2\font plus
\BIBentryALTinterwordstretchfactor\fontdimen3\font minus
  \fontdimen4\font\relax}
\providecommand{\BIBforeignlanguage}[2]{{%
\expandafter\ifx\csname l@#1\endcsname\relax
\typeout{** WARNING: IEEEtran.bst: No hyphenation pattern has been}%
\typeout{** loaded for the language `#1'. Using the pattern for}%
\typeout{** the default language instead.}%
\else
\language=\csname l@#1\endcsname
\fi
#2}}
\providecommand{\BIBdecl}{\relax}
\BIBdecl

\bibitem{tekinTMM2015}
C.~Tekin and M.~van~der Schaar, ``Contextual online learning for multimedia
  content aggregation,'' \emph{to appear in IEEE Trans. Multimedia}, 2015.

\bibitem{ren2012pricing}
S.~Ren and M.~van~der Schaar, ``{Pricing and investment for online TV content
  platforms},'' \emph{IEEE Trans. Multimedia}, vol.~14, no.~6, pp. 1566--1578,
  2012.

\bibitem{song2012advanced}
S.~Song, H.~Moustafa, and H.~Afifi, ``{Advanced IPTV services personalization
  through context-aware content recommendation},'' \emph{IEEE Trans.
  Multimedia}, vol.~14, no.~6, pp. 1528--1537, 2012.

\bibitem{xu2008novel}
C.~Xu, J.~Wang, H.~Lu, and Y.~Zhang, ``A novel framework for semantic
  annotation and personalized retrieval of sports video,'' \emph{IEEE Trans.
  Multimedia}, vol.~10, no.~3, pp. 421--436, 2008.

\bibitem{li2010contextual}
L.~Li, W.~Chu, J.~Langford, and R.~E. Schapire, ``A contextual-bandit approach
  to personalized news article recommendation,'' in \emph{Proc. 19th Int. Conf.
  World Wide Web}, 2010, pp. 661--670.

\bibitem{saxena2008analyzing}
M.~Saxena, U.~Sharan, and S.~Fahmy, ``Analyzing video services in web 2.0: a
  global perspective,'' in \emph{Proc. 18th Int. Workshop Network and Operating
  Systems Support for Digital Audio and Video}, 2008, pp. 39--44.

\bibitem{mohan1999adapting}
R.~Mohan, J.~R. Smith, and C.-S. Li, ``Adapting multimedia internet content for
  universal access,'' \emph{IEEE Trans. Multimedia}, vol.~1, no.~1, pp.
  104--114, 1999.

\bibitem{schranz2005building}
M.~Schranz, S.~Dustdar, and C.~Platzer, ``{Building an integrated pan-European
  news distribution network},'' in \emph{Collaborative Networks and Their
  Breeding Environments}.\hskip 1em plus 0.5em minus 0.4em\relax Springer US,
  2005, pp. 587--596.

\bibitem{FStream}
A.~Boutet, K.~Kloudas, and A.-M. Kermarrec, ``{FStream: a decentralized and
  social music streamer},'' in \emph{Proc. Int. Conf. Networked Systems
  (NETYS)}, pp. 253--257.

\bibitem{slivkins2009contextual}
A.~Slivkins, ``Contextual bandits with similarity information,'' in \emph{Proc.
  24th Annual Conf. Learning Theory (COLT)}, vol.~19, June 2011, pp. 679--702.

\bibitem{lu2010contextual}
T.~Lu, D.~P{\'a}l, and M.~P{\'a}l, ``Contextual multi-armed bandits,'' in
  \emph{Proc. 13th Int. Conf. Artificial Intelligence and Statistics
  (AISTATS)}, vol.~9, May 2010, pp. 485--492.

\bibitem{deshpande2012linear}
Y.~Deshpande and A.~Montanari, ``Linear bandits in high dimension and
  recommendation systems,'' in \emph{Proc. 50th Annual Allerton Conf.
  Communication, Control and Computing}, 2012, pp. 1750--1754.

\bibitem{kohli2013fast}
P.~Kohli, M.~Salek, and G.~Stoddard, ``A fast bandit algorithm for
  recommendations to users with heterogeneous tastes,'' in \emph{Proc. 27th
  Conf. Artificial Intelligence (AAAI)}, July 2013, pp. 1135--1141.

\bibitem{sahoo2012hidden}
N.~Sahoo, P.~V. Singh, and T.~Mukhopadhyay, ``{A hidden Markov model for
  collaborative filtering},'' \emph{MIS Quarterly}, vol.~36, no.~4, pp.
  1329--1356, 2012.

\bibitem{linden2003amazon}
G.~Linden, B.~Smith, and J.~York, ``Amazon.com recommendations: Item-to-item
  collaborative filtering,'' \emph{Internet Comput.}, vol.~7, no.~1, pp.
  76--80, 2003.

\bibitem{miyahara2000collaborative}
K.~Miyahara and M.~J. Pazzani, ``{Collaborative filtering with the simple
  Bayesian classifier},'' in \emph{PRICAI 2000 Topics in Artificial
  Intelligence}.\hskip 1em plus 0.5em minus 0.4em\relax Springer, 2000, pp.
  679--689.

\bibitem{roy2012empowering}
S.~D. Roy, T.~Mei, W.~Zeng, and S.~Li, ``Empowering cross-domain internet media
  with real-time topic learning from social streams,'' in \emph{Proc. IEEE Int.
  Conf. Multimedia and Expo (ICME)}, 2012, pp. 49--54.

\bibitem{roy2013towards}
S.~Roy, T.~Mei, W.~Zeng, and S.~Li, ``Towards cross-domain learning for social
  video popularity prediction,'' \emph{IEEE Trans. Multimedia}, vol.~15, no.~6,
  pp. 1255--1267, Oct 2013.

\bibitem{bouneffouf2012hybrid}
D.~Bouneffouf, A.~Bouzeghoub, and A.~L. Gan{\c{c}}arski,
  ``Hybrid-$\varepsilon$-greedy for mobile context-aware recommender system,''
  in \emph{Advances in Knowledge Discovery and Data Mining}.\hskip 1em plus
  0.5em minus 0.4em\relax Springer, 2012, pp. 468--479.

\bibitem{hazan2007online}
E.~Hazan and N.~Megiddo, ``Online learning with prior knowledge,'' in
  \emph{Learning Theory}.\hskip 1em plus 0.5em minus 0.4em\relax Springer,
  2007, pp. 499--513.

\bibitem{dudik2011efficient}
M.~Dudik, D.~Hsu, S.~Kale, N.~Karampatziakis, J.~Langford, L.~Reyzin, and
  T.~Zhang, ``Efficient optimal learning for contextual bandits,'' \emph{arXiv
  preprint arXiv:1106.2369}, 2011.

\bibitem{chu2011contextual}
W.~Chu, L.~Li, L.~Reyzin, and R.~E. Schapire, ``Contextual bandits with linear
  payoff functions,'' in \emph{Proc. 14th Int. Conf. Artificial Intelligence
  and Statistics (AISTATS)}, vol.~15, April 2011, pp. 208--214.

\bibitem{auer}
P.~Auer, N.~Cesa-Bianchi, and P.~Fischer, ``Finite-time analysis of the
  multiarmed bandit problem,'' \emph{Machine Learning}, vol.~47, p. 235–256,
  2002.

\bibitem{naaman2012social}
M.~Naaman, ``Social multimedia: highlighting opportunities for search and
  mining of multimedia data in social media applications,'' \emph{Multimedia
  Tools and Applications}, vol.~56, no.~1, pp. 9--34, 2012.

\bibitem{minku2010impact}
L.~L. Minku, A.~P. White, and X.~Yao, ``The impact of diversity on online
  ensemble learning in the presence of concept drift,'' \emph{IEEE Trans.
  Knowl. Data Eng.}, vol.~22, no.~5, pp. 730--742, 2010.

\bibitem{narasimhamurthy2007framework}
A.~Narasimhamurthy and L.~I. Kuncheva, ``A framework for generating data to
  simulate changing environments,'' in \emph{Proc. 25th IASTED Int.
  Multi-Conf.: Artificial Intelligence and Applications}, February 2007, pp.
  384--389.

\bibitem{xu2013efficient}
M.~van~der Schaar, J.~Xu, and W.~Zame, ``Efficient online exchange via fiat
  money,'' \emph{Economic Theory}, vol.~54, no.~2, pp. 211--248, 2013.

\bibitem{nakamoto2008bitcoin}
S.~Nakamoto, ``Bitcoin: A peer-to-peer electronic cash system,''
  \emph{Consulted}, vol.~1, no. 2012, p.~28, 2008.

\bibitem{langford2007epoch}
J.~Langford and T.~Zhang, ``The epoch-greedy algorithm for contextual
  multi-armed bandits,'' in \emph{Proc. 20th Int. Conf. Neural Information
  Processing Systems (NIPS)}, vol.~20, 2007, pp. 1096--1103.

\bibitem{gao2007appropriate}
J.~Gao, W.~Fan, and J.~Han, ``On appropriate assumptions to mine data streams:
  Analysis and practice,'' in \emph{Proc. 7th IEEE Int. Conf. Data Mining
  (ICDM)}, 2007, pp. 143--152.

\bibitem{masud2009integrating}
M.~M. Masud, J.~Gao, L.~Khan, J.~Han, and B.~Thuraisingham, ``Integrating novel
  class detection with classification for concept-drifting data streams,'' in
  \emph{Machine Learning and Knowledge Discovery in Databases}.\hskip 1em plus
  0.5em minus 0.4em\relax Springer, 2009, pp. 79--94.

\bibitem{chlebus2009approximate}
E.~Chlebus, ``An approximate formula for a partial sum of the divergent
  p-series,'' \emph{Applied Mathematics Letters}, vol.~22, no.~5, pp. 732--737,
  2009.

\end{thebibliography}

\begin{IEEEbiography}
[{\includegraphics[width=1in, height=1.25in,clip,keepaspectratio]{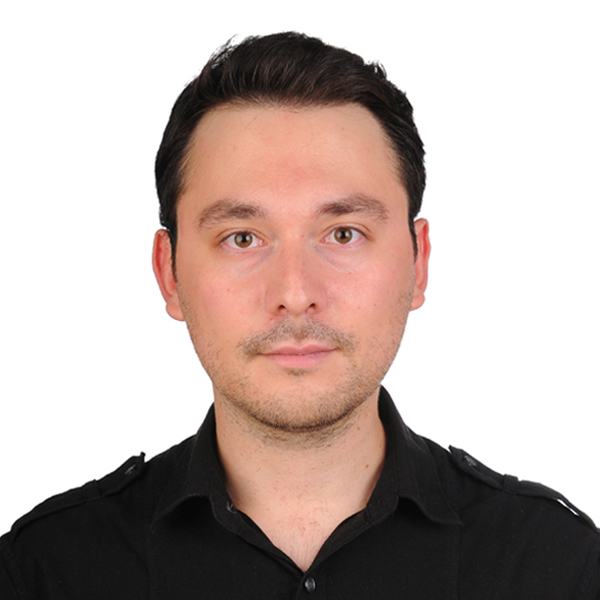}}]{Cem Tekin}
Cem Tekin is an Assistant Professor in Electrical and Electronics Engineering Department at Bilkent University, Turkey. From February 2013 to January 2015, he was a Postdoctoral Scholar at University of California, Los Angeles. He received the B.Sc. degree in electrical and electronics engineering from the Middle East Technical University, Ankara, Turkey, in 2008, the M.S.E. degree in electrical engineering: systems, M.S. degree in mathematics, Ph.D. degree in electrical engineering: systems from the University of Michigan, Ann Arbor, in 2010, 2011 and 2013, respectively. His research interests include machine learning, multi-armed bandit problems, data mining, multi-agent systems and game theory. He received the University of Michigan Electrical Engineering Departmental Fellowship in 2008, and the Fred W. Ellersick award for the best paper in MILCOM 2009.
\end{IEEEbiography}

\begin{IEEEbiography}
[{\includegraphics[width=1in, height=1.25in,clip,keepaspectratio]{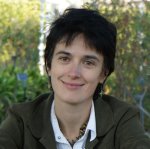}}]{Mihaela van der Schaar}
Mihaela van der Schaar is Chancellor Professor of Electrical Engineering at University of California, Los
Angeles. Her research interests include network economics and game theory, online learning, dynamic multi-user networking and
communication, multimedia processing and systems, real-time stream mining. She is an IEEE Fellow, a Distinguished Lecturer of
the Communications Society for 2011-2012, the Editor in Chief of IEEE Transactions on Multimedia and a member of the Editorial
Board of the IEEE Journal on Selected Topics in Signal Processing. She received an NSF CAREER Award (2004), the Best Paper
Award from IEEE Transactions on Circuits and Systems for Video Technology (2005), the Okawa Foundation Award (2006), the
IBM Faculty Award (2005, 2007, 2008), the Most Cited Paper Award from EURASIP: Image Communications Journal (2006), the
Gamenets Conference Best Paper Award (2011) and the 2011 IEEE Circuits and Systems Society Darlington Award Best Paper Award.
She received three ISO awards for her contributions to the MPEG video compression and streaming international standardization
activities, and holds 33 granted US patents.
\end{IEEEbiography}

\end{document}